\newtheorem{definition}{Definition}
\newtheorem{theorem}[definition]{Theorem}
\newtheorem{lemma}[definition]{Lemma}
\newcommand{\ctilde}{\tilde{c}}
\newcommand{\ftilde}{\tilde{f}}
\newcommand{\gtilde}{\tilde{g}}
\newcommand{\htilde}{\tilde{h}}
\newcommand{\ntilde}{\tilde{n}}
\newcommand{\bbN}{\mathbb{N}}
\newcommand{\poly}{\mathrm{poly}}
\newcommand{\del}{\partial}
\newcommand{\cF}{\mathcal{F}}
\newcommand{\cP}{\mathcal{P}}
\newcommand{\cW}{\mathcal{W}}
\newcommand{\cY}{\mathcal{Y}}
\newcommand{\ThetaT}{\widetilde{\Theta}}
\newcommand{\BreakPU}{\mathrm{BreakPU}}
\newcommand{\BreakOW}{\mathrm{BreakOW}}
\newcommand{\Breaker}{\mathrm{Breaker}}
\newcommand{\Query}{\mathrm{Query}}
\newcommand{\QueryX}{\mathrm{QueryX}}
\newcommand{\QueryY}{\mathrm{QueryY}}
\newcommand{\SafeToAnswer}{\mathrm{SafeToAnswer}}
\newcommand{\BuildPolynomial}{\mathrm{BuildPolynomial}}
\newcommand{\Pd}{\mathsf{P}}
\DeclareMathOperator*{\E}{E}
\newcommand{\centerComment}[1]{\hfill/\!\!/\begin{minipage}[t]{0.5\linewidth}{#1}\end{minipage}}
\title{Constructing a Pseudorandom Generator
Requires an Almost Linear Number of Calls}
\author{Thomas Holenstein\thanks{ETH Zurich, \texttt{thomas.holenstein@inf.ethz.ch}. This work was supported
by the Swiss National Science Foundation Grant No.~200021-132508} \and Makrand Sinha\thanks{University of Washington, \texttt{makrand@cs.washington.edu}. Parts of this work was done while the author was a student at ETH Zurich. Work supported by the Excellence Scholarship and Opportunity Programme of the ETH Zurich Foundation.}}
\begin{document}

\maketitle

\begin{abstract}
We show that a black-box construction of a pseudorandom generator 
from a one-way function needs to make $\Omega(\frac{n}{\log(n)})$
calls to the underlying one-way function.
The bound even holds if the one-way function
is guaranteed to be regular.  
In this case it
 matches the best known construction due to Goldreich,
Krawczyk, and Luby (SIAM J.~Comp.~22, 1993), which uses 
$O(\frac{n}{\log(n)})$ calls.
\end{abstract}

\section{Introduction}
\subsection{One-way functions and pseudorandom generators}
Starting with the seminal works by Yao \cite{Yao82}, 
and Blum and Micali \cite{BluMic84}, researchers have 
studied the relationship between various cryptographic
primitives, such as one-way functions, pseudorandom generators,
pseudorandom functions, and so on, producing a wide variety of results.
One particular task which was achieved was the construction of pseudorandom
generators from one-way functions, a task which has a history on
its own.
First, it was shown that one-way permutations imply
pseudorandom generators \cite{Levin87,GolLev89}.
Later, the result was extended to regular one-way functions
\cite{GoKrLu93}, and finally it was shown that arbitrary
one-way functions imply pseudorandom generators \cite{HILL99}.

Unfortunately, the constructions given in \cite{GoKrLu93} 
and \cite{HILL99} are relatively inefficient (even though 
they run in polynomial time).
Suppose we instantiate the construction
given in \cite{GoKrLu93} with a regular one-way
functions taking $n$ bits to $n$ bits.
Then, it yields a 
pseudorandom generator whose input is of 
length\footnote{The
$\ThetaT$-notation ignores poly-logarithmic factors.} 
$\ThetaT(n^3)$
and calls the underlying one-way function $\ThetaT(n)$ times.
The parameters in \cite{HILL99} are worse: if we instantiate
the construction with an (arbitrary) one-way function
taking $n$ bits to $n$ bits, we obtain a pseudorandom generator
which needs $\ThetaT(n^8)$ bits of input, and 
which does 
around\footnote{We counted $\ThetaT(n^{12})$ calls,
but since \cite{HILL99} is not completely explicit
about the construction, we make no guarantee that other
interpretations are impossible\ldots}
$\ThetaT(n^{12})$ calls to $f$.
The parameters of the security reduction are also very weak.

Naturally, many papers improve the efficiency of these results:
\cite{HaHaRe06a, Holens06b} show that the result of
\cite{HILL99} can be achieved with a more efficient reduction
in case one assumes that the underlying one-way function has 
stronger security than the usual polynomial time security.
\cite{HaHaRe06b} reduces the input length of the pseudorandom
generator in \cite{GoKrLu93} to $\Theta(n\log(n))$.
Also it reduces the input
length in \cite{HILL99} by a factor of $\ThetaT(n)$, and the number
of calls by a factor of $\ThetaT(n^3)$.
Most impressive, \cite{HaReVa10} reduces the 
seed length to $\ThetaT(n^4)$ and the number of calls to 
$\ThetaT(n^3)$, for the construction of a pseudorandom generator
from an arbitrary one-way function.
Finally, \cite{VadZen12} reduce the seed length in this
last construction to $\ThetaT(n^3)$.

We remark that the main focus on the efficiency has been on reducing
the seed length.
This is reasonable, as (private) randomness is probably the most
expensive resource.\footnote{We would like to mention that in part
this focus also seems to come
from the (somewhat arbitrary) fact
that people usually set the security parameter equal to the input length.
For example, suppose we have a one-way function from $n$ to $n$ bits
with security $2^{n/100}$ (meaning that in time $2^{n/100}$ one can
invert $f$ only with probability $2^{-n/100}$).
If a construction now yields a pseudorandom generator with $m = n^2$
bits of input, the security can at most be $2^{\sqrt{m}/100}$.
At this point it becomes tempting to argue that because
$m \mapsto 2^{\sqrt{m}/100}$ is 
a much slower growing function than $n \mapsto 2^{n/100}$,
it is crucial to make the input length as small as possible.
However, if one introduces a security parameter $k$, 
both primitives could have security roughly $2^{k}$.
Arguing over the function which maps the input length to the
security is not a priori a good idea.} 
Nevertheless, one would like both the seed length and the number
of calls to be as small as possible.
\subsection{Black-box separations}
After \cite{BluMic84,Yao82}, it was natural to try and 
prove that one-way functions do imply seemingly
stronger primitives, such as key agreement.
However, all attempts in proving this failed, and so 
researchers probably wondered (for a short moment) whether in fact
one-way functions \emph{do not} imply 
key agreement.
A moment of thought reveals that this is unlikely to be true: 
key-agreement schemes seem to exist, and so in fact we believe
that---consider the following as a purely logical 
statement---one-way functions \emph{do} imply key-agreement.

A way out of the dilemma was found by Impagliazzo and 
Rudich in a break through work \cite{ImpRud89}. 
They observed that the proofs of 
most results such as ``one-way functions
imply pseudorandom generators'' are, in fact, much stronger.
In particular, the main technical part of \cite{HILL99} 
shows that there exists 
oracle algorithms $g^{(f)}$ and $A^{(\Breaker,f)}$ with the following
two properties:
\begin{itemize}
\item For any oracle, $g^{(f)}$ is an expanding function.
\item For any two oracles $(\Breaker,f)$, 
if $\Breaker$ distinguishes the output
of $g^{(f)}$ from a random string, then $A^{(\Breaker,f)}$ inverts $f$.
\end{itemize}
Impagliazzo and Rudich then showed that the analogous statement
for the implication ``one-way functions imply key-agreement''
is simply wrong, giving the first ``black-box separation''.

After the paper of Impagliazzo and Rudich, many
more black box separations have been given (too many to list them all).
We use techniques from several papers:
in order to prove that there is no black-box construction
of collision resistant hash-functions from one-way permutations,
Simon \cite{Simon98} introduced the method of giving specific oracles
which break the primitive to be constructed.
Such oracles (usually called $\Breaker$) are now widely used, including
in this paper.
Gennaro et al.~\cite{GGKT05} developed an ``encoding paradigma'', a
technique which allows to give very strong black-box separations, 
even excluding non-uniform security reductions.
This encoding paradigma has first 
been combined with a Breaker oracle in \cite{HHRS07}.
In \cite{HaiHol09} a slightly different extension of \cite{Simon98} 
is used: their technique analyzes how $\Breaker$ behaves in case
one modifies the given one-way permutation on a single 
randomly chosen input.
We also use this method.

Some black box separation results are (as we are) concerned
with the efficiency of constructing pseudorandom generators.
Among other things, 
Gennaro et al.~\cite{GGKT05} show that in order to get a pseudorandom
generator which expands the input by $t$ bits, a black-box
construction needs to do at least $\Omega(t/\log(n))$ 
calls to the underlying one-way function (this matches 
the combination of Goldreich-Levin \cite{GolLev89} with
the extension given in 
Goldreich-Goldwasser-Micali \cite{GoGoMi86}).
In \cite{Viola05}, Viola shows that in order for
a black-box construction to expand the input by $t$
bits, it needs to do at least one of 
(a) adaptive queries, (b) sequential computation, or
(c) use $\Omega(t\cdot n\log(m))$ bits of input, when
the underlying one-way function maps $n$ to $m$ bits.
This result has been somewhat strengthened by Lu \cite{Lu06}.
The papers \cite{BrJuPa11,MilVio11} both study how much
the stretch of a given generator can be enlarged, as long
as the queries to the given generator are non-adaptive.

\subsection{Contributions of this paper}
A natural question to ask is: ``what is the minimum seed length
and the minimal number of calls needed
for  a black-box construction of a pseudorandom generator
from a one-way function?''

To the best of our knowledge, it is 
consistent with current knowledge that a 
construction has seed length $\Theta(n)$ and does a single call
to the underlying one-way function (however, recall that
\cite{GGKT05} show that in order to get a stretch of $t$
bits, at least $\Omega(t/\log(n))$ calls need to be made).

The reason why no stronger lower bounds are known
seems to be that from a one-way \emph{permutation}
it is possible to get a pseudorandom generator very efficiently
by the Goldreich-Levin theorem \cite{GolLev89}: 
the input length only doubles, and the construction 
calls the underlying one-way permutation once.
Also, almost 
all black-box separation results which prove
that a primitive is unachievable from one-way functions also 
apply to one-way permutations.
The only exceptions to this rule we are 
aware of is given by \cite{Rudich88,KaSaSm11}
where it is shown that one-way permutations cannot be 
obtained from one-way functions, and \cite{MatMat11}, where this
result is strengthened. 
However, both these results use a technique which does not
seem to apply if one wants to give lower bounds
on the efficiency of the construction of pseudorandom
generators.\footnote{Both proofs use
the fact that a one-way permutation satisfies
$g(v) \neq g(v')$ for any $v \neq v'$ crucially.}

One should note that a very efficient construction
of a pseudorandom generator from a one-way function
might have implications for practice: it is not inconceivable
that in this case, practical 
symmetric encryption could be based on a one-way function,
at least in some special cases where one would like a very
high guarantee on the security.

We show in this paper than any construction must make
at least $\Omega(\frac{n}{\log(n)})$ calls to the underlying one-way function.
While this bound is interesting even for arbitrary one-way functions,
it turns out that our proof works with some additional work 
even if the one-way function is guaranteed to be regular.
In this case, the number of calls matches the parameters in \cite{GoKrLu93}
(and recall that the length of the seed has been reduced to $O(n \log(n))$
in \cite{HaHaRe06b}, with the same number of calls to the one-way function).

In our theorem, we exclude a \emph{fully black-box reduction}, 
using the terminology of 
\cite{ReTrVa04}. 
In fact, we give three results.

In our first result, we assume that the construction $g(\cdot)$
when used with security parameter $k$ \emph{only calls the 
underlying one-way function with the same security parameter $k$}.
We believe that this is a natural assumption, as all constructions
we know have this property, and the underlying input length
is not immediately defined if $g$ makes calls to $f(k,\cdot)$ for
various values of $k$.
The result is stated in Theorem~\ref{thm:main}.

Next, we study black-box constructions with the same restriction, 
but where the security reduction is non-uniform. 
These can be handled with the technique from \cite{GGKT05},
and in our case it yields Theorem~\ref{thm:mainB}.

Finally, we remove the restriction that the 
construction calls the underlying function with a fixed security
parameter.
This gives Theorem~\ref{thm:mainC}.
However, one needs to be careful somewhat, since in this
case, the construction calls the given one-way function on 
a number of input lengths $n$, and thus already the 
expression $\Omega(n/\log(n))$ in our lower bound needs to be specified
more exactly.
Our theorem uses \emph{the shortest input length} of any call to $f$ (i.e.,
our lower bound is weakest possible in this case).
Also, we remark that this last bound does not exclude the construction
of ``infinitely often pseudorandom generators'', which are secure
only for infinitely many security parameters.

\section{The Main Theorem}

We think of a one-way function as a family $\{f_{k}\}_{k \geq 0}$,
indexed by some security parameter $k$.
The function $f_k$ then takes as input a bitstring of length $n(k)$,
and outputs a bitstring of length $n'(k)$.
Usually, the case $n(k) = n'(k) = k$ is considered in the literature.
We want to distinguish $n$ and $k$ here, as we hope this makes
the discussion clearer.
However, we will still require that $n$ is polynomially related 
to $k$.\footnote{The requirement that $n(k) \leq k^c$ is implicit
in the definition of one-way functions, as 
otherwise the one-way function cannot be evaluated in
time polynomial in $k$.
The requirement $n(k) \geq k^c$ is different, however.
For example, suppose a family $\{f_k\}_{k\geq0}$ can be evaluated in
time $k^{O(1)}$ and has $n(k) = \log^2(n)$.
Also, suppose that $f_k$ is a one-way function in the sense that
in time $k^{O(1)}$ it cannot be inverted with 
probability $k^{-O(1)} \leq 2^{-\sqrt{n(k)}}$.
If $f$ is additionally regular, fewer than $\Omega(n/\log(n))$ calls
are sufficient to construct a pseudorandom generator.}

\begin{definition}
A function $n(k) : \bbN \to \bbN$ is a \emph{length function}
if there exists $c \in \bbN$ such that
$k^{1/c} \leq n(k) \leq k^c$, $n(k)$ can be computed in 
time $k^c$, and $n(k+1) \geq n(k)$ for any~$k$.
\end{definition}

In general, the length $n(k)$ of the input of a one-way function differs
from the length $n'(k)$ of the output.
In case $n(k) > n'(k)$, it is shown in \cite{DeHaRe08} how to obtain a 
``public-coin collection of one-way functions'', where both the input
and the output length are $n'(k)$.
Such a collection can be used with known constructions to get a pseudorandom
generator, and the number of calls will only depend on $n'(k)$.
In case $n(k) < n'(k)$, it is easy to see that one can also get
a``public-coin collection of one-way functions'' with input and output
length $2n(k)$.

Therefore, we can restrict ourselves to the case $n(k) = n'(k)$, and 
see that otherwise, the parameter $\min(n(k),n'(k))$ is the quantity of
relevance to us.

\begin{definition}\label{def:owfprg}
A \emph{one-way function} $f = \{f_{k}\}_{k \geq 0}$ is a family
of functions $f_k: \{0,1\}^{n(k)} \to \{0,1\}^{n(k)}$, computable in
time $\poly(k)$, such that for any algorithm $A$ running in 
time $\poly(k)$ the function mapping $k$ to 
\begin{align}
\Pr_{x,A}[\text{$A(k,f_k(x))$ inverts $f_k$}]
\end{align}
is negligible 
in~$k$.\footnote{We say that ``$A(f_k(x))$ inverts $f_k$'' 
if $f_k(A(f_k(x))) = f_k(x)$, and write $A$ below 
the symbol $\Pr$ to indicate that the probability
is also over any randomness $A$ may use.
We also assume it is clear that $x$ is picked from $\{0,1\}^{n(k)}$.}

A \emph{pseudorandom generator} $g = \{g_{k}\}_{k\geq 0}$ is a 
family of polynomial time computable functions $g_k: \{0,1\}^{m(k)}
\to \{0,1\}^{m'(k)}$ with $m'(k) > m(k)$ and such that any algorithm $B$
running in time $\poly(k)$
\begin{align}
\Pr_{v,B}[B(k,g_k(v))=1] - \Pr_{w,B}[B(k,w)=1]
\end{align}
is negligible in $k$.
\end{definition}

We next define \emph{fully black-box constructions}, but
only for the special case of importance to us.
Note that we assume that the underlying one way function is regular 
(a function family $\{f_{k}\}_{k\geq0}$ is regular if
$|\{x' : f_{k}(x') = f_{k}(x)\}|$ only depends on $k$ and not on $x$).
\begin{definition}\label{def:fbb}
A \emph{fully black-box construction of a pseudorandom generator 
from a regular one-way function} consists of two oracle algorithms $(g,A)$.
The \emph{construction} $g^{(f)}$ is a polynomial time 
oracle algorithm which provides, for each length function $n(k)$
and each $\ell$, 
a function $g_\ell: \{0,1\}^{m(\ell)}
\to \{0,1\}^{m'(\ell)}$ with $m'(\ell) > m(\ell)$.
For this, $g$ may call $f$ as an oracle.

Further, the \emph{security reduction} $A^{(\cdot,\cdot)}(k,\cdot,\cdot)$ 
is a $\poly(k,\frac{1}{\epsilon})$-time oracle
algorithm such that for any regular function $f$, any inverse
polynomial function $\epsilon(\ell)$, and any 
oracle $\Breaker$ for which
\begin{align}\label{eq:1000}
\Pr_{v,\Breaker}[\Breaker(\ell,g_\ell(v))=1] - 
\Pr_{w,\Breaker}[\Breaker(\ell,w)=1] 
\geq \epsilon(\ell)
\end{align}
for infinitely many $\ell$, then
\begin{align}\label{eq:1001}
\Pr_{x,A}[\text{$A^{(\Breaker,f)}(k,\epsilon(k),f_k(x))$ inverts $f_k$}]
\end{align}
is non-negligible.
\end{definition}

In a large part of the paper we restrict ourselves to the (most interesting)
case where $g$ only calls $f$ on a single security parameter.

\begin{definition}
A black-box construction is \emph{security parameter restricted}
if $g(k,\cdot)$ only
calls $f(k,\cdot)$ and $A(k,\cdot)$ only calls $\Breaker(k,\cdot)$ and
$f(k,\cdot)$ for any $k$.
\end{definition}

Our main contribution is the following theorem:
\begin{restatable}{theorem}{maintheorem}\label{thm:main}
Let $n(k),r(k)\in \poly(k)$ be computable in time $\poly(k)$,
and assume that 
$r(k) \in o(\frac{n(k)}{\log(n(k))})$.
There exists no security parameter restricted
fully black-box construction of
a pseudorandom generator from a one-way function 
which has the property that $g(k,v)$ does at most $r(k)$ 
calls to $f(k,\cdot)$.
\end{restatable}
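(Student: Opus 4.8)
The plan is to build, for any candidate construction $(g,A)$ making at most $r(k)$ calls with $r(k)\in o(n(k)/\log n(k))$, an oracle world $(\Breaker, f)$ in which $f$ is a \emph{regular} one-way function but $\Breaker$ breaks the pseudorandomness of $g^{(f)}$ while no $\poly(k,1/\epsilon)$-time reduction $A^{(\Breaker,f)}$ can invert $f$. Following Simon \cite{Simon98} and Haitner--Haimovich \cite{HaiHol09}, I would let $f=f_k$ be a uniformly random regular function of the appropriate type (e.g.\ a random permutation, which is a special case of regular, or a random $2{:}1$ map; regularity is what lets the later regular-case refinement go through) and let $\Breaker$ be an idealized distinguisher that, on input $(\ell,z)$, decides whether $z$ is in the image of $g_\ell$ under the \emph{true} oracle $f$. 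Concretely $\Breaker$ should, on query $z\in\{0,1\}^{m'(\ell)}$, search for a seed $v$ with $g_\ell^{(f)}(v)=z$ and output $1$ iff one exists. Since $m'(\ell)>m(\ell)$, a random string $w$ lies in the image with probability at most $2^{m(\ell)-m'(\ell)}\le 1/2$, whereas $g_\ell(v)$ is always in the image, so \eqref{eq:1000} holds with $\epsilon(\ell)\ge 1/2$ for every $\ell$. Thus the ``$g$ is broken'' hypothesis is satisfied with a comfortable margin.

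The heart of the argument is then to show that $A^{(\Breaker,f)}(k,\epsilon(k),f_k(x))$ fails to invert $f_k$ with non-negligible probability, over the random choice of $f$ and of $x$. The key structural observation is the one exploited in \cite{HHRS07,HaiHol09}: a single call $\Breaker(\ell,z)$, when it is handled \emph{honestly}, only forces $f$ to be consistent on the (at most $r(\ell)$) points that $g_\ell$ queries along some accepting path, so $\Breaker$ can be simulated while ``committing'' to only $O(r(\ell))$ values of $f$ per query. I would make this precise by describing a simulator/sampler for $\Breaker$ that, given the transcript so far, either answers a query $(\ell,z)$ by exhibiting a preimage that is consistent with the already-fixed part of $f$ (and fixing at most $r(\ell)$ new values of $f$), or safely answers ``no'' when no such preimage can exist regardless of the unfixed part. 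A counting/compression argument then shows that, if $A$ runs in time $T=\poly(k,1/\epsilon)$ and hence makes at most $T$ queries to $\Breaker$, the total number of points of $f_k$ that get constrained by the whole interaction is at most $T\cdot r(k)\le \poly(k)\cdot r(k)$. Since $r(k)\in o(n(k)/\log n(k))$, this number is $o(2^{n(k)}\cdot \tfrac{\log\text{(domain)}}{n(k)})$-type small; the point $x$ we are trying to invert, together with its whole fiber under the regular $f$, is essentially uniformly distributed among the $2^{n(k)}$ inputs, and is untouched by $A$'s interaction except with probability $o(1)$ — so $A$ has no information letting it output any $x'$ with $f_k(x')=f_k(x)$ beyond random guessing, which succeeds with probability $2^{-n(k)+\log(\text{fiber size})}$, negligible.

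The main obstacle, and the place where most of the technical work lives, is controlling the adaptivity and the \emph{re-use of queries}: $A$ may feed $\Breaker$ strings $z$ that encode a lot of information about $f$ (e.g.\ $z=g_\ell(v)$ for a $v$ that $A$ itself computed by querying $f$), and $\Breaker$'s honest answer may reveal a preimage that is \emph{not} the one $A$ has in mind, or $\Breaker$ may be forced to reveal a collision. Handling this cleanly requires (i) a careful definition of ``$\SafeToAnswer$'' so that the simulator never has to fix more than $r(\ell)$ new values, and (ii) a hybrid argument bounding the probability that over the course of $A$'s execution $\Breaker$ is ever forced into an ``unsafe'' situation that would leak a preimage of the challenge — this is exactly the ``modify $f$ on one random input'' technique of \cite{HaiHol09}, showing that changing $f$ on the challenge fiber changes $\Breaker$'s answers only with small probability, because each $\Breaker$ call depends on only $\le r(\ell)$ points of $f$ and $r(k)\log(\text{domain})\ll n(k)$. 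Upgrading from arbitrary to \emph{regular} one-way functions — the ``additional work'' promised in the introduction — means redoing this sampling argument with $f$ a random regular function rather than a random permutation, where committing to values of $f$ correlates the sizes of fibers; I expect this to be a nontrivial but routine adaptation once the permutation case is in place, and I would present the permutation-flavored (actually: regular-flavored) version uniformly so the later Theorems~\ref{thm:mainB} and~\ref{thm:mainC} plug in via the encoding paradigm of \cite{GGKT05} and a security-parameter–union argument respectively.
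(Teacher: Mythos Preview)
Your proposal has a genuine gap at its technical core, and it also misses the structural dichotomy that drives the paper's proof.

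\medskip

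\textbf{The simulation claim is false as stated.} You assert that a single call $\Breaker(\ell,z)$ ``only forces $f$ to be consistent on the (at most $r(\ell)$) points that $g_\ell$ queries along some accepting path.'' This is true for a ``yes'' answer, but a ``no'' answer certifies that \emph{every} seed $v\in\{0,1\}^{m(\ell)}$ fails, and each seed may query a different $r(\ell)$-subset of the domain of $f$. Thus a single honest ``no'' can depend on essentially all of $f$. Your escape hatch---``safely answer `no' when no such preimage can exist regardless of the unfixed part''---is exactly the hard case, and you give no argument that this alternative ever applies. The ``modify $f$ on one random input'' technique you cite does not save this: for a fixed $z$, flipping $f$ at $x^*$ can move $z$ into or out of the image of $g^{(f)}$ whenever \emph{any} of the $2^{m(\ell)}$ seeds both queries $x^*$ and lands on (or leaves) $z$; bounding the number of such $x^*$ is precisely the nontrivial concentration statement the paper has to prove. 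Your counting sketch ``$T\cdot r(k)$ points get constrained'' is therefore unsupported, and the inequality you write after it is not well-formed.

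\medskip

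\textbf{The missing dichotomy and the missing oracle.} The paper does \emph{not} use a single image-membership oracle. It reduces (via Theorem~\ref{thm:prgandPUOWF}) to pseudouniform one-way functions, and then runs a case split on the quantity $p(g)$ defined in~\eqref{eq:1}. In the ``invertible'' case ($p(g)\ge\tfrac12$) it uses $\BreakOW$, which \emph{returns a preimage} but only when $\SafeToAnswer$ holds; showing that this oracle does not help invert a random permutation requires Lemma~\ref{lem:concentration}, a Kim--Vu style polynomial concentration bound, and this is exactly where the threshold $r\in o(n/\log n)$ enters (see Section~\ref{sec:concentration}). In the complementary case ($p(g)<\tfrac12$) the paper \emph{changes $f$}: it takes $f$ to be a regular function onto a set $\cY$ of size only $2^{n/100r}$, proves (Lemma~\ref{lem:largeP}) that the output of $g^{(f)}$ then concentrates on a small set $W$, and uses the trivial distinguisher $\BreakPU(W)$, which is independent of $f$ beyond $\cY$. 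Your single permutation-plus-membership-oracle world has no analogue of this second branch; without it, constructions like $g(v)=f(v)$ (or the two-query examples in the overview) defeat any inverter that must remain safe, and you have no fallback distinguisher. The ``additional work'' for regular $f$ is not a routine adaptation of a permutation argument---the degenerate regular $f$ with tiny image is essential to the $\BreakPU$ branch and is what makes the theorem hold for regular one-way functions at all.
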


The above discussion assumes that the adversary is uniform
(i.e., there is a single adversary $A^{(\cdot,\cdot)}$ with oracle
access to $f$ and $\Breaker$).
However, many black-box results even work in case that
$A$ can be a non-uniform circuit, and our result is no exception.
We define non-uniform black-box constructions in Section~\ref{sec:nubb},
and then prove the following theorem (we also change the security 
of the one-way function from standard security to security $s(k)$
in order to illustrate what results we can get in this case).

\begin{restatable}{theorem}{maintheoremB}\label{thm:mainB}
Let $r(k)$, $s(k)$, $n(k)$ be given, 
and assume  $r(k) < \frac{n(k)}{1000 \log(s(k))}$
for infinitely many $k$.
Then, there is no \emph{non-uniform} security parameter
restricted fully black-box 
construction of a pseudorandom generator from a one-way function
with security $s$ which has the property that $g(k,v)$ does at most
$r(k)$ calls to $f(k,\cdot)$.
\end{restatable}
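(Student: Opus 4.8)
The plan is to run the same oracle experiment as in the proof of Theorem~\ref{thm:main} and to replace its concluding information-theoretic step by a compression (``encoding'') argument in the style of Gennaro et al.~\cite{GGKT05}, which stays valid even when the security reduction is a non-uniform circuit carrying $\poly(k)$ bits of advice that may depend on $f$. For each security parameter $\ell$ I would keep the canonical distinguisher
\[
\Breaker(\ell,z)=1\quad\Longleftrightarrow\quad\exists v\in\{0,1\}^{m(\ell)}\colon\ g^{(f)}_\ell(v)=z .
\]
Since the image of $g_\ell$ has size at most $2^{m(\ell)}<2^{m'(\ell)}$, this $\Breaker$ has distinguishing advantage at least $1/2$ against $g_\ell$ for \emph{every} $\ell$; hence, taking $\epsilon\equiv 1/4$, Definition~\ref{def:fbb} forces a secure construction's reduction $A^{(\Breaker,f)}(k,\epsilon,\cdot)$ to invert $f_k$ with some non-negligible probability $\delta=\delta(k)$ for all large $k$, in particular for one of the infinitely many $k$ with $r(k)<n(k)/(1000\log s(k))$. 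Fix such a $k$ and abbreviate $n=n(k)$, $r=r(k)$, $s=s(k)$. Because the construction is security-parameter restricted, at parameter $k$ the algorithm $A$ touches only $f_k$ and $\Breaker(k,\cdot)$, and the latter depends only on $f_k$; fixing $f$ on all other parameters, the advice may be taken to depend on $f_k$ alone.

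I would then set up the encoding. Assume that with non-negligible probability over a uniformly random regular $f_k\colon\{0,1\}^n\to\{0,1\}^n$ there is an advice string $\alpha$, $|\alpha|\le\poly(k)$, with which the circuit $A^{(\Breaker,f)}$ inverts $f_k$ on a $\delta$-fraction of its inputs. I claim such an $f_k$ can then be described using fewer bits than $\log$ of the number of regular functions on $\{0,1\}^n$, which is impossible for a non-negligible fraction of them (even after a union bound over the $2^{\poly(k)}$ possible advice strings). The encoder scans the domain in lexicographic order, committing $f_k$ on the processed prefix together with its image. At the current point $x$ it simulates $A^{(\Breaker,f)}$ on input $f_k(x)$: $f$-queries are answered from the committed prefix, and a query $\Breaker(\ell,z)$ is answered by the heuristic ``search for a witness $v$ whose evaluation $g_\ell(v)$ touches only committed points, output $1$ if one is found and $0$ otherwise,'' together with a short correction supplied by the encoder whenever this heuristic is wrong. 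Call $x$ \emph{good} if $A$ returns a genuine preimage of $f_k(x)$ and all of $A$'s $f$-queries, as well as the $r$ queries incurred while certifying any positive $\Breaker$ answer, fall inside the processed prefix; a greedy selection, exactly as in \cite{GGKT05}, keeps a constant fraction of the $\delta 2^n$ inverted points good. For a good $x$ the encoder stores no value for $f_k(x)$: the decoder recovers it by re-running the identical simulation of $A$, so only a pointer to $f_k(x)$ within the shrinking set of still-available images is recorded.

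The accounting then reads: the advice $\alpha$ ($\poly(k)$ fixed bits), the description of $f_k$ off the good set (no more than its naive cost), one image pointer per good $x$, and the $\Breaker$-corrections. The decisive quantitative point --- inherited from \cite{GGKT05} and from the resampling analysis underlying Theorem~\ref{thm:main} --- is that the $\Breaker$-corrections amortize to $O(r\log s)$ bits per good point: $A$ runs in time $\poly(k)$ and hence issues at most $s$ distinct queries ($s$ being superpolynomial, as for a one-way function), and each positive $\Breaker$ answer that the committed-witness heuristic misses can be pinned down within a universe of size $2^{O(r\log s)}$. Thus every good $x$ yields a net saving of $\Omega(n)-O(r\log s)$ bits, strictly positive once $r<n/(1000\log s)$; summing over the $\Omega(\delta 2^n)$ good points, the total saving dwarfs $\poly(k)\ge|\alpha|$. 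This contradicts the incompressibility of a random regular function on $\{0,1\}^n$ and establishes Theorem~\ref{thm:mainB}.

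The step I expect to be the main obstacle is the one that also dominates Theorem~\ref{thm:main}: controlling the $\Breaker$ oracle. One must keep $\Breaker$ simultaneously (i) a valid distinguisher, (ii) cheaply certifiable, so that the per-point overhead is genuinely $O(r\log s)$ rather than the full seed length $m(\ell)$, and (iii) \emph{useless for inversion beyond what the encoding already exploits} --- otherwise $A$ could just read a preimage off a $\Breaker$-certificate. For one-way permutations these requirements are comparatively mild, but to push the bound to \emph{regular} one-way functions (matching \cite{GoKrLu93}) one must, following \cite{HaiHol09}, analyze how $\Breaker$'s answers react to re-randomizing $f_k$ on a single random input and show that this almost never alters the transcript seen by $A$; this is precisely what prevents $A$ from extracting a preimage via $\Breaker$. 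Combining that single-point-resampling coupling with the greedy ``no forward dependence'' choice of good points while losing only a constant factor --- and hence keeping the explicit constant in the threshold --- is where I expect the real effort to go.
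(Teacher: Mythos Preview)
Your proposal has a genuine gap: the image-membership oracle
\[
\Breaker(\ell,z)=1 \iff \exists v:\ g_\ell^{(f)}(v)=z
\]
is far too powerful, and your correction-cost accounting for it is simply false. Consider the Goldreich--Levin construction $g^{(f)}(x,r)=(f(x),r,\langle x,r\rangle)$ with $r=1$ call. When $f$ is a permutation your $\Breaker$ satisfies $\Breaker(y,r,b)=1$ iff $\langle f^{-1}(y),r\rangle=b$, i.e.\ it \emph{is} the GL hardcore predicate. An inverter $A$ can then recover $f^{-1}(y)$ bit by bit with $O(n)$ oracle calls. In your encoding, the ``committed-witness'' heuristic fails on every one of these calls whenever $f^{-1}(y)$ lies outside the committed prefix, and each correction bit you supply is literally one bit of $f^{-1}(y)$; you spend $\Omega(n)$ correction bits to ``save'' $n$ bits, so the encoding compresses nothing. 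Your asserted $O(r\log s)$ amortized correction cost is therefore wrong already for $r=1$. (A side issue: in Definition~\ref{def:nfbb} the reduction $A$ may make up to $s(k)$ queries, not $\poly(k)$, so your ``$A$ runs in time $\poly(k)$'' premise is also off.)

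What the paper does is precisely engineer the $\Breaker$ so that this cannot happen. It first passes to pseudouniform one-way functions (Theorem~\ref{thm:prgandPUOWF}) and then splits on the combinatorial quantity $p(g_k)$. If $p(g_k)\le\tfrac12$ it uses $\BreakPU(W)$, which checks membership in a \emph{fixed} small set $W$ that does not depend on $f$ at all; the encoding is then the plain \cite{GGKT05} argument. If $p(g_k)\ge\tfrac12$ it uses $\BreakOW$, an inverter that \emph{refuses} to output a preimage $v$ whenever $\SafeToAnswer$ detects that some subset of the $y$-answers seen in evaluating $g(v)$ biases the output $w$ above $2^{-m+n/30}$. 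This refusal clause is the whole point: it guarantees (via the Kim--Vu style concentration bound of Lemma~\ref{lem:concentration} and then Lemma~\ref{lem:breakerStaysConstant}) that for almost all permutations $f$ and for \emph{every} $w$, there are at most $2^{n/5}$ values $x^*$ for which changing $f$ at $x^*$ alters $\BreakOW(w)$. That single-point stability is exactly what the encoding in Lemma~\ref{lem:breakerHelpsNothingGT} exploits to bound the ``set $Q$'' removed per good image and hence to get a genuine saving. Your proposal names the resampling idea but attaches it to an oracle for which no such stability holds; the missing ingredient is the $\SafeToAnswer$ filter (and the $p(g)$ case split that makes the filter compatible with $\Breaker$ still breaking $g$).
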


In Section \ref{sec:nsprc} we study what happens with 
black-box constructions which are not security paramter
restricted.
To explain our results in this setting, we need a few more definitions.
Suppose we have given an oracle construction $(g,A)$, and
fix the oracle $f$ (i.e., the one-way function).
For each $\ell$ we then consider the \emph{shortest} call which
$g(\ell,v)$ makes to~$f$ for any~$v$:
\begin{align}
n_f^{-}(\ell) := \min \{n(k) | 
\text{$\exists v: g^{(f)}(\ell, v)$ queries $f(k,\cdot)$}\}.
\end{align}
Analogously, for each $\ell$ we consider the maximal number of
calls $g(\ell,v)$ makes to $f$:
\begin{align}
r_f(\ell) := \max \{r | 
\text{$\exists v: g^{(f)}(\ell, v)$ makes $r$ queries to $f$}\}.
\end{align}
Note that both $n_f^{-}$ and $r_f$ do in general depend on the oracle $f$.

Our second main theorem is then given in the following: 

\begin{restatable}{theorem}{maintheoremC}\label{thm:mainC}
Fix a length function $n(k)$.
Let $(g,A)$ be a fully black-box construction
of a pseudorandom generator from a regular one-way function.
Then, there is an oracle $f$ for which
\begin{align}
r_f \in \Omega\Bigl(\frac{n^{-}_f}{\log(n^{-}_f)}\Bigr)\;.
\end{align}
\end{restatable}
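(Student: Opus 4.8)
The plan is to reduce the general (non security parameter restricted) case to the security parameter restricted case that is handled by Theorem~\ref{thm:main}. First I would argue that it suffices to find, for every fully black-box construction $(g,A)$, a single oracle $f$ and infinitely many security parameters $\ell$ for which the stretch of $g^{(f)}_\ell$ can be distinguished by some $\Breaker$ while $A^{(\Breaker,f)}$ fails to invert; the contrapositive of the black-box definition then yields the claimed inequality $r_f \in \Omega(n^-_f/\log n^-_f)$ for that $f$. So the real content is: if the claimed relation fails for \emph{every} oracle $f$, then one can cook up an oracle $f$ and a distinguishing breaker that defeats the reduction.

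The key step is a diagonalization/padding argument that turns a ``too query-efficient'' general construction into a security parameter restricted one. Suppose towards a contradiction that for every oracle $f$ we have $r_f \in o(n^-_f/\log n^-_f)$ (or even just: this fails at some fixed level of the $\Omega(\cdot)$). Fix the length function $n(k)$. For each $\ell$, the construction $g^{(f)}_\ell$ queries $f$ on various input lengths; let $k^-(\ell)$ be the security parameter realizing $n^-_f(\ell)$, and let $k^+(\ell)$ be the largest one. Because $g$ runs in polynomial time, $k^+(\ell) \le \poly(\ell)$, so the range of security parameters touched is polynomially bounded. I would then build $f$ level by level using a counting/encoding argument in the spirit of \cite{GGKT05} and of the proof of Theorem~\ref{thm:main}: choose $f_k$ to be a random regular function on each relevant input length, and show that with positive probability there is a breaker $\Breaker$ that distinguishes $g^{(f)}_\ell$ from uniform for infinitely many $\ell$ while no $\poly(k,1/\epsilon)$-time $A^{(\Breaker,f)}$ inverts $f_{k}$ at the smallest touched length $k = k^-(\ell)$. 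The point is that a call $g^{(f)}_\ell$ makes to a \emph{longer} input length $n(k) \gg n^-_f(\ell)$ is, from the perspective of inverting $f$ at length $n^-_f(\ell)$, essentially free information — a longer random regular function is even ``more unpredictable'' per query — so the analysis underlying Theorem~\ref{thm:main}, which bounds how much a breaker reveals about $f$ per query relative to the input length, still goes through with $n^-_f(\ell)$ in the role of $n$ and $r_f(\ell)$ in the role of $r$. Concretely, one partitions the queries of $g_\ell$ by the input length they hit, and observes that the total ``entropy extracted'' about $f_{k^-}$ is at most $O(r_f(\ell)\log n^-_f(\ell))$, whereas a successful inversion would require $\Omega(n^-_f(\ell))$ bits; the gap is exactly the $o(n^-_f/\log n^-_f)$ assumption.

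I would carry this out in three steps: (i) set up the random regular oracle $f$ on all polynomially many relevant input lengths and define the breaker $\Breaker$ exactly as in the proof of Theorem~\ref{thm:main} (the $\SafeToAnswer$/$\BuildPolynomial$ machinery referenced earlier), but now indexed so that at security parameter $\ell$ it attacks $g^{(f)}_\ell$; (ii) prove the ``reconstruction'' lemma that an inverter $A^{(\Breaker,f)}(k^-,\epsilon,f_{k^-}(x))$ together with the transcript of $\Breaker$-queries yields a short description of $x$, using that each of the $\le r_f(\ell)$ queries contributes $O(\log n^-_f(\ell))$ bits and each $\Breaker$-answer is essentially determined by a low-degree polynomial as in the main proof; (iii) conclude by a union bound over the polynomially many input lengths and the standard Borel–Cantelli / averaging step that some fixed $f$ simultaneously fails the construction at infinitely many $\ell$, contradicting the black-box property.

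The main obstacle I expect is step (ii): controlling the interaction between queries at \emph{different} input lengths. In Theorem~\ref{thm:main} all queries hit one length $n(k)$, so the encoding argument is clean; here a single call $g^{(f)}_\ell$ may interleave queries to $f_{k^-}, f_{k^-+1}, \dots, f_{k^+}$, and $\Breaker$ must answer consistently across all of them while the encoding must charge each query to the right ``budget''. The resolution should be that the breaker and the reconstruction procedure are run \emph{independently per input length} — the randomness of $f_k$ at length $k \ne k^-$ is independent of $f_{k^-}$, so queries to those lengths can be simulated from public randomness and contribute nothing to the description of $x \in \{0,1\}^{n^-_f(\ell)}$ — but making this independence rigorous when the breaker oracle itself is a global object, and ensuring the reduction $A$ (which is allowed to query $f$ and $\Breaker$ at \emph{all} security parameters) cannot exploit cross-length correlations, is where the careful bookkeeping lies. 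A secondary subtlety is that $n^-_f$ and $r_f$ depend on $f$, so the target inequality is about the \emph{constructed} $f$; I would handle this by first fixing a generic random $f$, reading off the (now deterministic) functions $n^-_f(\ell)$ and $r_f(\ell)$, and only then running the argument, noting these quantities concentrate because $g$'s query pattern on input $(\ell,v)$ depends on $f$ only through the polynomially many answers it sees.
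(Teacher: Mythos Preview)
Your high-level scaffolding (build the oracle level by level, use Borel--Cantelli to get infinitely many bad $\ell$, reduce to the security parameter restricted analysis) matches the paper, but the core reduction step is not the one the paper uses, and your version has a real gap.

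You propose to pick the $f_k$ \emph{independently} at each input length and then argue that queries to lengths other than $n^{-}_f(\ell)$ are ``public randomness'' that can be simulated away in the encoding. The paper does essentially the opposite. It \emph{correlates} all the long input lengths through a single hidden function: for each $\ell$ it fixes an $\ntilde$, picks random injective maps $P_k:\{0,1\}^{n(k)}\!\to\!\{0,1\}^{\ntilde}$ and $S_k:\{0,1\}^{\ntilde}\!\to\!\{0,1\}^{n(k)}$, and sets $f(k,\cdot)=S_k\circ\ftilde\circ P_k$ for every $k$ with $n(k)\ge\ntilde$. This turns $g(\ell,\cdot)$ into a construction $\htilde$ that queries only the single function $\ftilde:\{0,1\}^{\ntilde}\to\{0,1\}^{\ntilde}$, so the $\SafeToAnswer$/$\BreakOW$/$\BreakPU$ machinery and the concentration lemma apply verbatim with $n=\ntilde$ and $r=r_f(\ell)$. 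Inverting $f(k,\cdot)$ at any of these lengths is then literally equivalent (up to $P_k,S_k$) to inverting $\ftilde$, so one-wayness at every $k$ follows from one-wayness of $\ftilde$.

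Your independence route does not obviously give this. The $\BreakOW$ oracle must evaluate $g^{(f)}(v)$ in full to find preimages, so it is a global object depending on $f$ at all lengths; the $\SafeToAnswer$ test and Lemma~\ref{lem:concentration} are stated for a single permutation and for conditioning on a subset of \emph{its} answers, and you have not indicated how to reformulate either when the answers come from several independent random functions of different lengths. The sentence ``the analysis underlying Theorem~\ref{thm:main}\ldots still goes through with $n^{-}_f(\ell)$ in the role of $n$'' is exactly the statement that needs a proof. A concrete symptom: if $g(\ell,v)$ makes one query at length $n^{-}_f(\ell)$ and $r_f(\ell)-1$ queries at longer lengths, your ``entropy extracted about $f_{k^-}$'' accounting charges only that one query, which cannot yield a lower bound on $r_f(\ell)$; the paper's funneling trick is precisely what forces all $r_f(\ell)$ queries to hit the same $\ftilde$ and hence all count.

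You are also missing two devices the paper needs even after the funneling: the probabilities $q_{\ell,n}$ (the chance, over random $f$ and $v$, that $g(\ell,v)$ makes at most $n/(d\log n)$ queries all of length $\ge n$) used to select $\ntilde$ and to prove via Borel--Cantelli that the ``interesting'' case occurs infinitely often; and the truncated construction $\htilde$ that aborts whenever $g$ exceeds the query budget or touches a shorter length, which is what lets one feed a genuinely $r$-query normalized object into the earlier lemmas. Your final paragraph's remark that $n^{-}_f,r_f$ ``concentrate'' is not enough here; the query pattern can be highly input- and oracle-dependent, and the paper handles this by working conditionally on the good event rather than by any concentration of the pattern itself.
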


\section{Notation and Conventions}

In most of the paper, we consider one fixed security parameter $k = \ell$.
Then, the input length $n = n(k)$ of the one-way function
and the input length $m = m(k)$ of the pseudorandom generator are also
fixed.

\subsection{Pseudouniform functions}

A pseudouniform function is a family $g = \{g_{k}\}_{k\geq 0}$
of length 
preserving functions $g_k : \{0,1\}^{m(k)} \to \{0,1\}^{m(k)}$
such that the output of $g_k$ is indistinguishable from a uniform
string.
An example is given by the identity function, or
any one-way permutation.

\begin{definition}
A function family $g = \{g_{k}\}_{k\geq 0}$ where
$g_k: \{0,1\}^{m(k)} \to \{0,1\}^{m(k)}$ of $\poly(k)$-time
 computable functions is pseudouniform if, for all algorithms
$A$ running in time $\poly(k)$ the function
\begin{align}
\Bigl|\Pr_{A,v}[A(k,g_k(v)) = 1] - \Pr_{A,w}[A(k,w)] = 1\Bigr|
\end{align}
is negligible in $k$.
\end{definition}

If we are given a family $\{g_{k}\}_{k\geq0}$ which is both
pseudouniform and a one-way function, then we can obtain
a pseudorandom generator using only one call to $g$
by the Goldreich-Levin Theorem \cite{GolLev89}.
Conversely, given a pseudorandom generator one can 
get a pseudouniform one-way function by truncating the output.

\begin{theorem}\label{thm:prgandPUOWF} 
Suppose that $g = \{g_k\}$ is both a pseudouniform function 
and also a one-way function.
Then, $h_{k}(v,z) := (g(v),z, \oplus_{i=1}^n v_i z_i)$ 
is a pseudorandom generator.

Conversely, if $g$ is a pseudorandom generator with $m(k)$
bits of input, the
truncation of $g$ to the first $m(k)$
bits of its output is both pseudouniform and a one-way function.
\end{theorem}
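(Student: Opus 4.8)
The goal is to prove Theorem~\ref{thm:prgandPUOWF}, which has two directions.

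\textbf{Forward direction.} The plan is to show that if $g$ is both pseudouniform and one-way, then $h_k(v,z) := (g(v), z, \oplus_{i=1}^n v_i z_i)$ is a pseudorandom generator. First, note that $h_k$ is expanding: its input is $v \in \{0,1\}^{m(k)}$ together with $z \in \{0,1\}^{m(k)}$, so $2m(k)$ bits, and its output is $g(v) \in \{0,1\}^{m(k)}$, $z \in \{0,1\}^{m(k)}$, and one additional bit, hence $2m(k)+1$ bits. The core of the argument is a hybrid: the distribution $(g(v), z, \langle v, z\rangle)$ should be shown indistinguishable from $(w, z, \langle v, z\rangle)$ using pseudouniformity of $g$ (replace $g(v)$ by a uniform $w$; note $z$ and the inner-product bit are easy to simulate given only $g(v)$ or $w$), and then from $(w, z, b)$ with $b$ a uniform independent bit using the Goldreich--Levin theorem applied to the one-way function $g$: a distinguisher that tells $(g(v), z, \langle v,z\rangle)$ from $(g(v), z, b)$ is exactly a predictor for the hard-core bit $\langle v,z\rangle$ given $g(v)$, which by \cite{GolLev89} yields an inverter for $g$, contradicting one-wayness. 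Chaining the two hybrids shows $h_k$'s output is indistinguishable from uniform on $2m(k)+1$ bits.

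\textbf{Converse direction.} Here I would take a pseudorandom generator $g$ with $m(k)$ input bits and output length $m'(k) > m(k)$, and let $\tilde g_k$ be the truncation of $g_k$ to its first $m(k)$ output bits. Pseudouniformity of $\tilde g_k$ is immediate: any distinguisher for $\tilde g_k$ against the uniform string on $m(k)$ bits can be padded with $m'(k)-m(k)$ fresh uniform bits to yield a distinguisher for $g_k$ against the uniform string on $m'(k)$ bits, with the same advantage, contradicting pseudorandomness of $g$. For one-wayness of $\tilde g_k$: suppose an algorithm $B$ inverts $\tilde g_k(v)$ with non-negligible probability; then given a challenge string $y \in \{0,1\}^{m'(k)}$, run $B$ on the first $m(k)$ bits of $y$ to get a candidate $v'$, and output $1$ iff the first $m(k)$ bits of $g_k(v')$ equal the first $m(k)$ bits of $y$. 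On a pseudorandom input $y = g_k(v)$ this test passes with the inverter's success probability (non-negligible), while on a truly uniform $y$ it passes with probability at most $2^{m(k)}/2^{m(k)} \cdot (\text{collision probability})$ --- more carefully, the number of $v'$ for which the first $m(k)$ bits of $g_k(v')$ take any fixed value is at most $2^{m(k)}$, and there are $2^{m(k)}$ possible prefixes, so a uniform prefix is hit by the image with probability $\le 1$, and one must argue the fraction of prefixes in the image times the appropriate bound is bounded away from the inverter's advantage. Actually the clean statement: on uniform $y$ the first $m(k)$ bits are uniform and independent of the remaining bits, so $\Pr[\text{test passes}] \le \Pr_{u}[u \in \tilde g_k(\{0,1\}^{m(k)})] \le 1$; this is too weak, so instead I would observe that a successful inverter of $\tilde g_k$ directly contradicts pseudorandomness since the truncation of a uniform $m'(k)$-bit string is statistically far from $\tilde g_k(v)$ only if $\tilde g_k$ is non-uniform, but that's the wrong primitive. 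The correct route: use the distinguisher $D(y) = [\![ \tilde g_k(B(\tilde g_k^{-1}\text{-prefix of }y)) = \text{prefix of }y]\!]$ and note uniform $m(k)$-bit prefixes land in the image of $\tilde g_k$ with probability $|\mathrm{Im}(\tilde g_k)|/2^{m(k)}$; if this probability is itself non-negligibly bounded away from $1$ we are done directly, and if it is $\ge 1 - \mathrm{negl}$ then $\tilde g_k$ is essentially surjective hence pseudouniformity already gives what we need --- this case analysis is the one fiddly point.

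\textbf{Main obstacle.} The forward direction is essentially a restatement of Goldreich--Levin plus a one-line hybrid, so I expect that to be routine. The delicate part is the one-wayness of the truncation in the converse: one has to rule out that $B$ "inverts" by exploiting structure of uniform strings rather than of $\tilde g_k$-outputs, which requires the small case analysis on $|\mathrm{Im}(\tilde g_k)|/2^{m(k)}$ sketched above. I would structure the converse proof around that dichotomy and keep the rest as standard padding arguments.
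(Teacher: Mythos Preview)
Your forward direction and the pseudouniformity half of the converse are correct and match the paper's argument (the paper is even terser, just citing Blum--Micali and Goldreich--Levin).

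The one-wayness half of the converse has a genuine gap. Your proposed dichotomy on $|\mathrm{Im}(\tilde g_k)|/2^{m(k)}$ does not work. In the ``image bounded away from full'' branch you say ``we are done directly,'' but knowing that a uniform prefix misses the image with probability $\delta$ only tells you the test succeeds on uniform with probability at most $1-\delta$; since the inverter's advantage $\epsilon$ may be a small inverse polynomial, this gives no distinguisher. (Testing image membership itself is not efficient.) In the ``image nearly full'' branch you appeal to pseudouniformity, but near-surjectivity of $\tilde g_k$ says nothing about the distribution of $\tilde g_k(v)$, so this step is not an argument.

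The paper's fix is to split on a different quantity: let $p$ be the probability that the inverter $A$ succeeds on a \emph{uniform} $m$-bit string (rather than on $\tilde g_k(v)$), and estimate $p$ by sampling. If $p$ is noticeably smaller than $\epsilon$, then ``did $A$ invert the prefix?'' already distinguishes. If $p$ is comparable to $\epsilon$, the paper uses the one bit of stretch of the full generator $g$: the same $A$, ignoring the last bit, inverts a uniform $(m{+}1)$-bit string under $g$ with probability $p/2$, whereas on an actual output $g(v)$ it inverts with probability at least $2\cdot(p/2)=p$, because every successful inversion of a uniform string must land in $\mathrm{Im}(g)$, which has density at most $1/2$ in $\{0,1\}^{m+1}$. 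Hence ``did $A$ produce a full preimage of $g$?'' distinguishes with advantage at least $p/2 \geq \epsilon/8$. The missing idea in your sketch is precisely this use of the extra output bit to force a density gap; without it, the large-$p$ case cannot be closed.
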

\begin{proof}
The first part follows immediately by the fact that a
distinguisher can be converted to a next bit predictor \cite{BluMic84}
and the Goldreich-Levin Theorem \cite{GolLev89}.

For the second part, let $g: \{0,1\}^{m(k)} \to \{0,1\}^{m(k)+1}$ 
be a pseudorandom generator
where we assume without loss of generality that $g$ expands
by $1$ bit.
If the truncation $g': \{0,1\}^{m(k)} \to \{0,1\}^{m(k)}$ is
not pseudouniform, there must be some distinguisher which has
non-negligible advantage in distinguishing the output
from a uniform random string.
Such a distringuisher immediately contradicts the pseudorandomness
of $g$.

Suppose now that $g': \{0,1\}^{m(k)} \to \{0,1\}^{m(k)}$
is not a one-way function.
Then, there exists some (inverse) polynomial $\epsilon(k)$
and some algorithm $A$ which inverts $g$ with probability 
at least $\epsilon$ for infinitely many $k$.

On some fixed security parameter $k$ we now proceed as follows:
first, let $p$ be the probability
that $A$ finds a preimage of $g'$ of a uniformly chosen 
element $y \in \{0,1\}^m$ (i.e., 
the probability that $g'(A(y)) = y$ for a uniform
random $y$).
This can be arbitrary small, because the distribution
is different from the distribution induced by $g'(x)$.
Using sampling, we can find an estimate 
$p'$ of $p$ such that with probability 
$1-2^{-k}$ the estimate satisfies $|p-p'| \leq \epsilon/4$.
If $p' \leq \epsilon/2$, then we can distinguish the output 
of $g$ from uniform by checking whether $A$ inverts $g'$
on the first $m(k)$ bits.  
%Since this implies that $p \leq 3\epsilon/4$ with high probability,
%for those $k$ for which $A$ inverts $g'$ with probability $\epsilon$
%we get a distinguisher with advantage roughly $\epsilon/4$.

On the other hand, if $p' \geq \epsilon/2$ we can assume 
$p \geq \epsilon/4$.  
Now $A$ immediately gives an inverter for $g$ which inverts a random
uniform bitstring of length $m+1$ with probability at 
least $\epsilon/8$ (just ignore the last bit, invert
$g'$, and hope the last bit matches).
Finally, the probability an inverter for $g$ inverts an
output of $g$ is at least twice the probability it inverts 
a uniform bitstring.
Thus, we can get a distinguisher by checking whether $A$ even finds
an inverse of $g$, given only the first $m$ bits of the result.
\end{proof}

Thus, we see that giving lower bounds on the construction of
pseudorandom generators is equivalent to giving lower bounds on 
the construction of pseudouniform one-way functions.

\subsection{Normalization}

Suppose we have a construction $\{g_k^{(f)}\}_{k \geq 0}$
of a supposedly pseudouniform one-way functions, where 
$k$ is a security parameter.  
We make several assumptions on the construction which
simplifies the proofs.
First, we assume that $g$ never calls $f$ twice with the same input,
and does exactly $r$ calls to $f$.
This is easy to achieve:
one can modify $g$ to get an equivalent oracle construction
with these properties.
Next, we enlarge the range of $g$, and assume that in case
two queries of $f$ give the same \emph{answer}, then $g$ outputs a special 
symbol which encodes a failure.
This last restriction is not completely trivial, as it
can break some constructions of pseudouniform functions
for some choices of underlying one-way functions.
As we will see in the proof of Theorem~\ref{thm:main}, in
our case this is no problem (because of the way we construct
the oracles $f_k$).
\begin{definition}
Let $\{0,1\}^{m*} := \{0,1\}^{m} \cup \{(\bot, v) | v \in \{0,1\}^{m}\}$.
An oracle function $g^{(f)}: \{0,1\}^{m} \to \{0,1\}^{m*}$ is
$r$-query normalized if $g(v)$ never queries $f$ with the same input twice,
does exactly $r$ calls to $f$, and
 whenever two outputs of $f$ agree, $g^{(f)}(v) = (\bot, v)$.
\end{definition}
We will write $g$ instead of $g^{(f)}$ whenever $f$ is clear
from the context.
Furthemore, we let $g'(v,y_1,\ldots,y_r)$ be the function 
which never calls $f$ but instead just uses $y_i$ as the 
reply of $f$ to the $i$th query.

\subsection{Notations}
\label{sec:notations}
\begin{definition}[The Query-sets]
The set $\Query(g,v,f)$ is
$\{(x_1,y_1),\ldots,(x_r,y_r)\}$,
where $x_i$ is the $i$-th query which $g$ does to $f$
in an evaluation of $g^{(f)}(v)$, and $y_i$ is the answer
given by $f$.
The set $\Query(g',v,y_1,\ldots,y_r))$ is defined similarly 
(in particular, it also contains pairs $(x_i,y_i)$).
The sets $\QueryX(g,v,f)$ and $\QueryY(g,v,f)$ contain the 
$x$ and $y$-part of the pairs in $\Query(g,v,f)$.
\end{definition}

For a pair $(x^*,y^*)$, we 
define
\begin{align}\label{eq:6}
f_{(x^*,y^*)}(x) := \begin{cases}
y^* & \text{if $x = x^*$}\\
f(x) & \text{otherwise.}
\end{cases}
\end{align}

We use the following sets of functions $f: \{0,1\}^n \to
\{0,1\}^n$.
For a set $\cY \subseteq \{0,1\}^n$ such that $|\cY|$ divides
$2^n$, $\cF(\cY)$ is the set of all regular surjective functions
$f: \{0,1\}^n \to \cY$.
Then, $\cP_n$ is the set of all bijective functions
$f: \{0,1\}^n \to \{0,1\}^n$, i.e., the permutations.
We use $\cP$ instead of $\cP_n$ when $n$ is clear
from the context, and write $f \leftarrow \cP_n$ or $f \leftarrow \cF(\cY)$ to pick
a function uniformly from the respective set.

\section{Overview of the Proofs}
We now try to provide some intuition of the proofs.
We concentrate on the proof of Theorem~\ref{thm:main}, and only
say a few words about the other theorems in the end.

\paragraph{Basic setting}
By the discussion above, it is sufficient to consider constructions
of pseudouniform one-way functions from one-way functions.
Thus, suppose a fully black-box construction $(g,A)$ 
of a pseudouniform one-way function is given.
We fix some security parameter $k$, and consider $g(k,\cdot)$,
which only calls $f(k,\cdot)$.

Our task is to come up with a pair $(\Breaker,f)$, such that
$\Breaker(k,\cdot)$ either inverts $g$ or distinguishes the output
of $g$ from a uniform random string, and yet $A^{(\Breaker,f)}$ will
not invert $f(k,\cdot)$ with noticeable probability.

\subsection{The case of a single call}
We first study the case where $g^{(f)}$ does a single call to the
underlying one-way function.

\paragraph{Example constructions}
We first discuss three example constructions for $g^{(f)}$, which all 
do $r=1$ calls to $f$.

The first example $g: \{0,1\}^n \to \{0,1\}^n$ is defined as
$g(v) = f(v)$, i.e., the function simply applies the given one-way 
function.
Clearly, $g$ will be one-way, so that $\Breaker$ must distinguish
the output of $g$ from a random function; we will call such a 
breaker $\BreakPU$.
In this case, our proof will pick $f: \{0,1\}^n \to \{0,1\}^n$ as a 
very degenerate function (for example with image set of 
size $|\cY| = 2^{\log^2(n)}$).
It is intuitive that $\BreakPU$ can distinguish the output of $g$ from 
a uniform random string without helping to invert $f$.

The second example $g: \{0,1\}^n \to \{0,1\}^n$ is defined as 
$g(v) = v$, so that the function simply outputs the input $v$.
In this case, clearly the function is pseudouniform, therefore
$\Breaker$ will break the one-way property of $g$ using exhaustive
search.  We will call such a breaker $\BreakOW$.

The last example $g: \{0,1\}^{2n} \to \{0,1\}^{2n}$ is defined
as \begin{align}
g(v,r) := \begin{cases}
(v,r) & \text{if $r \neq 0^n$}\\
(f(v),r) & \text{otherwise}.
\end{cases}
\end{align}
This function is pseudouniform no matter how $f$ is defined.
Thus, $\Breaker$ needs to invert $g$.
One sees that it needs to be careful in that: if $\BreakOW(y,0^n)$
returns a preimage of $g$, clearly $A$ will be able to invert.
Thus, only images $(y,r)$ with $r \neq 0^n$ should be inverted.

\paragraph{Inverting constructions with one call}
It turns out that we can describe $\BreakOW(w)$ in general
as follows: enumerate all possible inputs~$v$, and 
evaluate $g^{(f)}(v)$ on each of them.
In case $g^{(f)}(v) = w$, 
 $\BreakOW$ considers the output $y$ which appeared in this
evaluation as answer to the query done to $f$.
It then considers the probability that $w$ is the output 
in case nothing about $f$ or $v$ is known, 
but conditioned on $y$ to appear in the evaluation 
(assuming that $f$ is chosen as a permutation).
If this probability is large (concretely, larger than $2^{-m+n/30}$), 
$\BreakOW$ refuses to answer. Otherwise, it returns $v$.

A very quick intutition why this might not help to invert
$f$ is as follows: suppose
an algorithm $A^{(\BreakOW,f)}(y)$ tries to invert $y$.
In order to do use $\BreakOW$, $A$ needs to find some useful $w$
for this $y$.
However, $\BreakOW$ ensures that it only inverts $w$ which
are not very likely to be outputs for this $y$, so that
$A$ is unlikely to find a matching $w$.
Thus, we can hope that $A$ will fail.

We will sketch the actual proof that $f$ remains one-way given $\BreakOW$
later.

\paragraph{Invert or distinguish constructions with one call}
We now distinguish two cases: if $\BreakOW$ inverts $g$ for
a randomly chosen $f$ with probability
(say) $\frac12$, clearly we are done.
Otherwise, it must be that very often in random evaluations of
$g$, once the output of $f$ is fixed to $y$, certain
values $w$ are much more likely (if the rest, i.e., $v$, and $f$,
are still chosen at random).
In this case, we first pick $\cY \subseteq \{0,1\}^n$,
$|\cY| = 2^{\log^2(n)}$ as image set.
We then show that there is some small set $W(\cY)$ depending only
on $\cY$, such that if we pick $f$ from $\cF(\cY)$ and $v$ uniformly
at random, with high probability $g^{(f)}(v) \in W$.
Thus, we can distinguish the output of $g^{(f)}(v)$ from a uniform
random string by just checking whether it is in $W$, and this
without even knowing the details of $f$ (namely, we can still
pick $f: \{0,1\}^n \to \cY$ uniformly at random).

The reason that $g^{(f)}(v) \in W(\cY)$ is likely should be intuitive:
we know that conditioning on some fixed $y$ highly biases the output
$w$, and because there are only few $y \in \cY$, the output should
still be biased overall.

\paragraph{The underlying one-way function remains one-way}
We still need to argue that $\BreakOW$ does not help to invert
a random permutation $f$.
For this, suppose $A^{(\BreakOW,f)}(y_0)$ tries to invert $y_0 = f(x_0)$.
Pick a random $x^*$ and consider the function $f^* = f_{(x^*,y_0)}$, 
as defined in Section~\ref{sec:notations}.
Also, let $\BreakOW^*$ be defined as $\Breaker$, except that it uses
$f^*$ instead of $f$ when it evaluates $g$ in the exhaustive search.

Intuitively, if $A^{(\BreakOW,f)}(y_0)$ is likely to return $x_0$, 
then $A^{(\BreakOW^*,f^*)}(y_0)$ must be 
at least somewhat likely to return $x^*$, because $x^*$ has the same 
distribution as $x_0$ from $A$'s point of view
(the same argument was previously 
used in \cite{HaiHol09}, and in a more convoluted way in \cite{Simon98}).
This means that the two runs of $A$ have to differ in some call
with noticeable probability.
It is unlikely that they differ in a call to $f$, since $x^*$ was picked
at random and $A$ makes few calls to $f$.
Thus, they have to differ in some call to $\BreakOW$ with noticeable 
probability.

However, it turns out that $\BreakOW^*(w) \neq \BreakOW(w)$ for any $w$
with very low probability: it only happens in two cases.
First, if $x^*$ is the 
query which $g^{(f)}(\BreakOW(w))$ makes to $f$, but there is only one
such query, so this happens with probability $2^{-n}$ (over the choice of
$x^*$).

The other case is if there is some $v$ for which the output
of $g^{(f)}(v)$ changes to $w$ when we replace $f$ with $f^*$.

Now, recall the check $\BreakOW$ performs before it outputs $v$.
This check is equivalent to the following:
enumerate all pairs $(v',y')$, and count
the number for which $g(v') = w$ in case $f$ answers the only query with $y'$.
If this number is larger than $2^{n/30}$, refuse to return $v$.

This now implies that there can only be $2^{n/30}$ values for $x^*$
for which the output changes to $w$, and so this case is unlikely as well.

\subsection{Multiple calls}
The case when $g$ can make more than $1$ call is significantly 
more difficult than the case where $g$ makes a single call.
It turns out that most of the issues which arise can be discussed
already for $r=2$ calls, so we restrict the discussion to this case
in this section.

\paragraph{Construction with many calls}

Of course, the same examples as before still work.
Thus, $\BreakOW(w)$ still does the same 
check before returning $v$: does conditioning on one of the
two query answers $y_1$ and $y_2$ given by $f$ in the
evaluation of $f(v)$ make $w$ much more likely?
If so, it refuses to answer.

However, it turns out that we can restrict $\BreakOW(w)$ even more:
it should also not return a preimage $v$ if conditioning on 
having seen \emph{both} outputs $y_1$ and $y_2$ in an evaluation
makes the output $w$ more likely.
As it turns out, we only know how to prove that $\BreakOW$ does not
help invert $f$ with this additional restriction.

A useful example might be the construction,
which takes as input a $v = (x_1,x_2)$ of length $2n$, and is defined
as
\begin{align}
g^{(f)}(x_1,x_2) = \begin{cases}
(f(x_1),f(x_2)) & \text{if $f(x_1) = f(x_2) \oplus (1,\ldots,1)$}\\
(x_1,x_2) & \text{otherwise.}
\end{cases}
\end{align}
This will be a pseudouniform function, because 
usually $f(x_1) \neq f(x_2) \oplus (1,\ldots,1)$.
Also, we see that an adversary $A$ which tries to use $\BreakOW$
to invert $y$ would presumably call $\BreakOW$ on 
input $(y,y\oplus (1,\ldots,1))$.
However, using the additional restriction above, $\BreakOW$ will definitely
not return the inverse $A$ is looking for.

We make two additional remarks:
It turns out that if $\BreakOW$ inverts $g(v)$ with
low probability,  we
can choose $\cY \subseteq \{0,1\}^n$ as small as $2^{\Theta(n/r)}$,
and conditioned on $f$ being from $\cF(\cY)$, the output
of $g$ is very biased.
Since $\cY$ is superpolynomial only as long as $r \in o(n/\log(n))$, we
see that $f$ stops being a one-way function once $r \notin o(n/\log(n))$.

Second, there is a question on whether above one should condition
on $y_1$ being the first output, and $y_2$ being the second output,
or just on both $y_1$ and $y_2$ appearing as an output.
We choose the latter, as it seems more natural in the concentration
bound explained below.
It seems we can be relatively careless with this, 
because $r^r \ll 2^n$.
\paragraph{The underlying one-way function still remains one-way}
Again, we need to argue why $\BreakOW$ does not
help to invert $f$.
As before, we can show that we only need to prove 
that with high probability over the choice of $x^*$ we have
$\BreakOW^*(w) = \BreakOW(w)$.
Previously, this followed by a simple counting argument.
Now, it becomes more difficult.

To see why, consider
\begin{align}
g^{(f)}(x_1,x_2) = \begin{cases}
0^{2n} & \text{if $x_1 = x_2 = 1^{n}$}\\
1^{2n} & \text{otherwise, if also $x_1 = f(x_1)$ and $x_2 \neq f(x_2)$}\\
(x_1,x_2) & \text{otherwise}
\end{cases}
\end{align}
One can check that neither conditioning on a value of $y_1$, $y_2$,
or on a pair $(y_1,y_2)$ makes some output $w$ of $g$ much more likely.
Therefore, $\BreakOW(w)$ will simply return some preimage found.

Suppose now that $f$ was picked in a very unlikely way: $f(x) = x$ for any
$x$.
Then, $\BreakOW(1^{2n})$ will return $\bot$, signifying that no
preimage was found.
On the other hand, for any $x^*$ and any $f^*$ as above, 
$\BreakOW^*(1^{2n})$ will return $(x_1,x^*)$ for some $x_1$.
Thus, for \emph{some} functions $f$, $\BreakOW^*$ can behave very
differently from $\BreakOW$.

It is, however, possible to show that functions $f$ for which this happens
are very unlikely.
In case $r=2$, a usual Chernoff bound is sufficient for that.
For $r$ larger than $2$,
a concentration bound for polynomials in the style as proven by \cite{KimVu00}
seems to be needed.
We will use a bound from \cite{Holens11}, and show 
in Section~\ref{sec:concentration}
how it can be used to show that for almost all functions $f$, 
$\BreakOW^{f}(w)  \neq \BreakOW^{f^*}(w)$ has very low probability (over the
choice of $x^*$).

It turns out that this concentration bound 
breaks down if $r \in \Omega(n/\log(n))$.

\subsection{Non-uniform security reductions}
The above considerations prove Theorem~\ref{thm:main}, which exclude
constructions with uniform security proofs.
The technique given in \cite{GGKT05} allows to give security proofs
which also hold against non-uniform security proofs, and we can apply this
technique in our context.
We apply this technique in Section~\ref{sec:nubb}, giving Theorem~\ref{thm:mainB}.

\subsection{On the security parameter restriction}
Given our techniques, one might suspect that the restriction on the 
security parameter is inherent to them.
However, as we show in Section~\ref{sec:nsprc}, this is 
not the case.
Our proof will only break the
resulting pseudouniform one-way function only for infinitely many 
security parameters $k$, instead of for all but finitely many $k$
as one might hope.

This last restriction is inherent, at least as long as one
only uses underlying regular one-way functions.
The reason is that constructions exist
which do fewer than $n/\log(n)$ calls, and yield a pseudorandom
generator for infinitely many security parameters.

In order to get rid of the restriction, we use the following idea:
We consecutively find infinitely many values $\ell$ 
for which $g(\ell, \cdot)$ does fewer than $\frac{n}{\log(n)}$ queries,
where $n$ is the \emph{shortest} input length which $g$ queries on
security parameter $\ell$.
After this, we simultaneously fix $f(k,\cdot)$ for all $k$ which $g$ 
can access.
The idea is that underlying to $f(k,\cdot)$, there could be a \emph{single} 
one-way function for many different values of $k$.
Thus,  we can reduce our task to the problem
solved in the previous sections.

Of course some technical problems arise. 
These are dealt with in Section~\ref{sec:nsprc}.

\section{The Breaker Oracles}

We will give two oracles, each of which breaks one of the 
two security properties of~$g$.
The first oracle inverts $g$ with noticeable probability,
and the second oracle distinguishes the output of $g$ 
from a uniform random string.
For each security parameter $k$
we will then set $\Breaker_k$
to be one of these two oracles, 
depending on the combinatorial structure of~$g_k$.

\subsection{The inverting oracle}
The first oracle is called $\BreakOW$.
It inverts $g$ in some cases, and is given as algorithm below,
but we first explain it informally.
On input $w \in \{0,1\}^m$, $\BreakOW(w)$ 
first enumerates all possible inputs 
$v \in \{0,1\}^m$ of $g$ in lexicographic order.
For each of them it checks whether $g^{(f)}(v) = w$.
If so, it checks whether returning $v$ could help some algorithm~$A$
to invert~$f$.
For this, it calls the procedure $\SafeToAnswer$.
Roughly speaking, $\SafeToAnswer$ will 
return false in case this fixed $w$ correlates strongly with
some outputs $y \in \{0,1\}^n$ of $f$ which occured during
the evaluation of $g^{(f)}(v)$.
More exactly, $\SafeToAnswer$ enumerates all possible subsets $B$ of the
answers $f$ gave in the evaluation of $g^{(f)}(v)$.
It then computes the probability that an evaluation outputs $w$, 
conditioned on the event that the evaluation produces all outputs
in $B$.
If this probability is much larger than $2^{-m}$, $\SafeToAnswer$
will return false. 

\mbox{}\begin{mdframed}\setlength{\parindent}{0cm}%
\textbf{Algorithm $\BreakOW^{(f)}(w)$}
\medskip

\hrule

\medskip
\textbf{procedure} $\SafeToAnswer(w, Q)$: 
\centerComment{$\SafeToAnswer$ does \emph{not} depend on $f$}
\\
\mbox\qquad \textbf{for all} $B \subseteq Q$: \\
\mbox\qquad\qquad \textbf{if} $\displaystyle \Pr_{f'\leftarrow\cP,v'}[g^{(f')}(v') = w | B \subseteq \QueryY(g,v',f')] \geq 2^{-m+\frac{n}{30}}$\\
\mbox\qquad\qquad\qquad \textbf{return} false\\
\mbox\qquad \textbf{return} true\\
\textbf{done}\\
\\
\textbf{for all} $v \in \{0,1\}^m$ \textbf{do}\\
\mbox\qquad \textbf{if} $g^{(f)}(v) = w$ \textbf{then}\\
\mbox\qquad\qquad\textbf{if} $\SafeToAnswer(w, \QueryY(g,v,f))$
 \textbf{then}\\
\mbox\qquad\qquad\qquad \textbf{return} $v$\\
\textbf{return} $\bot$
\end{mdframed}

\medskip

\if0
In case $g$ makes only a single call ($r=1$), two 
example constructions are instrutive for understanding how
 $\SafeToAnswer$ works.
The first is when $g(v) = f(v) \oplus \mathbf{1}$, i.e., $g$ 
calls $f$, and outputs the result after flipping all bits. 
Then, $\BreakOW$ must not invert $g$, as this would help invert $f$.
Since $w$ is highly biased given $y$, clearly $\SafeToAnswer$ will return false.

Suppose that $\BreakOW$ would actually invert this $g$.
Then, the adversary would find a string $w$ to query $\BreakOW(w)$
given $y$ easily: simply use $w = y \oplus \mathbf{1}$.
This is because the possible $w$ is highly biased, given $y$.

Conversely, consider the construction $g(v)$ 
which calls $f$ but ignores the output and simply outputs $w = v$.
Then, $w$ is very unbiased given $y$, and so $\SafeToAnswer$ returns 
true.

Unfortunately, in case $r > 1$ we find it more difficult 
to get intuition.  

\medskip
\fi

We next define the quantity $p(g)$.
This is the probability that $\BreakOW$ 
inverts $g(v)$ by returning $v$
(actually, not quite: $\BreakOW$ might return a different
preimage of $g(v)$ before it enumerates $v$ -- in any case, 
the probability that $\BreakOW$ inverts~$g$ is at least $p(g)$).

\begin{align}\label{eq:1}
p(g) := \Pr_{{f \leftarrow \cP}\atop{v \leftarrow\{0,1\}^m}}[
\SafeToAnswer(g^{(f)}(v),\QueryY(g,f,v))]
\end{align}

It is easy to see that in case
$p(g) \geq \frac12$, then $\BreakOW^{(f)}$ will invert
$g(v)$ with noticeable probability.
\begin{lemma}\label{lem:breakOWHelpsInvertG}
Let $g^{(\cdot)}: \{0,1\}^m \to \{0,1\}^{m*}$ be a 
normalized oracle construction.
If $p(g) \geq \frac12$, then 
\begin{align}
\Pr_{f \leftarrow \cP,v}[\text{$\BreakOW(g^{(f)}(v)))$ inverts $g^{(f)}$}]
\geq \frac12.
\end{align}
\end{lemma}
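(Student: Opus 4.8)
The plan is to observe that $p(g)$ is, essentially by construction, a lower bound on the probability that $\BreakOW$ inverts $g$, so the proof is short once one is careful about the lexicographic enumeration inside $\BreakOW$.

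First I would record one normalization remark. Since $f \leftarrow \cP$ is a permutation and $g$ is $r$-query normalized (it never queries $f$ on the same input twice), the $r$ answers that appear in any evaluation of $g^{(f)}(v)$ are pairwise distinct. Hence the failure clause of the normalization (which sets $g^{(f)}(v) = (\bot,v)$ when two answers coincide) never fires, and $g^{(f)}(v) \in \{0,1\}^m$ for every $v$; in particular there is no need to worry about the symbols $(\bot,\cdot)$ in this proof.

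The main step is the following claim: for any fixed permutation $f$ and any $v \in \{0,1\}^m$, writing $w := g^{(f)}(v)$, if $\SafeToAnswer(w, \QueryY(g,v,f))$ returns true, then $\BreakOW^{(f)}(w)$ returns some $v^\ast$ with $g^{(f)}(v^\ast) = w$. To see this, trace the execution of $\BreakOW^{(f)}(w)$, which scans the inputs $v' \in \{0,1\}^m$ in lexicographic order. If the scan returns some $v'$ before ever reaching $v$, then by inspection of the algorithm it must have already checked $g^{(f)}(v') = w$, so $v'$ is a valid preimage and we are done. Otherwise the scan reaches $v$ itself; there the test $g^{(f)}(v) = w$ succeeds, and by hypothesis the test $\SafeToAnswer(w, \QueryY(g,v,f))$ succeeds as well, so $\BreakOW$ returns $v$, again a valid preimage. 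In particular $\BreakOW^{(f)}(w)$ does not output $\bot$ in this case.

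Given the claim, the lemma follows immediately: by the definition of $p(g)$ in \eqref{eq:1}, the event that $\SafeToAnswer(g^{(f)}(v), \QueryY(g,v,f))$ returns true has probability exactly $p(g)$ over $f \leftarrow \cP$ and $v \leftarrow \{0,1\}^m$, and on that event $\BreakOW(g^{(f)}(v))$ inverts $g^{(f)}$; hence the inversion probability is at least $p(g) \geq \frac12$. The only point that needs any care — and the closest thing to an obstacle — is noticing that an ``early'' return of $\BreakOW$, before the enumeration reaches the particular preimage $v$ witnessing that $\SafeToAnswer$ succeeds, is still a legitimate preimage of $w$, so it does not matter which preimage $\BreakOW$ actually outputs.
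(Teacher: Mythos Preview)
Your proof is correct and follows essentially the same argument as the paper: on the event that $\SafeToAnswer(g^{(f)}(v),\QueryY(g,v,f))$ holds, the lexicographic enumeration in $\BreakOW$ either returns a preimage of $w$ before reaching $v$, or reaches $v$ and returns it; since this event has probability $p(g)\ge\frac12$, the inversion probability is at least $\frac12$. Your additional normalization remark about the $(\bot,v)$ clause never firing when $f$ is a permutation is correct but not needed for the argument.
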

\begin{proof}
Pick $v$ and $f$ at random and call $\BreakOW(w)$ for $w=g^{(f)}(v)$.
When $\BreakOW$ enumerates all possible values $v$, at one point
it will pick the actual chosen value $v$ unless it has returned
a preimage of $w$ before.  
With probability at least $\frac12$,
$\SafeToAnswer(w,Q)$ returns true where $Q = \QueryY(g,v,f)$,
in which case $\BreakOW(w)$ will return some inverse of~$w$.
\end{proof}

Our next goal is a more interesting claim: 
$\BreakOW$ is unlikely to help inverting $f$, when is uniformly drawn from $\cP$.
For this, we introduce
the following definition (which is motivated by the soon to
follow Lemma~\ref{lem:breakerStaysConstant}).

\begin{definition}\label{def:Q}
Let $g^{(\cdot)}: \{0,1\}^m \to \{0,1\}^{m*}$ be an $r$-query normalized oracle
construction.
For $f: \{0,1\}^n \to \{0,1\}^n$, $y^*\in \{0,1\}^m$, 
and $w \in \{0,1\}^m$, 
the set $Q_{f,y^*,w}$ contains all pairs $(x^*,v^*)$ with the 
following properties:
\begin{enumerate}
\item[(a)] $g^{(f^*)}(v^*) = w$
\item[(b)] $x^* \in \QueryX(g,v^*,f^{*})$, i.e., 
$g^{(f^*)}(v^*)$ queries $x^*$
\item[(c)] $\SafeToAnswer(w,\QueryY(g,v^*,f^*))$,
\end{enumerate}
where $f^{*} = f_{(x^*,y^*)}$.
\end{definition}

We will prove the next lemma in Section~\ref{sec:concentration}
(some intuition on why this is true can 
be found in Section~\ref{sec:intuition}).
It states that with very high probability over the choice of $f$,
the set $Q_{f,y^*,w}$ is small.

\begin{lemma}\label{lem:concentration}
Let $g^{(\cdot)}: \{0,1\}^m \to \{0,1\}^{m*}$ be an $r$-query normalized oracle
construction, $r \leq \frac{n}{100 \log(n)}$.
For all ($w, y^*)$ we have
\begin{align}
\Pr_{f \leftarrow \cP}\bigl[|Q_{f,w,y^*}| > 2^{\frac{n}{10}}\bigr] 
< 2^{-2^{\frac{n}{100r}}}\;.
\end{align}
\end{lemma}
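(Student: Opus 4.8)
The plan is to bound $|Q_{f,w,y^*}|$ by writing it as a sum of indicator random variables (over the randomness of $f \leftarrow \cP$) and then applying a concentration inequality for low-degree polynomials. First I would fix $(w,y^*)$ and, for each candidate pair $(x^*,v^*) \in \{0,1\}^n \times \{0,1\}^m$, introduce the indicator $Z_{x^*,v^*}$ which is $1$ exactly when conditions (a)--(c) of Definition~\ref{def:Q} hold, so that $|Q_{f,w,y^*}| = \sum_{x^*,v^*} Z_{x^*,v^*}$. The key point is that each $Z_{x^*,v^*}$ depends on $f$ only through the finitely many values $f(x)$ for $x$ ranging over the at most $r$ queries that $g^{(f^*)}(v^*)$ makes; since $g$ is $r$-query normalized and $f^* = f_{(x^*,y^*)}$ agrees with $f$ off $x^*$, this is a function of at most $r$ evaluations of $f$. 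Hence $|Q_{f,w,y^*}|$ is a polynomial of degree at most $r$ in the ``atoms'' describing $f$, and one expects a Kim--Vu / \cite{Holens11}-style bound to apply.

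The heart of the argument is to control the \emph{expectation} $\E_{f\leftarrow\cP}[\,|Q_{f,w,y^*}|\,]$ and the relevant partial-derivative expectations, and this is exactly where the $\SafeToAnswer$ check earns its keep. For a fixed $(x^*,v^*)$, the probability over $f\leftarrow\cP$ that $g^{(f^*)}(v^*)=w$ and $x^*$ is queried is, up to the small distortion coming from conditioning on $f(x^*)$ (which is harmless since $\cP$ versus $\cP$-conditioned-on-one-value differ negligibly for our purposes), essentially $\Pr_{f'\leftarrow\cP,v'}[g^{(f')}(v') = w \mid \text{all queried } y\text{'s seen}]$-type quantities — but condition (c) forces precisely that every such conditional probability (for $B$ equal to the full query-answer set) is at most $2^{-m+n/30}$. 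Summing over the $2^n \cdot 2^m$ pairs $(x^*,v^*)$, and using that only $v^*$'s that actually produce $w$ contribute, gives $\E[\,|Q_{f,w,y^*}|\,] \le 2^{n} \cdot 2^{m} \cdot 2^{-m+n/30} \cdot (\text{correction}) \approx 2^{n/30 + n}$, which after a more careful accounting of how many $x^*$ can be ``freshly queried'' for a given $v^*$ should come out well below $2^{n/10}$. One has to be careful here: naively the $2^n$ factor from choosing $x^*$ is dangerous, so the real argument must exploit that for each $v^*$ contributing, the number of admissible $x^*$ (those that are genuinely queried by $g^{(f^*)}(v^*)$) is at most $r$, killing the $2^n$ and leaving $\E[\,|Q_{f,w,y^*}|\,]$ comfortably small.

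Once the expectation (and the analogous bounds on expectations of all order-$j$ partial derivatives, $1 \le j \le r$) are shown to be at most something like $2^{n/30}$ — which again follows from applying the $\SafeToAnswer$ bound at the sub-configuration $B$ of size $r-j$ — I would invoke the polynomial concentration inequality from \cite{Holens11} with degree $r \le \frac{n}{100\log n}$ and deviation threshold $2^{n/10}$. The bound there is of the form $\Pr[\,|Q| > \mathbb{E}[Q] + \lambda\,] \le \exp(-\Omega(\lambda / (\text{something})^{1/r} \cdot \text{poly}))$-type, and a routine computation with $\lambda \approx 2^{n/10}$, degree $r$, and the derivative bounds being $\le 2^{n/30} \ll 2^{n/10}$ yields a failure probability below $2^{-2^{n/(100r)}}$ as claimed. \textbf{The main obstacle} I anticipate is the bookkeeping in the expectation bound: making precise the claim that conditioning on $f^* = f_{(x^*,y^*)}$ only perturbs the relevant probabilities by a negligible multiplicative factor relative to the clean $\Pr_{f'\leftarrow\cP,v'}[\cdots \mid B \subseteq \QueryY]$ that $\SafeToAnswer$ controls, and simultaneously ensuring the sum over $x^*$ does not blow up — i.e.\ verifying that the ``$r$ queried points'' argument correctly replaces the $2^n$ from the choice of $x^*$. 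Getting the partial-derivative expectation bounds to line up with the exponent $n/30$ across all orders $j$, so that the concentration bound's final exponent is $2^{n/(100r)}$ and not something weaker, is the other delicate point, and is presumably why the statement restricts to $r \le \frac{n}{100\log n}$ (so that $2^{n/(100r)}$ is super-polynomial and the degree-$r$ losses in the Kim--Vu bound remain affordable).
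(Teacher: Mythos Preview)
Your proposal is correct and follows essentially the same route as the paper: upper-bound $|Q_{f,w,y^*}|$ by a degree-$r$ polynomial in the indicator variables $F_{(x,y)} = \mathbb{1}[f(x)=y]$, use the $\SafeToAnswer$ condition to bound all partial-derivative expectations (under the product distribution) by $2^{2n/30}$, and then apply the almost-independent Kim--Vu bound from \cite{Holens11}. The paper's bookkeeping is slightly cleaner than yours in that it builds the polynomial by enumerating over answer-tuples $(y_1,\ldots,y_r,v)$ rather than over $(x^*,v^*)$, which makes your worrisome $2^n$ factor vanish automatically (since $x^*$ is then determined as the query position whose answer is $y^*$, and the corresponding variable is simply dropped from the monomial).
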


Fix now some permutation $f$, some $y^* \in \{0,1\}^n$ and some
$w \in \{0,1\}^m$.
Compare runs of $\BreakOW^{(f)}(w)$ and 
$\BreakOW^{(f_{(x^*,y^*)})}(w)$ for a random element 
$x^* \in \{0,1\}^n$.
The next lemma shows that the result of these two runs is
equal with high probability in case $|Q_{f,y^*,w}|$ is small.

\begin{lemma}\label{lem:breakerStaysConstant}
Fix $f$, $y^*$, $w$.
If $|Q_{f,y^*,w}| \leq 2^{\frac{n}{10}}$, then
\begin{align}
\Pr_{x^*}[\BreakOW^{(f)}(w) \neq \BreakOW^{(f^*)}(w)]
\leq
2^{-\frac{4n}{5}}\,
\end{align}
where $f^* = f_{(x^*,y^*)}$.
\end{lemma}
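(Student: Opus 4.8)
The plan is to compare the executions of $\BreakOW^{(f)}(w)$ and $\BreakOW^{(f^*)}(w)$ step by step, where $f^* = f_{(x^*,y^*)}$. Both runs enumerate $v \in \{0,1\}^m$ in the same lexicographic order and, for each $v$, they check (i) whether $g^{(\cdot)}(v) = w$ and (ii) whether $\SafeToAnswer(w, \QueryY(g,v,\cdot))$ returns true. Since $\SafeToAnswer$ does not depend on the underlying function at all, the only way the two runs can diverge is if, for some $v$ encountered before the one eventually returned, the \emph{evaluation} $g^{(f)}(v)$ differs from $g^{(f^*)}(v)$ — either in whether the output equals $w$, or in the multiset of answers $\QueryY$ fed to $\SafeToAnswer$. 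I would argue that any such $v$ witnessing a divergence must have $x^* \in \QueryX(g, v, f)$ or $x^* \in \QueryX(g, v, f^*)$: if $g^{(f)}(v)$ never queries $x^*$ and $g^{(f^*)}(v)$ never queries $x^*$, then (since $f$ and $f^*$ agree off $x^*$) the two evaluations are literally identical. This is a standard ``oracle-agreement'' observation and should be straightforward.

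Next I would bound the number of ``dangerous'' values of $x^*$. There are two sub-cases. First, $x^*$ could be queried during $g^{(f)}(v)$ for the particular $v^\star$ that $\BreakOW^{(f)}(w)$ returns (if it returns some $v^\star \neq \bot$); but $g^{(f)}(v^\star)$ makes at most $r$ queries, so this rules out at most $r$ values of $x^*$. Second, $x^*$ could be such that the \emph{modified} evaluation $g^{(f^*)}(v)$ both queries $x^*$ and outputs $w$ and passes $\SafeToAnswer$ — but by Definition~\ref{def:Q} every such pair $(x^*, v)$ lies in $Q_{f, y^*, w}$, so there are at most $|Q_{f,y^*,w}| \le 2^{n/10}$ such $x^*$. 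The subtlety is to make sure that the case where the $\SafeToAnswer$-check changes on an earlier $v$ (rather than the returned one) is also covered: here one observes that if $g^{(f)}(v)$ already equals $w$ and is returned by $\BreakOW^{(f)}$, we are in the first sub-case; and if $g^{(f^*)}(v) = w$ for an earlier $v$ that would be returned by $\BreakOW^{(f^*)}$, then either it passes $\SafeToAnswer$ — landing in $Q_{f,y^*,w}$ — or it fails, in which case it cannot cause $\BreakOW^{(f^*)}$ to return $v$. I would organize the argument so that the union of bad $x^*$ has size at most $r + 2 \cdot 2^{n/10}$, say, and this is at most $2^{n/10 + 1}$ for $n$ large (using $r \le n$).

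Putting the counting together, $\Pr_{x^*}[\BreakOW^{(f)}(w) \neq \BreakOW^{(f^*)}(w)] \le (r + 2\cdot 2^{n/10})/2^n \le 2^{-4n/5}$ for sufficiently large $n$ (the constant $30$ in $\SafeToAnswer$ and the exponent $n/10$ in the hypothesis give plenty of slack). The main obstacle I anticipate is the careful bookkeeping of \emph{which} $v$ in the enumeration is responsible for a divergence and in which of the two runs — one must handle the asymmetry that $v$ might be ``queried by $f^*$ but not by $f$'' versus the reverse, and that $\SafeToAnswer$ is evaluated on $\QueryY(g,v,f)$ in one run and $\QueryY(g,v,f^*)$ in the other. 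Once one fixes the convention that a divergence at $v$ forces $x^* \in \QueryX(g,v,f)\cup\QueryX(g,v,f^*)$ and then splits according to whether the offending evaluation is the $f$-run (contributing $\le r$) or the $f^*$-run (contributing $\le |Q_{f,y^*,w}|$ via Definition~\ref{def:Q}, plus possibly $\le r$ more for queries of the returned preimage), the rest is a short union bound.
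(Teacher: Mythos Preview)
Your proposal is correct and follows essentially the same approach as the paper: bound the bad $x^*$'s by the union of $\QueryX(g,v^\star,f)$ (size $\le r$) and the $x^*$-projection of $Q_{f,y^*,w}$ (size $\le 2^{n/10}$), then divide by $2^n$. The paper organizes the case split slightly more cleanly---it compares the two returned values $v=\BreakOW^{(f)}(w)$ and $v^*=\BreakOW^{(f^*)}(w)$ directly and splits on which one occurs earlier in the enumeration, rather than tracking a ``first point of divergence''---which lets it avoid the intermediate bookkeeping you anticipate; but the content of the argument and the resulting bound are the same (the paper's loose $2^{n/5}$ in place of your $r+2\cdot 2^{n/10}$ is just generosity with constants).
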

\begin{proof}
Let $v$ be the result of $\BreakOW^{(f)}(w)$, 
and $v^*$ the result of $\BreakOW^{(f^*)}(w)$.
We distinguish two cases.

First, suppose that $v^* = \bot$ or that 
$v^*$ occurs in the enumeration of $\BreakOW$
after $v$.
This can only happen if $x^* \in \QueryX(g,v,f)$,
because if not, $\BreakOW^{(f^*)}(w)$ will behave 
exactly the same in the iteration of $v$, and so it must
also return $v$.

Second, suppose that $v = \bot$ or that
$v$ occurs in the enumeration of $\BreakOW$
after $v^*$.
We claim that in this case $(x^*,v^*) \in Q_{f,y^*,w}$.
Clearly, conditions (a) and (c) in Definition \ref{def:Q}
must hold, as otherwise $\BreakOW^{(f^*)}(w)$ will not output 
$v^*$.
Condition (b) must also hold.
Otherwise we have that $g^{(f)}(v^*) = w$
(because of (a) and the fact that $x^*$ has not been queried)
and $\QueryY(g,v^*,f) = \QueryY(g,v^*,f^*)$.
This would imply that
$\SafeToAnswer(w,\QueryY(g,v^*,f)) = 
\SafeToAnswer(w,\QueryY(g,v^*,f^*))$,
and so we see that if (b) would not hold, $\BreakOW^{(f)}(w) = v^*$.

Since the union of the sets $\QueryX(g,v,f)$
and $Q_{f,y^*,w}$ has fewer than $2^{\frac{n}{5}}$
elements the result follows.
\end{proof}

Now we can show that $\BreakOW$ usually does not help 
to invert $f$.
\begin{lemma}\label{lem:breakerHelpsNothing}
Let $g^{(\cdot)}: \{0,1\}^m \to \{0,1\}^{m*}$ be an $r$-query 
normalized oracle construction, $r < \frac{n}{100\log(2n+m)}$.
Let $A^{f,\BreakOW}$ be an arbitrary algorithm making at most $2^{\frac{n}{20}}$
queries to $f$ and to $\BreakOW$.
Then, the probability that $A$ inverts $f(x)$ 
is at most
\begin{align}
\Pr_{f\leftarrow\cP,x,A}
[\text{$A^{f,\BreakOW}(f(x))$ inverts $f$}] \leq 
2^{-\frac{n}{30}}\;.
\end{align}
\end{lemma}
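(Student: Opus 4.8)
The plan is to combine Lemma~\ref{lem:concentration} and Lemma~\ref{lem:breakerStaysConstant} with the standard ``re-randomize one input'' argument of \cite{HaiHol09}. Fix $f\leftarrow\cP$ and a target $x_0$, and write $y_0 = f(x_0)$. Consider running $A^{f,\BreakOW}(y_0)$. The key idea is to compare this run with a second run in which we plant $y_0$ at a fresh random location: pick $x^*\leftarrow\{0,1\}^n$, set $f^* = f_{(x^*,y_0)}$, and run $A^{f^*,\BreakOW^{(f^*)}}(y_0)$, where $\BreakOW^{(f^*)}$ is $\BreakOW$ but using $f^*$ as its oracle (both for the exhaustive search and for the conditioning inside $\SafeToAnswer$ --- though recall $\SafeToAnswer$ itself does not depend on the oracle). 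The point is that from $A$'s perspective the pair $(f,x_0)$ and the pair $(f^*,x^*)$ are identically distributed: $f^*$ is still a uniformly random permutation and $y_0 = f^*(x^*)$ with $x^*$ uniform. Hence if $A^{f,\BreakOW}(y_0)$ outputs $x_0$ with probability $\geq 2^{-n/30}$, then $A^{f^*,\BreakOW^{(f^*)}}(y_0)$ outputs $x^*$ with the same probability, and in particular both runs output \emph{something} with comparable probability.

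Next I would argue that the two runs of $A$ are almost always identical. They can differ only if some oracle answer differs between the $f$-run and the $f^*$-run. A query to $f$ differs only if $A$ queries the point $x^*$; since $x^*$ is uniform and $A$ makes at most $2^{n/20}$ queries, this happens with probability at most $2^{n/20}\cdot 2^{-n} = 2^{-19n/20}$. A query $\BreakOW(w)$ differs only if $\BreakOW^{(f)}(w)\neq\BreakOW^{(f^*)}(w)$. Here is where Lemma~\ref{lem:breakerStaysConstant} enters: conditioned on $|Q_{f,y_0,w}|\leq 2^{n/10}$, for each fixed $w$ this happens with probability at most $2^{-4n/5}$ over $x^*$. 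By Lemma~\ref{lem:concentration} (applicable since $r < \frac{n}{100\log(2n+m)} \leq \frac{n}{100\log n}$), the bad event $|Q_{f,y_0,w}| > 2^{n/10}$ has probability at most $2^{-2^{n/(100r)}}$ over $f$; I would union-bound this over all $w\in\{0,1\}^m$ and all $y^*\in\{0,1\}^n$ (at most $2^{m+n}$ values, swamped by $2^{-2^{n/(100r)}}$ because $r < \frac{n}{100\log(2n+m)}$ makes $2^{n/(100r)} > 2n+m$), so that with probability $\geq 1 - 2^{-2^{n/(100r)}+m+n}$ over the choice of $f$, \emph{every} set $Q_{f,y^*,w}$ is small simultaneously. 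On that good event, a union bound over the $\leq 2^{n/20}$ calls $A$ makes to $\BreakOW$ gives total probability $\leq 2^{n/20}\cdot 2^{-4n/5} = 2^{-3n/4}$ that any $\BreakOW$-answer differs.

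Putting it together: except on an $f$-event of probability $\leq 2^{-2^{n/(100r)}+m+n}$, the $f$-run and the $f^*$-run of $A$ produce the same output transcript with probability $\geq 1 - 2^{-19n/20} - 2^{-3n/4} \geq 1 - 2^{-n/2}$ over $(x^*,\text{coins of }A)$. If $A^{f,\BreakOW}(y_0)$ inverted $f$ (i.e.\ output a preimage of $y_0$ under $f$) with probability $q$ over $(f,x_0,A)$, then on the good $f$-event the $f^*$-run outputs a preimage of $y_0$ under $f^*$ with probability at least $q - 2^{-n/2}$; but the unique preimages of $y_0$ under $f^*$ are $x^*$ together with $f^{-1}(y_0)$ wherever $f$ put it --- and since $x^*$ is uniform and the run almost never queries $x^*$, the run's output equals $x^*$ only with probability $\leq 2^{-n/20} + 2^{-n/2}$ (it would have to "guess" $x^*$). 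The old preimage $f^{-1}(y_0)$ has been overwritten, so it is no longer a valid answer. Hence $q \leq 2^{-n/20} + 2^{-n/2} + 2^{-n/2} + 2^{-2^{n/(100r)}+m+n} \leq 2^{-n/30}$ for large $n$, which is the claim. I expect the main obstacle to be the bookkeeping in this last step --- cleanly characterizing which outputs of the $f^*$-run count as "inverting $f^*$", and ruling out that $A$ effectively recovers $x^*$ through its $\BreakOW$ queries rather than its $f$ queries; one must check that $\BreakOW^{(f^*)}$ never hands $A$ anything that pins down $x^*$ beyond the $2^{-n/20}$-probability event already accounted for, which is exactly what the smallness of $Q_{f,y^*,w}$ and Lemma~\ref{lem:breakerStaysConstant} are designed to ensure.
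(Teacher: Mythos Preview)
Your closeness argument (combining Lemma~\ref{lem:concentration} with Lemma~\ref{lem:breakerStaysConstant} and union-bounding over all $(w,y^*)$ and over $A$'s queries) matches the paper and is fine. The gap is in the final symmetry step.

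You assert that ``$(f,x_0)$ and $(f^*,x^*)$ are identically distributed: $f^*$ is still a uniformly random permutation.'' This is false for the modification $f^* = f_{(x^*,y_0)}$ used here: $f^*$ changes only the value at $x^*$, so when $x^*\neq x_0$ both $x_0$ and $x^*$ map to $y_0$ and $f^*$ is \emph{not} a permutation. Consequently your later claim that ``the old preimage $f^{-1}(y_0)$ has been overwritten, so it is no longer a valid answer'' is also false: $f^*(x_0)=f(x_0)=y_0$, so $x_0$ remains a perfectly valid preimage of $y_0$ under $f^*$. Thus the $f^*$-run outputting $x_0$ yields no contradiction, and your bound on $q$ does not follow.

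The paper repairs this with a different symmetry: it observes that the triples $(f^*,y_0,x_0)$ and $(f^*,y_0,x^*)$ are identically distributed (given $f^*$ and $y_0$, the two preimages of $y_0$ are exchangeable), so $\Pr[A^{(\BreakOW^*,f^*)}(y_0)=x_0]=\Pr[A^{(\BreakOW^*,f^*)}(y_0)=x^*]$. It then applies the closeness bound \emph{twice}: once to pass from $(\BreakOW,f)$ to $(\BreakOW^*,f^*)$, use the symmetry to swap $x_0$ for $x^*$, and once more to pass back to $(\BreakOW,f)$. At that point $\Pr[A^{(\BreakOW,f)}(y_0)=x^*]=2^{-n}$ because $x^*$ is uniform and independent of $(f,y_0,A)$, and the chain gives the desired $2^{-n/30}$ bound directly. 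Your ``guessing $x^*$'' heuristic is on the right track but needs this second application of closeness to be made rigorous.
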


\begin{proof}
First, because $f$ is picked from the set of permutations $\cP$, 
we see that
\begin{align}
\Pr_{x,f\leftarrow\cP,A}[\text{$A^{(\BreakOW,f)}(f(x))$ inverts $f(x)$}] 
=
\Pr_{x,f\leftarrow\cP,A}[A^{(\BreakOW,f)}(f(x)) =x]
\end{align}
Fix now an arbitrary function $f$.
In case $f$ is such that for all pairs $(w,y^*)$ 
the bound $|Q_{f,w,y^*}| \leq 2^{\frac{n}{10}}$ holds, we get
for any~$x$ and any fixed randomness of~$A$ 
\begin{align}
\Pr_{x^*}[A^{(\BreakOW,f)}(f(x)) \neq A^{(\BreakOW^*,f^*)}(f(x))] \leq 
2^{\frac{n}{20}} 2^{-\frac{4n}{5}}
< 2^{-\frac{n}{20}}
\end{align}
where $f^* = f_{(x^*,f(x))}$, $\BreakOW = \BreakOW^{(f)}$,
and $\BreakOW^* = \BreakOW^{(f^*)}$.
This holds because any of the $2^{\frac{n}{20}}$ calls to either oracle
will return the same answer with probability $2^{-\frac{4n}{5}}$ 
(using Lemma~\ref{lem:breakerStaysConstant} for calls to $\BreakOW$,
for calls to $f$ this is obvious).

We can also pick $x$ and $f$ at random, then we get
\begin{align}
&\Pr_{f,x,x^*}[A^{(\BreakOW,f)}(f(x)) \neq A^{(\BreakOW^*,f^*)}(f(x))] \nonumber\\
&\qquad \leq \Pr_{f}[\exists (w,y^*): |Q_{f,w,y^*}| > 2^{\frac{n}{10}}] + 
2^{-\frac{n}{20}}\nonumber\\
&\qquad \leq 2^{m+n-2^{\frac{n}{100r}}} + 2^{-\frac{n}{20}} < 
2^{-\frac{n}{20}+1}\;,\label{eq:13}
\end{align}
where we applied Lemma~\ref{lem:concentration}.

Still fixing the randomness of $A$, we see that
\begin{align}
\Pr_{f,x}[A^{(\BreakOW,f)}(f(x)) =x]
&\leq
\Pr_{f,x^*,x}[A^{(\BreakOW^*,f^*)}(f(x)) =x] + 2^{-\frac{n}{20}+1}\label{eq:21}\\
&=
\Pr_{f,x^*,x}[A^{(\BreakOW^*,f^*)}(f(x)) =x^*] + 2^{-\frac{n}{20}+1}\label{eq:11}\\
&\leq
\Pr_{f,x^*,x}[A^{(\BreakOW,f)}(f(x)) =x^*] + 2^{-\frac{n}{20}+2}\label{eq:12}\\
&=
2^{-n}+2^{-\frac{n}{20}+2}< 2^{-\frac{n}{30}}\;,
\end{align}
where we get (\ref{eq:11}) because the triples $(f^*, f(x), x)$ and
$(f^*,f(x),x^*)$ have exactly the same distribution.
We used (\ref{eq:13}) to get (\ref{eq:21}) and (\ref{eq:12}).

Since this holds for each random choice $A$ can make, 
it must also hold overall.
\end{proof}

\subsection{The distinguishing oracle}

Oracle $\BreakOW$ described above works well in case
$p(g) \geq \frac12$.
Therefore, we now concentrate on the case $p(g) \leq \frac12$.
In this case, there are elements $y_1,\ldots,y_b$ such that
conditioned on those occuring as outputs of $f$,
some elements $w$ are much more likely than others 
(in fact, on a random evaluation we have probability at least
 $\frac12$ that a subset of the $y$'s produced satisfies this).
Thus, it is not too far fetched to hope that if 
$f$ is a function $f: \{0,1\}^n \to \cY$ for 
some set $\cY \subseteq \{0,1\}^n$ which is small, 
then often $g^{(f)}(v)$ will
be one of few possible values.  
Formally, we can prove the following lemma.
\begin{lemma}\label{lem:largeP}
Let $g^{(\cdot)}: \{0,1\}^m \to \{0,1\}^{m*}$ be an 
$r$-query normalized oracle
construction with $p(g) \leq \frac12$, $\frac{n}{1000r} \in \mathbb{N}$.
There exists $\cY \subseteq \{0,1\}^n$
of size $|\cY| = 2^{\frac{n}{100r}}$ 
and a set $W \subseteq \{0,1\}^m$
of size $|W| \leq 2^{m - \frac{n}{100}}$ such that 
\begin{align}\label{eq:16}
\Pr_{{f \leftarrow \cF(\cY)}\atop{ v\leftarrow\{0,1\}^m}}
[g^{(f)}(v) \in W] \geq \frac{1}{2} - r^2 2^{-\frac{n}{100r}}
\end{align}
\end{lemma}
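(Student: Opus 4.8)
The plan is to exploit the hypothesis $p(g) \le \frac12$ directly: by definition of $p(g)$, on at least half of the random pairs $(f,v)$ with $f \leftarrow \cP$, the procedure $\SafeToAnswer(g^{(f)}(v), \QueryY(g,v,f))$ returns \emph{false}, meaning there is some subset $B$ of the $r$ answers seen in the evaluation such that $\Pr_{f'\leftarrow\cP, v'}[g^{(f')}(v') = w \mid B \subseteq \QueryY(g,v',f')] \ge 2^{-m+n/30}$, where $w = g^{(f)}(v)$. The first step is to pick $\cY \subseteq \{0,1\}^n$ of size $2^{n/100r}$ uniformly at random (or by an averaging/probabilistic argument) so that the behaviour of $g^{(f)}$ with $f \leftarrow \cF(\cY)$ is close to its behaviour with $f \leftarrow \cP$. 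Concretely, since $g(v)$ makes only $r$ queries and $\cY$ is a random subset of the appropriate size, the $r$ query answers of $g^{(f)}(v)$ for $f\leftarrow\cF(\cY)$ and the $r$ query answers for $f\leftarrow\cP_n$ can be coupled so that they differ only with probability $\le r^2 2^{-n/100r}$ (this accounts for the $-r^2 2^{-n/100r}$ slack in \eqref{eq:16}); the bound comes from the chance that two of the $r$ images collide under $\cF(\cY)$ but not under $\cP$, or more precisely that the distribution on $r$-tuples of distinct images differs.

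The second and main step is to build $W$. For each pair $(B, w)$ with $B$ a set of at most $r$ elements of $\cY$ and $w \in \{0,1\}^m$, say $(B,w)$ is \emph{heavy} if $\Pr_{f'\leftarrow\cP, v'}[g^{(f')}(v') = w \mid B \subseteq \QueryY(g,v',f')] \ge 2^{-m+n/30}$. Define
\begin{align}
W := \{ w : \exists B \subseteq \cY,\ |B| \le r,\ (B,w) \text{ heavy} \}.
\end{align}
I would bound $|W|$ as follows: fix $B \subseteq \cY$. For each fixed $B$, the function $w \mapsto \Pr_{f'\leftarrow\cP,v'}[g^{(f')}(v')=w \mid B \subseteq \QueryY]$ is a probability distribution over $w$, so the number of $w$ for which it exceeds $2^{-m+n/30}$ is at most $2^{m-n/30}$. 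The number of choices of $B$ is at most $\sum_{j\le r}\binom{|\cY|}{j} \le (2^{n/100r})^{r+1} \le 2^{n/50}$ (using $r \ge 1$ and $n/1000r \in \mathbb N$). Hence $|W| \le 2^{n/50} \cdot 2^{m-n/30} = 2^{m - n/30 + n/50} \le 2^{m - n/100}$, as desired — here I would double-check the arithmetic $n/30 - n/50 = n/75 \ge n/100$, which holds.

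The final step ties the two together. When $f \leftarrow \cP$ and $v$ uniform, with probability $\ge \frac12$ the evaluation $g^{(f)}(v)$ produces answers $Q = \QueryY(g,v,f)$ and an output $w$ such that some $B \subseteq Q$ makes $(B,w)$ heavy. Replacing $f \leftarrow \cP$ by $f \leftarrow \cF(\cY)$ and using the coupling of the first step, with probability $\ge \frac12 - r^2 2^{-n/100r}$ the evaluation $g^{(f)}(v)$ still produces an output $w$ together with a subset $B$ of the (now $\cY$-valued) answers such that $(B,w)$ is heavy; since $B \subseteq \cY$ by construction, this $w$ lies in $W$. This gives \eqref{eq:16}. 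The main obstacle I anticipate is making the coupling in step one fully rigorous: one must argue that the distribution of the tuple $(\QueryX(g,v,f), \QueryY(g,v,f), g^{(f)}(v))$ is statistically close whether $f$ is drawn from $\cP_n$ or from $\cF(\cY)$, and that the notion of "heavy'' (defined via $f'\leftarrow\cP$) transfers correctly — i.e. that an event witnessed under $\cP$ still occurs under $\cF(\cY)$ with only the claimed loss. The delicate point is that heaviness is defined with $f'\leftarrow\cP$ but we evaluate $g$ with $f\leftarrow\cF(\cY)$; one needs that a $B$ witnessing failure of $\SafeToAnswer$ for the $\cP$-evaluation still appears as a subset of the observed $\cY$-answers, which is exactly what the coupling buys, provided $|\cY|$ is not so small that collisions among the $r$ images become likely — and this is why the size $2^{n/100r}$ (superpolynomial precisely when $r \in o(n/\log n)$) is the right threshold.
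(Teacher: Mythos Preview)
Your proposal is correct and follows essentially the same route as the paper. The paper also picks $\cY$ uniformly at random, defines $W$ as the set of $w$ witnessed heavy by some subset of $\cY$, and bounds $|W|$ by the same counting argument (their arithmetic is $2^{m-n/30}\cdot |\cY|^r \cdot 2^r$ instead of your $2^{m-n/30}\cdot |\cY|^{r+1}$, both landing below $2^{m-n/100}$). The ``coupling'' you flag as the main obstacle is exactly the paper's key step: they observe that, averaging over the random choice of $\cY$, the distribution of the $r$ answers under $f\leftarrow\cF(\cY)$ \emph{conditioned on those answers being pairwise distinct} is identical to their distribution under $f\leftarrow\cP$, and the conditioning costs at most $r^2/|\cY|$ by a union bound on collisions; this immediately transfers the event $\lnot\SafeToAnswer$ and gives $\Pr_{\cY,f,v}[g^{(f)}(v)\in W]\ge(1-r^2/|\cY|)(1-p(g))$, after which a fixed $\cY$ is extracted by averaging.
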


\begin{proof}
We pick $\cY \subseteq \{0,1\}^n$ of size $2^{\frac{n}{100r}}$ uniformly
at random, and then set
\begin{align}
W = \Bigl\{w \Bigm| \exists Q\subseteq \cY: |Q| = r \land \lnot 
\SafeToAnswer(w,Q)
\Bigr\}\;.
\end{align}

We start by showing that $|W| \leq 2^{m-\frac{n}{100}}$.
There are
fewer than $(|Y|)^{r} = 2^{\frac{n}{100}}$ subsets $Q \subseteq \cY$
of size $|Q| = r$, and for each of them,
$\SafeToAnswer$ considers $2^{r} \leq 2^{\frac{n}{100}}$ subsets~$B$.
For each $B$, there can be at most $2^{m-\frac{n}{30}}$ elements~$w$
which have probability at least $2^{-m+\frac{n}{30}}$ conditioned
on $B \subseteq \QueryY(g,v,f')$.
Thus, in total there can be at most $2^{m-\frac{n}{30}+\frac{n}{100}+\frac{n}{100}} \leq 2^{m-\frac{n}{100}}$
elements in $W$.

To see (\ref{eq:16}), we note first that
\begin{align}\label{eq:9}
\Pr_{{\cY,f \leftarrow \cF(\cY)}\atop{ v\leftarrow\{0,1\}^m}}
[g^{(f)}(v) \in W]
=
\Pr_{{\cY, v\leftarrow\{0,1\}^m}\atop{(y_1,\ldots,y_r) \leftarrow (P(r,\cY))}}
[g'(v,y_1,\ldots,y_r) \in W]\;,
\end{align}
where the distribution $P(r,\cY)$ over $\cY^{r}$ is
the distribution of $(f(x_0),\ldots,f(x_{r-1}))$ for some fixed
pairwise disjoint values $x_0,\ldots,x_{r-1}$ and $f \leftarrow \cF(\cY)$.
It has the following two properties.
First, the probability that $P(r,\cY)$ gives $r$ pairwise
disjoint outputs is at least $1 - \frac{r^2}{|\cY|}$ (by a union bound).
Second, all tuples $(y_1,\ldots,y_r)$ in which the elements are
pairwise disjoint have the same probability when the probability
is also over the choice of $\cY$.

Thus,
\begin{align}
\!\!\!\!\!\!\!\!\Pr_{{\cY, v\leftarrow\{0,1\}^m}\atop{(y_1,\ldots,y_r) \leftarrow (P(r,\cY))}}
\!\!\!\!\!\!\!\!&[g'(v,y_1,\ldots,y_r) \in W]\\
&\geq
\Pr_{{\cY, v\leftarrow\{0,1\}^m}\atop{(y_1,\ldots,y_r) \leftarrow (P(r,\cY))}}
\bigl[|\{y_1,\ldots,y_r\}| = r\bigr] \times \\
&\qquad\qquad
\Pr_{{\cY, v\leftarrow\{0,1\}^m}\atop{(y_1,\ldots,y_r) \leftarrow (P(r,\cY))}}
\bigl[g'(v,y_1,\ldots,y_r) \in W \bigm| |\{y_1,\ldots,y_r\}| = r\bigr]\\
&\geq
\Pr_{{\cY, v\leftarrow\{0,1\}^m}\atop{(y_1,\ldots,y_r) \leftarrow (P(r,\cY))}}
\bigl[|\{y_1,\ldots,y_r\}| = r\bigr] \times \\
&\qquad\qquad
\Pr_{(y_1,\ldots,y_r)}
[g'(v,y_1,\ldots,y_r) \in W]\;,
\end{align}
where in this last probability the values $y_1,\ldots,y_r$ are picked 
uniformly without repetition.
Next, we see that
\begin{align}
 \Pr_{(y_1,\ldots,y_r)}
&\bigl[g'(v,y_1,\ldots,y_r) \in W \bigr]
\\
&\qquad \geq
 \Pr_{(y_1,\ldots,y_r)}
\bigl[\lnot \SafeToAnswer(g'(v,y_1,\ldots,y_r), \{y_1,\ldots,y_r\})]
\label{eq:23}\\
&\qquad = 1-p(g)
\end{align}
because without repetition the $y_i$  have exactly the same 
distribution as in the definition of $p(g)$.
In total,
\begin{align}\label{eq:p9}
\Pr_{{\cY,f \leftarrow \cF(\cY)}\atop{ v\leftarrow\{0,1\}^m}}
[g^{(f)}(v) \in W]
\geq
\Bigl(1-\frac{r^2}{|\cY|}\Bigr)(1- p(g))\;.
\end{align}
\end{proof}

Let now $\BreakPU(W)$ be the oracle which on input $w$ returns
$1$ if and only if $w \in W$.
The next lemma states that $\BreakPU(W)$ does not 
help significantly in inverting $f$.
This is intuitive, since it does not even depend on $f$ 
(besides the choice of~$\cY$).
Furthermore, this lemma also follows directly 
from \cite[Theorem 1]{GGKT05}.
To see this, note that we can pick
$f$ as follows: first pick any regular function 
$p: \{0,1\}^n \to \cY$ and then set $f = \pi \circ p$ for some
permutation $\pi$; by \cite[Theorem 1]{GGKT05}, $f$ is 
$2^{|\cY|^{(1/5)}}$-hard to invert even given $p$.
We provide a proof anyhow for completeness.
\begin{lemma}\label{lem:breakPUhelpsNothing}
Let $A$ be an arbitrary oracle algorithm making at most
$2^{\frac{n}{1000r}}$ queries, $|\cY| = 2^{\frac{n}{100r}}$, 
$\frac{n}{1000r} \in \bbN$.
Then,
\begin{align}
\Pr_{f\leftarrow\cF(\cY),x,A}
[\text{$A^{f,\BreakPU}(f(x))$ inverts $f$}] \leq 
2^{-\frac{n}{1000r}}
\;,
\end{align}
where $\BreakPU = \BreakPU(W)$ for an arbitrary set $W$.
\end{lemma}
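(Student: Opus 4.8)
The plan is to reduce to the fact that a random regular function $f \leftarrow \cF(\cY)$ with $|\cY| = 2^{n/(100r)}$ is hard to invert even given the oracle $\BreakPU(W)$, by observing that $\BreakPU(W)$ carries essentially no information about $f$ beyond the choice of $\cY$, which in turn reveals nothing useful for inversion. Concretely, I would first fix $\cY$ (and hence $W$, which depends only on $\cY$); it suffices to prove the bound for every fixed $\cY$ and then average. With $\cY$ fixed, the oracle $\BreakPU(W)$ is a fixed function not depending on $f$ at all, so $A^{f,\BreakPU}$ is just an oracle algorithm making at most $2^{n/(1000r)}$ queries to $f \leftarrow \cF(\cY)$ (the $\BreakPU$ queries can be answered by $A$ itself without any oracle). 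Thus the statement collapses to: a $2^{n/(1000r)}$-query algorithm inverts a random $f \leftarrow \cF(\cY)$ on a random image $f(x)$ with probability at most $2^{-n/(1000r)}$.

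The core step is then the standard compression/counting argument for regular functions, exactly in the style of \cite{GGKT05}. I would argue it directly: after $A$ has made $q \le 2^{n/(1000r)}$ queries to $f$, it knows the value of $f$ on at most $q$ points. Since $f$ is regular with image size $|\cY| = 2^{n/(100r)}$, each fiber has size $2^{n - n/(100r)}$, which is astronomically larger than $q$; conditioned on all query answers and on the target value $y_0 = f(x)$, the preimage $x$ is (essentially) uniform over a set of size at least $2^{n-n/(100r)} - q \ge 2^{n-n/(100r)-1}$. Hence the probability that the single output of $A$ equals $x$ is at most $2^{-(n - n/(100r) - 1)}$, and even summing over the (at most $q+1$) points whose $f$-value $A$ has learned, the probability that $A$ outputs \emph{any} valid preimage is at most $(q+1)\cdot 2^{-(n-n/(100r))} + $ (probability it guesses an unqueried preimage), all of which is far below $2^{-n/(1000r)}$ for the given range of $r$. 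Averaging over $\cY$ preserves the bound.

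The main obstacle — really the only subtle point — is making the conditional-uniformity claim rigorous: after $q$ adaptive queries the posterior distribution of $f$ restricted to the fiber of $y_0$ is not literally uniform, because the answers to queries on \emph{other} inputs slightly correlate with how the remaining inputs are partitioned into fibers. The clean way around this is to expose $f$ lazily and note that the fiber of $y_0$ among the $\ge 2^{n} - q$ un-queried inputs still has at least $2^{n-n/(100r)} - q$ elements, each equally likely to be the secret preimage $x$ given the view of $A$; this is a symmetry argument rather than a computation. Alternatively one invokes \cite[Theorem~1]{GGKT05} as a black box via the factorization $f = \pi \circ p$ described in the paragraph before the lemma, which already yields $2^{|\cY|^{1/5}}$-hardness given $p$ (and a fortiori given only $\cY$), immediately dominating the claimed $2^{-n/(1000r)}$ bound since $|\cY|^{1/5} = 2^{n/(500r)} \gg n/(1000r)$. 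I would present the self-contained symmetry argument as the proof and remark that it also follows from \cite{GGKT05}.
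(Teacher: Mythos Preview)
Your reduction is correct: in the lemma $\cY$ is already fixed and $W$ is arbitrary, so $\BreakPU(W)$ is a fixed oracle independent of $f$, and the statement collapses to the one-wayness of a random $f\leftarrow\cF(\cY)$ against $q=2^{n/(1000r)}$ queries. Your fallback to \cite[Theorem~1]{GGKT05} via $f=\pi\circ p$ is valid; the paper makes exactly this remark before giving its own proof. That proof, however, is neither a lazy-sampling nor a compression argument: it reuses the $f_{(x^*,f(x))}$-swapping device from Lemma~\ref{lem:breakerHelpsNothing}, first writing $\Pr[\text{inverts}]=\tfrac{2^n}{|\cY|}\Pr[A=x]$ and then bounding $\Pr[A=x]\le (2q+1)/2^n$ via the exchangeability of $x$ and a fresh random~$x^*$.

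Your self-contained argument, as written, has a real gap. You assert that conditioned on the view, $x$ is uniform over a set of size roughly $2^{n-n/(100r)}$ (one fiber minus queried points). That is only true if you also condition on all of $f$; but then whether $A$'s output lies in $f^{-1}(y_0)$ is already determined, and the bound $\Pr[A=x]\le 2\cdot|\cY|/2^n$ yields only $\Pr[\text{inverts}]\le \tfrac{2^n}{|\cY|}\cdot\tfrac{2|\cY|}{2^n}=2$, which is vacuous. Conditioning on the view alone, $f$ is still random and the correct posterior on $x$ is uniform over \emph{all} $\ge 2^n-q$ unqueried points, not one fiber. The clean direct argument instead bounds $\Pr[\text{inverts}]$ by (i) the chance some query lands in $f^{-1}(y_0)$, at most about $q/|\cY|$, plus (ii) the chance an unqueried output lands there, at most about $1/|\cY|$. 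Your expression $(q+1)\cdot 2^{-(n-n/(100r))}$ appears to be a slip for $(q+1)\cdot 2^{-n/(100r)}=(q+1)/|\cY|$; with that correction, and the fiber-uniformity detour removed, the argument goes through and gives the same $(2q+1)/|\cY|$ bound the paper obtains by swapping.
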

The proof is similar to the proof of Lemma~\ref{lem:breakerHelpsNothing}.
\begin{proof}
We first note that
\begin{align}
\Pr_{{f \leftarrow \cF(\cY)}\atop{x,A}}
[\text{$A^{f,\BreakPU}(f(x))$ inverts $f(x)$}]
=
\frac{2^n}{|\cY|}
\Pr_{{f \leftarrow \cF(\cY)}\atop{x,A}}
[A^{f,\BreakPU}(f(x)) = x]
\end{align}
because for any fixed $f(x)$, the value of $x$ is still uniform among
the $\frac{2^n}{|\cY|}$ preimages.

Now, fix any $x$ and any $f$, and let $q = 2^{\frac{n}{1000r}}$ 
be the upper bound on the number of queries by $A$.
Keeping the randomness of $A$ fixed, 
\begin{align}\label{eq:10}
\Pr_{x^*}
[A^{f,\BreakPU}(f(x)) \neq A^{f^*,\BreakPU}(f(x))] \leq \frac{q}{2^n}\;,
\end{align}
where $f^* = f_{(x^*,f(x))}$, because the output
of $A^{f^*,\BreakPU}(f(x))$ can only differ from 
$A^{f,\BreakPU}(f(x))$ in case $x^*$
is one of the elements on which $A$ queried $f$.

As in the proof of Lemma~\ref{lem:breakerHelpsNothing},
\begin{align}
\Pr_{{f\leftarrow\cF(\cY)}\atop{x,x^*}}[A^{f,\BreakPU}(f(x)) = x] &\leq
\Pr_{{f\leftarrow\cF(\cY)}\atop{x,x^*}}[A^{f^*,\BreakPU}(f(x)) = x] + \frac{q}{2^n}\\
&=
\Pr_{{f\leftarrow\cF(\cY)}\atop{x,x^*}}[A^{f^*,\BreakPU}(f(x)) = x^*] + \frac{q}{2^n}
\leq
\frac{2q+1}{2^n}\;.
\end{align}

Together,
\begin{align}
\Pr_{x\leftarrow \{0,1\}^{n},f}
[\text{$A^{f,\BreakPU}(f(x))$ inverts $f(x)$}]
\leq
\frac{2^n}{|\cY|} \frac{2q+1}{2^n} = \frac{2q+1}{|\cY|}\;.
\end{align}
\end{proof}

\subsection{Proving the main result}

The above lemmas can be used to prove Theorem~\ref{thm:main},
which we restate here for reference.

\maintheorem*

\begin{proof}
In order to get a contradiction, we assume otherwise.
Because of Theorem~\ref{thm:prgandPUOWF}, we can also
assume that we have a fully black-box reduction which gives a 
pseudouniform one-way function (which is defined in a way
analogous to Definition~\ref{def:fbb}).

Thus, suppose we have some construction $(g,A)$.
We we want to instantiate the construction
with length preserving one-way functions, where the
input and output length equals the security parameter
$k$, i.e., $n(k) := n'(k) := k$.
The construction must work for this choice by definition.

We can assume that $\frac{n(k)}{1000r(k)} \in \bbN$ 
for all but finitely many $k$, because we can increase $r(k)$ 
such that this holds and such that still $r(k) \in o(\frac{n(k)}{\log(n(k))})$.

We now make sure that our construction is normalized.
For this, we modify $g$ such that it makes exactly $r(k)$ pairwise
disjoint queries to $f$; clearly, this is no problem.

We then define
\begin{align}
{\tilde g}^{(f_k)}(v) :=
\begin{cases}
(\bot,v) & \text{if two queries of $f_k$ yield the same output}\\
g_k^{(f_k)}(v) & \text{otherwise.}
\end{cases}
\end{align}

Next, we will provide, for each $k$ seperately, two oracles
$f$ and $B = \Breaker$. We construct these oracles such that $B$ breaks the security
property of $g_k$ for all but finitely many $k$, and yet
the probability that $A^{\Breaker,f}$ inverts $f$ is negligible.

For this, we consider $p(\tilde{g}_k)$ for each $k$ seperately.
If $p(\tilde{g}_k) \geq \frac12$ we 
set $\Breaker_k$ to be $\BreakOW$.
By Lemma~\ref{lem:breakOWHelpsInvertG} we see 
that for these $k$ 
\begin{align}
\Pr_{f_k \leftarrow \cP_k,v}[\text{$\Breaker_k(\tilde{g}_k^{(f_k)}(v)))$ inverts $\tilde{g}_k^{(f_k)}$}]
\geq \frac12.
\end{align}
By Lemma~\ref{lem:breakerHelpsNothing} we also see that, if $k$ is large enough,
\begin{align}
\Pr_{f_k\leftarrow\cP_k,x,A}
[\text{$A^{f_k,\Breaker_k}(f_k(x))$ inverts $f_k$}] \leq 
2^{-\frac{n(k)}{30}}\;.
\end{align}
Note that in this case, $\tilde{g}_k$ behaves the same as $g_k$, because
no two queries to $f_k$ can output the same value.
Applying Markov's inequality, for fraction at least $\frac{1}{10}$
of the functions $f_k$ we have
\begin{align}\label{eq:19}
\Pr_{v}[\text{$\Breaker_k(g_k^{(f_k)}(v)))$ inverts $g_k^{(f_k)}$}]
&\geq \frac1{10}
\end{align}
Furthermore, for fraction at least $\frac{99}{100}$ 
of the functions $f_k$ we have
\begin{align}\label{eq:20}
\Pr_{x,A}
[\text{$A^{f_k,\Breaker_k}(f_k(x))$ inverts $f_k$}] &\leq 
100 \cdot 2^{-\frac{n(k)}{30}}\;.
\end{align}
We pick a function $f_k$ for which both (\ref{eq:19}) and (\ref{eq:20})
are satisfied.

If
 $p(\tilde{g}_k) \leq \frac12$, Lemma~\ref{lem:largeP}
gives a set  $W_k \subseteq \{0,1\}^{m(k)}$ and 
$\cY_k \subseteq \{0,1\}^{n(k)}$.
For $n$ large enough, $A$ satisfies
the requirements of Lemma~\ref{lem:breakPUhelpsNothing}, and we see that
\begin{align}\label{eq:4}
\Pr_{{f_k \leftarrow \cF(\cY_k)}\atop{A,x \leftarrow \{0,1\}^{n(k)}}}
[\text{$A^{\BreakPU(W_k),f_k}(f_k(x))$ inverts $f_k(x)$}] 
\leq 2^{-\frac{n}{1000r}},
\end{align}
which is negligible.
By Lemma~\ref{lem:largeP}
\begin{align}\label{eq:7}
\Pr_{{f_k \leftarrow \cF(\cY_k)}\atop{v\leftarrow\{0,1\}^{m(k)}}}
[\tilde{g}_k^{(f_k)}(v) \in W_k] \geq \frac{1}{4},
\end{align}
again for $k$ is large enough.
Because $\cY$ is of superpolynomial size, the probability
that $\tilde{g}_k$ outputs $(\bot,v)$ is still negligible.
Thus, we can argue
as before, and there is some choice of $f_k$ for which 
\begin{align}
\Pr_{{A,x \leftarrow \{0,1\}^{n(k)}}}
[\text{$A^{\Breaker_k,f_k}(f_k(x))$ inverts $f_k(x)$}] 
&\leq 2^{-\frac{n}{110r}}\qquad\text{, and}\\
\Pr_{{v\leftarrow\{0,1\}^{m(k)}}}
[g_k^{(f_k)}(v) \in W_k] &\geq \frac{1}{10}.
\end{align}
We fix such a choice of for $f_k$ and set $\Breaker_k := \BreakPU(W_k)$.

We conclude that while the statement analogous to (\ref{eq:1000}) holds
(for breaking the either the pseudouniformity or for inverting $g$), 
the statement (\ref{eq:1001}) fails to hold, and so we get a contradiction.
\end{proof}

\section{Proof of Lemma~\ref{lem:concentration}}
\label{sec:concentration}
In this section, we give the proof of Lemma~\ref{lem:concentration}.
However, before giving the proof, we provide some intuition
in Section~\ref{sec:intuition} (which can be skipped if desired).
\subsection{Intuition}\label{sec:intuition}
Fix $(f,y^*,w)$, and assume that $(x^*,v^*) \in Q_{f,y^*,w}$.
Consider the query-answer pairs
$\{(x_1,y_1),\ldots,(x_r,y_r)\} = \Query(g,
v^*, f_{(x^*,y^*)})$ which occur in an evaluation of 
$g^{(f_{(x^*,y^*)})}(v^*)$.
The pair $(x^*,y^*)$ must be in this set, as otherwise conditions
(a) or (b) of Definition~\ref{def:Q} would not hold,
and to simplify the discussion we make the (unrealistic)
assumption that
always $(x^*,y^*) = (x_r,y_r)$.
Now consider the set $T = \{(x_1,y_1),\ldots,(x_{r-1},y_{r-1})\}$.
Let us call $T$ an \emph{incrementor} for $|Q_{f,y^*,w}|$,
because whenever $f$ satisfies $f(x_i) = y_i$ for $i \in \{1,\ldots,r-1\}$,
the set $Q_{f,y^*,w}$ grows by $1$.\footnote{Ignoring a few
reasons why this might not be true sometimes\ldots like the
fact that $\SafeToAnswer$ might return false.}

Now, still fixing $(f,y^*,w)$, 
the total number of such ``incrementors''
for $|Q_{f,y^*,w}|$
is at most $2^{(r-1)n + \frac{n}{30}}$.
To see this, we argue that otherwise, 
(for $y_r$ being the answer of the $r$-th query 
in the evaluation)
\begin{align}\label{eq:14}
\Pr_{f'\leftarrow \cP,v'}[g^{f'}(v') = w| y_r = y^*] 
\geq 2^{-m+\frac{n}{30}}\;,
\end{align}
because any of the incrementors
survive\footnote{Formally, surviving means that 
$f(x_i) = y_i$ for all pairs $(x_i,y_i)$
in the incrementor.} the picking 
of $f$ with probability roughly\footnote{Ignoring
very slight dependence in this discussion which arises from the fact that
$f$ is picked as a permutation.} $2^{-(r-1)n}$.
Thus, if there are $2^{(r-1)n + \frac{n}{30}}$ incrementors, in expectation  
$2^{\frac{n}{30}}$ will survive the picking of $f$, 
and if we pick one\footnote{Only one incrementor with 
a fixed $v^*$ can survive with our assumptions.}
of the $2^{\frac{n}{30}}$ values $v^*$ which survived
we get an element for which $g^{f'}(v') = w$ (conditioning on $y_r = y^*$).
Now, (\ref{eq:14}) roughly contradicts
$\SafeToAnswer(w,Q)$ for $B = \{y^*\}$ (up to some issues due
to our simplifying assumption that $(x^*,y^*)$ is always $(x_r,y_r)$,
but since $r^r < 2^n$ they do not matter much).

Thus,  there are at most $2^{(r-1)n + \frac{n}{30}}$ incrementors
for $|Q_{f,y^*,w}|$, and so in expectation $|Q_{f,w,y^*}| \leq 2^{\frac{n}{30}}$.
However, we need to prove that the $|Q_{f,w,y^*}|$ is small with
(very) high probability, and not in expectation.
Luckily for us, Kim and Vu \cite{KimVu00} proved a concentration
bound which can be applied in our setting -- translated to our
setting, they show that concentration \emph{does} hold if several conditions
are given.
First, it needs to hold that all probabilities 
checked in $\SafeToAnswer$ are smaller than $2^{-m+\frac{n}{30}}$
(which is, besides Lemma~\ref{lem:largeP},
the reason that $\SafeToAnswer$ is defined 
in the way it is defined).
Second, they roughly require that $r^r < 2^n$, which holds
in our case, because we assume that $r \notin \Omega(\frac{n}{\log(n)})$.
Finally, they require that the events $f(x_1) = y_1$
and $f(x_2) = y_2$ are independent---which of course is a problem,
because this does not hold in our case.
Luckily, it turns out that this last requirement can be relaxed somewhat
using a proof technique implicit in \cite{ScSiSr95} (see
\cite{Rao08, ImpKab10}).
A proof of a Kim-Vu style concentration bound in this form
was given by the first author in \cite{Holens11}.

\subsection{The polynomial $P_{w,y^*}$}
To prove Lemma~\ref{lem:concentration}, 
we will first find a polynomial $P_{w,y^*}$ of degree $r$ in 
variables $F_{(x,y)}$ for all $x,y \in \{0,1\}^n$.
The polynomial will have the following property:
fix an arbitrary function $f: \{0,1\}^n \to \{0,1\}$, and
set the variables $F_{(x,y)}$ as follows:
\begin{align}\label{eq:5}
F_{(x,y)} = \begin{cases}
1 & \text{if $f(x) = y$} \\
0 & \text{otherwise.}
\end{cases}
\end{align}
We will see that the value of $P_{w,y^*}$ for these values (evaluated 
over $\mathbb{R}$ or $\mathbb{N}$) gives an upper bound on $|Q_{f,y^*,w}|$.
We denote this value by $P_{w,y^*}(f)$.

The polynomial $P_{w,y^*}$ is obtained by a run of
 algorithm $\BuildPolynomial(w,y^*)$.

\mbox{}\begin{mdframed}\setlength{\parindent}{0cm}%
\textbf{Algorithm} $\BuildPolynomial(w,y^*)$
\medskip

\hrule

\medskip

$P_{w,y^*} := 0$\\
 \textbf{forall} $(y_1,\ldots,y_r) \in (\{0,1\}^{n})^r$ \textbf{do}\\
\mbox\qquad \textbf{if} $\SafeToAnswer(w,\{y_1,\ldots,y_r\})$ \textbf{then}\\
\mbox\qquad\qquad\textbf{forall} $v \in \{0,1\}^m$ \textbf{do}\\
\mbox\qquad\qquad\qquad \textbf{if} $g'(v,y_1,\ldots,y_r) = w$ $\land$ $y^* \in \{y_1,\ldots,y_r\}$ \textbf{then}\\
\mbox\qquad\qquad\qquad\qquad $T := \{(x_i,y_i): 
\text{$i \in \{1,\ldots,r\}$ and $y_i \neq y^*$ and \ldots}$\\
\mbox\qquad\qquad\qquad\qquad\qquad\qquad $\text{$x_i$ is the $i$th query done by $g'(v,y_1,\ldots,y_r)$}\}
$\\
\mbox\qquad\qquad\qquad\qquad $\displaystyle P_{w,y^*} := P_{w,y^*} + \prod_{(x,y) \in T} F_{(x,y)}$\\
\textbf{return} $P_{w,y^*}$
\end{mdframed}

\medskip

For readers who did not skip the intuition, we can
connect this with Section~\ref{sec:intuition}.
The term $\prod_{(x,y)} F_{(x,y)}$ corresponds to an incrementor,
and we note that if the incrementor survives the picking, the summand
in the polynomial will evaluate to~$1$.

\begin{lemma}\label{lem:polynomialGivesABound}
For any $f$, $w$, $y^*$ we have $|Q_{f,w,y^*}| \leq P_{w,y^*}(f)$.
\end{lemma}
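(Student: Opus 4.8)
The plan is to exhibit an injection from $Q_{f,w,y^*}$ into the set of summands of $P_{w,y^*}$ that evaluate to $1$ under the assignment \eqref{eq:5}. Fix $f$, $w$, $y^*$. Given a pair $(x^*,v^*) \in Q_{f,w,y^*}$, consider the evaluation of $g^{(f^*)}(v^*)$ where $f^* = f_{(x^*,y^*)}$, and let $(y_1,\ldots,y_r)$ be the answers to the $r$ queries in this evaluation, with $x_i$ the $i$th query. By condition (b) of Definition~\ref{def:Q}, $x^*$ is among the queries, so $y^* \in \{y_1,\ldots,y_r\}$; by condition (a), $g'(v^*,y_1,\ldots,y_r) = w$; and by condition (c), $\SafeToAnswer(w,\{y_1,\ldots,y_r\})$ holds. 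Hence the inner iteration of $\BuildPolynomial$ for this tuple $(y_1,\ldots,y_r)$ and this $v^*$ fires, producing a summand $\prod_{(x,y)\in T} F_{(x,y)}$ where $T = \{(x_i,y_i) : y_i \neq y^*\}$. I would map $(x^*,v^*)$ to this summand (equivalently, to the triple $(v^*, (y_1,\ldots,y_r), T)$).

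Next I would check two things. First, that this summand evaluates to $1$: for each $(x_i,y_i)\in T$ we have $y_i \neq y^*$, so $x_i \neq x^*$ (since $g^{(f^*)}$ makes pairwise-distinct queries and $x^*$ is the one answered by $y^*$), hence $f^*(x_i) = f(x_i)$; and $f^*(x_i) = y_i$ because $y_i$ was the answer given in the evaluation. Thus $F_{(x_i,y_i)} = 1$ for all $(x,y)\in T$ under assignment \eqref{eq:5}, so the summand contributes $1$ to $P_{w,y^*}(f)$. Second, that the map is injective. The key observation is that $(x^*,v^*)$ can be recovered from the summand: $v^*$ is recorded directly; and $x^*$ is the unique query made by $g'(v^*,y_1,\ldots,y_r)$ whose answer is $y^*$ — this query exists since $y^*\in\{y_1,\ldots,y_r\}$, and it is unique since queries are pairwise distinct (if $y^*$ appeared as several $y_i$, only the one actually equal to $f^*$'s reply at $x^*$ matters, but in fact we reconstruct $x^*$ as the query position $i$ with $y_i = y^*$ that is answered by the substituted value; care is needed if $y^*$ is also a genuine value of $f$, but then condition (b) still pins down that $x^*$ is queried, and $x^*$ is determined as that query). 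Since distinct pairs in $Q_{f,w,y^*}$ give rise to distinct reconstructed data, the map is injective.

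Therefore $|Q_{f,w,y^*}|$ is at most the number of summands evaluating to $1$, which is at most the total value $P_{w,y^*}(f)$ (all summands being nonnegative, in fact $0/1$-valued), giving the claim.

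The main obstacle I anticipate is the ambiguity when $y^*$ appears more than once among $(y_1,\ldots,y_r)$, or when $y^*$ happens to coincide with a genuine value $f(x_i)$ for some queried $x_i \neq x^*$. In the idealized picture of Section~\ref{sec:intuition} one assumes $(x^*,y^*) = (x_r,y_r)$, but the honest argument must handle the general case. The resolution is that $\BuildPolynomial$ is deliberately generous: it only needs to produce \emph{some} surviving summand for each element of $Q_{f,w,y^*}$, and the definition of $T$ (excluding \emph{all} indices with $y_i = y^*$) guarantees the summand survives regardless of how many times $y^*$ occurs; injectivity only requires recovering $(x^*,v^*)$, and $v^*$ together with the tuple $(y_1,\ldots,y_r)$ determines the full query list of $g'(v^*,y_1,\ldots,y_r)$, hence determines which query was the "planted" one at $x^*$ (namely the one whose reply differs from $f$, i.e. the one forced by the substitution) — this is exactly the content one must verify carefully, and it is where a few lines of case analysis are unavoidable.
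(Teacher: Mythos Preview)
Your approach is exactly the paper's: map each $(x^*,v^*)\in Q_{f,w,y^*}$ to the loop iteration $(v^*,y_1,\ldots,y_r)$ of $\BuildPolynomial$ determined by the evaluation of $g^{(f^*)}(v^*)$, and check that the corresponding summand evaluates to~$1$. The paper in fact stops there and does not spell out injectivity of this map; you are right to raise it.

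However, your resolution of the injectivity worry is tangled. The phrase ``it is unique since queries are pairwise distinct'' conflates distinctness of the $x_i$ (which normalization guarantees) with distinctness of the $y_i$ (which is what you actually need here). And your fallback --- recover $x^*$ as ``the one whose reply differs from $f$'' --- fails precisely when $x^*$ happens to be the genuine $f$-preimage of $y^*$, since then $f^*=f$ and no answer differs.

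The clean fix is already built into the hypotheses and requires no case analysis. By Definition~\ref{def:Q} we have $w\in\{0,1\}^m$, and $g$ is $r$-query normalized; hence condition~(a), $g^{(f^*)}(v^*)=w$, forces the answers $y_1,\ldots,y_r$ to be pairwise distinct (otherwise $g^{(f^*)}(v^*)=(\bot,v^*)\neq w$). Therefore $y^*$ occurs at exactly one position among the $y_i$, and the query at that position is $x^*$. This recovers $(x^*,v^*)$ from the summand and gives injectivity immediately.
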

\begin{proof}
Pick $f$ at first, and then consider a run of BuildPolynomial.
We show that for each pair $(x^*,v^*) \in Q_{f,w,y^*}$
the procedure $\BuildPolynomial(w,y^*)$ adds a monomial to 
$P_{w,y^*}$ which evaluates to $1$ under $f$.

Fix now a pair $(x^*,v^*) \in Q_{f,w,y^*}$, and let
$(x_1,y_1),\ldots,(x_r,y_r)$ be the pairs of queries
and answers made to $f_{(x^*,y^*)}$ in an evaluation of 
$g^{(f_{(x^*,y^*)})}(v^*)$.
It must be that $x^* \in \{x_1,\ldots,x_r\}$, because 
of condition (b) in Definition~\ref{def:Q}, and so $(x^*,y^*) \in 
\{(x_1,y_1),\ldots,(x_{r},y_r)\}$.
$\SafeToAnswer(w,\{y_1,\ldots,y_r\})$ must also hold (as otherwise
$(x^*,v^*) \notin Q_{f,w,y^*}$).
Thus, when $\BuildPolynomial$ enumerates the values $(y_1,\ldots,y_r)$
and $v^*$, it adds
$\prod_{(x,y) \in T} F_{(x,y)}$ to $P_{w,y^*}$, which is $1$ for the
assignment given by $f$ to the variables
 (note that $T$ does not contain $(x^*,y^*)$).
\end{proof}

\subsection{Derivatives of $P_{w,y^*}$}
Let now $B \subset \{ F_{(x,y)} \}$ be a subset of the random
variables $F_{(x,y)}$.
For any multilinear polynomial $P$ in the variables $\{F_{(x,y)}\}$ we let
$\del_{B}P$ be the formal derivative of $P$ with 
respect to the variables in $B$.
For example,
$\del_{\{F_{(1,1)},F_{(2,2)}\}}(F_{(1,1)}F_{(2,2)}F_{(3,3)}
 + F_{(1,1)}F_{(3,3)}F_{(4,4)}) = F_{(3,3)}$.)

Let $\cF^*$ be the distribution over the variables $F_{(x,y)}$ in 
which each $F_{(x,y)}$ is $1$ with probability $\frac{1}{2^n}$ 
and $0$ otherwise, and \emph{all variables are independent}.
When we pick the variables according to this distribution, they
usually cannot have been derived from a function $f$ as in 
(\ref{eq:5}).
Nevertheless, this distribution is useful to express combinatorial
properties of our polynomials.
We denote the value of the polynomial evaluated at such a 
point $\mathbf{F}$ by $P_{w,y^*}(\mathbf{F})$.

\begin{lemma}\label{lem:derivativeLow}
For any $B \subseteq (\{0,1\}^{n})^2$ and any $(w,y^*)$:
\begin{align}\label{eq:2}
\E_{\mathbf{F}\leftarrow\cF^*}\bigl[(\del_{B}P_{w,y^*})(\mathbf{F})\bigr] 
\leq 2^{\frac{2n}{30}}.
\end{align}
\end{lemma}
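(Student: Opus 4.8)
The plan is to expand $\del_B P_{w,y^*}$ into an explicit sum of monomials, take the $\cF^*$-expectation term by term (each variable $F_{(x,y)}$ is independent with mean $2^{-n}$), and then bound the resulting weighted count of $(v,\vec y)$ pairs using exactly the conditional-probability inequality that $\SafeToAnswer$ enforces.

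First I would unwind what the derivative does. By construction $P_{w,y^*}$ is a sum of monomials $\prod_{(x,y)\in T}F_{(x,y)}$, one per pair $(v,\vec y)$ enumerated by $\BuildPolynomial$ — so with $\SafeToAnswer(w,\{\vec y\})$ true, $g'(v,\vec y)=w$, and $y^*\in\{\vec y\}$ — where $T=T(v,\vec y)$ is the set of query/answer pairs of the run whose answer differs from $y^*$. Let $S_B$ be the set of pairs indexing the variables in $B$ and $b=|S_B|$. A monomial survives differentiation iff $S_B\subseteq T$, becoming $\prod_{(x,y)\in T\setminus S_B}F_{(x,y)}$ with $\cF^*$-expectation $2^{-n(|T|-b)}$. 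Two remarks reduce the cases: since $w\in\{0,1\}^m$ and the construction is $r$-query normalized (two equal oracle answers force the output $(\bot,v)$), every contributing $\vec y$ has pairwise distinct entries, so exactly one of them equals $y^*$ and $|T|=r-1$; and unless $S_B$ consists of $b\le r-1$ pairs with distinct $x$-coordinates, distinct $y$-coordinates, and none equal to $y^*$, no $T(v,\vec y)$ can contain $S_B$ and so $\del_B P_{w,y^*}=0$. Hence I may assume the latter, in which case the set $Y_B$ of $y$-coordinates occurring in $S_B$ has size $b$ and avoids $y^*$, and
\[
\E_{\mathbf F\leftarrow\cF^*}\bigl[(\del_B P_{w,y^*})(\mathbf F)\bigr]=2^{-n(r-1-b)}\cdot N,
\]
where $N$ counts the pairs $(v,\vec y)$ with $\vec y$ having distinct entries, $g'(v,\vec y)=w$, $y^*\in\{\vec y\}$, $\SafeToAnswer(w,\{\vec y\})$, and $S_B\subseteq T(v,\vec y)$.

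Next I would bound $N$. If $N=0$ we are done, so assume $N\ge1$ and set $B':=\{y^*\}\cup Y_B$, a set of $c:=b+1$ elements contained in $\{\vec y\}$ for every contributing $\vec y$ (because $S_B\subseteq T(v,\vec y)$ forces all of $Y_B$ to occur among the answers). Dropping the $\SafeToAnswer$ and $S_B$ constraints, $N\le N'$ where $N'$ counts $(v,\vec y)$ with $\vec y$ distinct, $g'(v,\vec y)=w$, $B'\subseteq\{\vec y\}$. The key identity is that, for fixed $v$, the events ``$g^{(f')}(v)$ produces answer sequence $\vec z$'' over distinct tuples $\vec z$ partition $\cP$ — using that $g$ makes $r$ distinct queries and a permutation returns distinct answers — and each has probability $\tfrac{(2^n-r)!}{(2^n)!}\ge 2^{-nr}$; therefore
\[
N'\le 2^{m+nr}\cdot\Pr_{f'\leftarrow\cP,\,v'}\bigl[g^{(f')}(v')=w\ \wedge\ B'\subseteq\QueryY(g,v',f')\bigr].
\]
Applying $\SafeToAnswer(w,\{\vec y\})$ — true for some $\vec y\supseteq B'$ since $N\ge1$ — with the subset $B'$ gives $\Pr[g^{(f')}(v')=w\mid B'\subseteq\QueryY]<2^{-m+n/30}$, and a union bound over the $\le r^{c}$ ways of placing the $c$ elements of $B'$ among the $r$ (adaptively chosen, distinct) queries gives $\Pr[B'\subseteq\QueryY]\le(2r)^{c}2^{-nc}$. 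Combining everything,
\[
\E_{\mathbf F\leftarrow\cF^*}\bigl[(\del_B P_{w,y^*})(\mathbf F)\bigr]\le 2^{-n(r-1-b)}\cdot 2^{m+nr}\cdot 2^{-m+n/30}(2r)^{c}2^{-nc}=2^{n/30}(2r)^{b+1}\le 2^{n/30}(2r)^{r},
\]
and since $r\le\frac{n}{100\log n}$ one has $(2r)^{r}=2^{r\log(2r)}\le 2^{n/50}$, so the whole expression is at most $2^{n/30+n/50}<2^{2n/30}$.

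The routine parts are the two union bounds and the final arithmetic; the step that needs care is the middle paragraph — correctly matching the formal-derivative monomials of $P_{w,y^*}$ with the combinatorial count $N$ (keeping track of which answers equal $y^*$), and then recognizing that $N$ is governed by precisely the conditional probability that $\SafeToAnswer$ tests. In particular one must use $B'=\{y^*\}\cup Y_B$, which does not depend on the individual $\vec y$ (using all of $\{\vec y\}$ would be useless here), and must verify that $\tfrac{(2^n-r)!}{(2^n)!}$ really is the probability of a fixed distinct answer sequence — this is exactly where the $r$-query normalization (distinct queries, and distinct answers for the realizing permutations) enters. The hypothesis $w\in\{0,1\}^m$ is what kills the degenerate answer sequences with repeats; had one allowed $w=(\bot,v)$ these would reappear but contribute only a further $2^{O(r\log r)}$, still absorbed by $2^{2n/30}$.
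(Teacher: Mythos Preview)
Your proof is correct and follows essentially the same approach as the paper: both identify $B' = \{y^*\} \cup \{y : (x,y)\in B\}$, relate the expectation to a count of contributing $(v,\vec y)$ pairs via the monomial structure of $P_{w,y^*}$, and then bound that count using precisely the conditional-probability inequality that $\SafeToAnswer$ enforces together with the bound $\Pr[B'\subseteq\QueryY]\lesssim r^{|B'|}2^{-n|B'|}$. The only cosmetic difference is that the paper argues by contradiction (packaging the count in an auxiliary algorithm \textsc{UpperBound} and showing its output forces $\lnot\SafeToAnswer$), whereas you compute the bound directly; the arithmetic and the use of $r\log r \ll n$ are the same.
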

\begin{proof}
Suppose otherwise, and fix a triple $(B,w,y^*)$ 
for which (\ref{eq:2}) fails to hold.
We will derive a contradiction.

The polynomial $\del_{B}P_{w,y^*}$ is the sum of all monomials
in $P_{w,y^*}$ which contain the factor $\prod_{(x,y) \in B}F_{(x,y)}$,
but with this factor removed.
Each such summand contributes $2^{-n (r-1-|B|)}$ to the expectation
in~(\ref{eq:2}),
and so there are at least $2^{\frac{2n}{30} + n(r-1-|B|)}$ monomials
containing a factor $\prod_{(x,y) \in B}F_{(x,y)}$ in $P_{w,y^*}$.
This implies that there are at least that many monomials containing 
a factor of the form $\prod_{(x,y) \in B}F_{(*,y)}$, where $F_{(*,y)}$ 
is an arbitrary variable $F_{(x',y')}$ with $y' = y$.

The following algorithm
adds $2^{-n(r-1-|B|)}$ to a counter for each 
such monomial in $P_{w,y^*}$.
Therefore, it outputs at least $2^{\frac{2n}{30}}$.

\mbox{}\begin{mdframed}\setlength{\parindent}{0cm}%
\noindent \textbf{Algorithm} UpperBound$(B,w,y^*)$
\medskip

\hrule

\medskip

$E_{B,w,y^*} := 0$\\
$B' := \{y^*\} \cup \{y : (x, y) \in B\}$\\
 \textbf{forall} $(y_1,\ldots,y_r) \in (\{0,1\}^{n})^r$ \textbf{do}\\
\mbox\qquad \textbf{if} $B' \subseteq \{y_1,\ldots,y_r\}$ \textbf{then}\\
\mbox\qquad\qquad \textbf{if} $\SafeToAnswer(w,\{y_1,\ldots,y_r\})$ \textbf{then}\\
\mbox\qquad\qquad\qquad\textbf{forall} $v \in \{0,1\}^m$ \textbf{do}\\
\mbox\qquad\qquad\qquad\qquad \textbf{if} $g'(v,y_1,\ldots,y_r) = w$ \textbf{then}\\
\mbox\qquad\qquad\qquad\qquad\qquad $E_{B,w,y^*} := E_{B,w,y^*} + 2^{-n(r-|B'|)}$\\
\textbf{return} $E_{B,w,y^*}$
\end{mdframed}

\medskip

Consider now an arbitrary $(y_1,\ldots,y_r)$ 
for which $E_{B,w,y^*}$ gets increased in this algorithm.
We want to show that $\lnot\SafeToAnswer(w, \{y_1,\ldots,y_r\})$,
i.e., we want to show that
\begin{align}\label{eq:3} 
\Pr_{f',v'}[g^{(f')}(v') = w | B'' \subseteq \QueryY(g,v',f')] \geq 
2^{-m + \frac{n}{30}}
\end{align}
for some $B'' \subseteq \{y_1,\ldots,y_r\}$.
Of course, it suffices to show this for $B'' = B'$.

To see that (\ref{eq:3}) holds for $B'' = B'$ we compute
\begin{align}
\Pr_{f',v'}[&g^{(f')}(v') = w | B' \subseteq \QueryY(g,v',f')] \\
&=
\frac{\Pr_{f',v'}[g^{(f')}(v') = w \land B' \subseteq \QueryY(g,v',f')]}
{\Pr_{f',v'}[B' \subseteq \QueryY(g,v',f')]}\\
&=\frac{\Pr_{v',y_1,\ldots,y_r}[g'(v',y_1,\ldots,y_r) = w \land B' \subseteq \{y_1,\ldots,y_r\}]}
{\Pr_{y_1,\ldots,y_r}[B' \subseteq \{y_1,\ldots,y_r\}]}\label{eq:15}
\\&\geq
\frac{2^{(r-|B'|)n + \frac{2n}{30}} 2^{-m - rn}}
{\binom{r}{|B'|}(|B'|!)2^{-(|B'|)(n-1)}}\label{eq:8}
\\&\geq
\frac{2^{-m - |B'|n + \frac{2n}{30}}}{2^{r\log(r)} 2^{-|B'|(n-1)}} 
= 2^{-m + \frac{2n}{30} - r\log(r)-r} \geq 2^{-m + \frac{n}{30}}\;,
\end{align}
where in (\ref{eq:15}) and afterwards, $y_1,\ldots,y_r$
are picked uniformly from $\{0,1\}^n$, but without repetition.
The numerator in (\ref{eq:8}) can then be seen as follows:
first, note that the probability only decreases if one picks the $y_i$
with repetition, but additionally requires them to be different
for the event to occur.  
After that, one notices
that there must be at least $2^{(r-|B'|)n + \frac{2n}{30}}$ 
tuples $(v,y_1,\ldots,y_r)$ for
which $E_{B,w,y^*}$ gets increased in the algorithm UpperBound.
The denominator follows by noting that we can first choose 
how to make the assignment of the values in $B'$ to the elements
$(y_1,\ldots,y_r)$ (there are $\binom{r}{|B'|}(|B'|!)$ possibilities
for this), and then checking whether this assignment
occurs, which happens with probability at most $2^{-|B'|(n-1)}.$
Thus, we get that $\lnot\SafeToAnswer(w,\{y_1,\ldots,y_r\})$ must
hold for any tuple where $E_{B,w,y^*}$ is increased, which is the 
required contradiction.
\end{proof}

\subsection{Kim-Vu style concentration}

In a fundamental paper \cite{KimVu00}, Kim and Vu consider low degree 
polynomials $P$ in variables $x_1,\ldots,x_\ell$, and show that 
if $\del_B P$ can be bounded 
(as in Lemma~\ref{lem:derivativeLow}),
then $P$ will be concentrated around its expectation,
assuming the variables $x_i$ are picked independently at random.

Because in our case the variables are not picked independently,
we need to use a different bound (other than that, the
 original Kim-Vu bound would be strong enough
for our purpose).
The bound we use requires the following concept 
of almost independence.
\begin{definition}
A distribution $\Pd_x$ over $\{0,1\}^\ell$ is
\emph{$(\delta,m)$-almost independent}
if for all sets $M$ of size $|M| < m$ and any $j \notin M$
\begin{align}
\Pr_{x\leftarrow \Pd_x}
[x_j = 1|\forall i \in M: x_i = 1] \leq 
\Pr_{x\leftarrow\Pd_x}[x_j = 1](1+\delta)
\end{align}
\end{definition}

We use the following bound, which is
proven in \cite{Holens11}.
It uses a technique first used implicitly by \cite{ScSiSr95}
and which was later used in \cite{Rao08} to prove concentration
bounds for parallel repetition, and by \cite{ImpKab10} to prove
constructive concentration results.

For a polynomial $P$ in variables $x_j$, and a distribution $\Pd_{x}$
over these variables, we let $\Pd_{x}^*$ be the distribution
obtained by picking each $x_j$ independently of the others, but
with the marginal distribution given by $\Pd_{x}$.
We then set
$\mu^* = \displaystyle\E_{x\leftarrow\Pd_{x}^*}[P(x)]$ and
$E^* = \displaystyle\max_{\emptyset \subsetneq B \subseteq \{x_1,\ldots,x_\ell\}}
\E_{x\leftarrow\Pd_{x}^*}[\del_{B} P(x)]$.

\begin{theorem}\label{thm:KimVuStyle}
Let $\Pd_{x}$ be
an $(\delta,rm)$-almost independent distribution
over $\{0,1\}^{\ell}$.
Let $P(x)$ be a polynomial of degree
at most $r$ in the variables $x_i$, i.e., 
$P(x) = \sum_{j=1}^{n} v_j$ with $v_j = \prod_{i\in e_j} v_j$, where 
$|e_j| \leq r$.

Then,
\begin{align}
\Pr_{x \leftarrow \Pd_{x}}\Bigl[P(x)\geq \mu^* (1+\epsilon)\Bigr] 
\leq 
\Bigl(\frac{(1+\delta)^r (1+\frac{r^rm^rE^*}{\mu^*})}{1+\epsilon}\Bigr)^m\;.
\end{align}
\end{theorem}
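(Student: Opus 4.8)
The plan is to prove the tail bound by the moment method: bound $\E_{x\leftarrow\Pd_x}[P(x)^m]$ and apply Markov's inequality to $P(x)^m$. This is legitimate because $P$ is a sum of products of $\{0,1\}$-valued variables with nonnegative coefficients (in the intended application the coefficients are nonnegative integers), so $P\geq0$; thus $\Pr_{x\leftarrow\Pd_x}[P(x)\geq\mu^*(1+\epsilon)]\leq\E_{x\leftarrow\Pd_x}[P(x)^m]/(\mu^*(1+\epsilon))^m$. Expanding $P^m=\bigl(\sum_j v_j\bigr)^m$ and using idempotence of $\{0,1\}$-variables gives a sum, again with nonnegative coefficients, of monomials $\prod_{i\in S}x_i$ in which $S$ is a union of at most $m$ of the sets $e_j$ and hence has $|S|\leq rm$; it therefore suffices to bound each $\E_{\Pd_x}[\prod_{i\in S}x_i]$ and re-sum (valid, by nonnegativity).

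First I would pass from $\Pd_x$ to the product distribution $\Pd_x^*$. For $S=\{i_1,\dots,i_s\}$ with $s\leq rm$, telescoping gives $\Pr_{\Pd_x}[\forall i\in S:x_i=1]=\prod_{t=1}^{s}\Pr_{\Pd_x}[x_{i_t}=1\mid x_{i_1}=\dots=x_{i_{t-1}}=1]$, and since each conditioning set has size $t-1<rm$, the $(\delta,rm)$-almost independence bounds the $t$-th factor by $(1+\delta)\Pr_{\Pd_x}[x_{i_t}=1]$ (we may assume $\delta\geq0$). Hence $\E_{\Pd_x}[\prod_{i\in S}x_i]\leq(1+\delta)^{rm}\E_{\Pd_x^*}[\prod_{i\in S}x_i]$, and re-summing the monomials of $P^m$ yields $\E_{\Pd_x}[P^m]\leq(1+\delta)^{rm}\E_{\Pd_x^*}[P^m]$.

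The core step is the fully independent moment bound $\E_{\Pd_x^*}[P(x)^m]\leq(\mu^*+r^rm^rE^*)^m$, which is essentially the inequality underlying the Kim--Vu theorem \cite{KimVu00}, rephrased in our normalization; I would prove it by peeling off the $m$ factors of $P$ one at a time. Write $P=\sum_j c_j v_j$ with $c_j\geq0$, $v_j=\prod_{i\in e_j}x_i$, and $p_i=\Pr_{\Pd_x}[x_i=1]$. Since the variables are idempotent, $\E_{\Pd_x^*}[P^m]$ equals the nested sum $\sum_{j_1}c_{j_1}\bigl(\prod_{i\in e_{j_1}}p_i\bigr)\sum_{j_2}c_{j_2}\bigl(\prod_{i\in e_{j_2}\setminus e_{j_1}}p_i\bigr)\cdots\sum_{j_m}c_{j_m}\bigl(\prod_{i\in e_{j_m}\setminus(e_{j_1}\cup\dots\cup e_{j_{m-1}})}p_i\bigr)$. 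Fix a prefix $j_1,\dots,j_{t-1}$, set $S=e_{j_1}\cup\dots\cup e_{j_{t-1}}$ (a set of size $\leq(t-1)r\leq mr$), and split the $t$-th inner sum $\sum_{j_t}c_{j_t}\prod_{i\in e_{j_t}\setminus S}p_i$ according to the set $B=e_{j_t}\cap S$. The part with $B=\emptyset$ is at most $\sum_j c_j\prod_{i\in e_j}p_i=\mu^*$. For each nonempty $B\subseteq S$ with $|B|\leq r$, the part with $e_{j_t}\cap S=B$ has $e_{j_t}\setminus S=e_{j_t}\setminus B$, hence is at most $\sum_{j:\,B\subseteq e_j}c_j\prod_{i\in e_j\setminus B}p_i=\E_{\Pd_x^*}[\del_B P]\leq E^*$; and the number of such $B$ is $\sum_{d=1}^{r}\binom{|S|}{d}\leq\sum_{d=1}^{r}\tfrac{(mr)^d}{d!}\leq\tfrac{r^{r+1}m^r}{r!}\leq r^rm^r$. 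So the $t$-th inner sum is at most $\mu^*+r^rm^rE^*$ for every prefix; bounding the nested sum factor by factor (legitimate since all $c_j\geq0$) gives $\E_{\Pd_x^*}[P^m]\leq(\mu^*+r^rm^rE^*)^m$.

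Combining the three steps, $\Pr_{\Pd_x}[P(x)\geq\mu^*(1+\epsilon)]\leq\frac{(1+\delta)^{rm}(\mu^*+r^rm^rE^*)^m}{(\mu^*(1+\epsilon))^m}=\Bigl(\frac{(1+\delta)^r(1+r^rm^rE^*/\mu^*)}{1+\epsilon}\Bigr)^m$, as claimed. The step needing the most care is the independent moment bound: one must match \emph{each} reuse of an already-seen variable to a nonempty formal derivative $\del_B P$ and count the candidate sets $B$ tightly enough to land the stated factor $r^rm^r$ — this is exactly where the hypotheses that $P$ has degree at most $r$ and that $E^*$ uniformly bounds all nonempty derivatives enter. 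A further point to watch throughout is the nonnegativity of the coefficients of $P$: without it neither the monomial-by-monomial transfer to $\Pd_x^*$ nor the factor-by-factor peeling is valid. By contrast, the reduction to $\Pd_x^*$ in the first two steps is routine once one notices $\deg(P^m)\leq rm$.
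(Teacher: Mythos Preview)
The paper does not give its own proof of this theorem; it merely states the bound and cites \cite{Holens11}, remarking that the argument uses a technique implicit in \cite{ScSiSr95} and later in \cite{Rao08,ImpKab10}. Your proposal is a correct self-contained proof along exactly those lines: Markov's inequality on $P^m$, a monomial-by-monomial transfer from $\Pd_x$ to the product measure $\Pd_x^*$ via $(\delta,rm)$-almost independence (this is the \cite{ScSiSr95}-style step), and the Kim--Vu peeling to bound $\E_{\Pd_x^*}[P^m]$ by $(\mu^*+r^rm^rE^*)^m$. Your explicit observation that nonnegativity of the coefficients is needed --- and that it holds in the paper's application --- is apt, since the statement as written omits that hypothesis.
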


Using this bound, we can now prove Lemma~\ref{lem:concentration}.
\begin{proof}[Proof (of Lemma~\ref{lem:concentration})]
We use Theorem~\ref{thm:KimVuStyle} on the polynomial
$P_{w,y^*}$, where we set
$\delta = 1$, $\epsilon = 2^{\frac{9n}{100}}/\mu^*$ and $m = 2^{\frac{n}{100r}}$.
We note first that indeed the random variables $F_{(i,j)}$ are 
$(\delta,rm)$-independent: conditioning on $F_{(x,y)} = 1$
is the same as conditioning on $f(x) = y$, 
and so we can see that one needs to condition on at least
$2^{n-1}$ such events in order to double
the probability that $F_{(x,y)} = 1$ for any $(x,y)$.

Thus, Theorem~\ref{thm:KimVuStyle} yields:
\begin{align}
\Pr_{f\leftarrow \cP}[P_{w,y^*}(f)  \geq \mu^* + 2^{\frac{9n}{100}}]
&\leq
\Bigl(
\frac{2^r\max(2, \frac{2 r^r 2^{\frac{n}{100}} E^*}{\mu^*})}
{\frac{2^{\frac{9n}{100}}}{\mu^*}}
\Bigr)^{2^{n/100r}}\\
&\leq
\max
\Bigl(\frac{2\cdot2^{r}\mu^*}{2^{\frac{9n}{100}}}
,
\frac{2\cdot2^r r^r 2^{\frac{n}{100}} E^*}{2^{\frac{9n}{100}}}
\Bigr)^{2^{n/100r}}\\
&\leq
(\tfrac{1}{2})^{2^{n/100r}}\;,
\end{align}
where we applied Lemma~\ref{lem:derivativeLow} to bound both $\mu^*$
and $E^*$ in the last step.
An application of Lemma~\ref{lem:polynomialGivesABound} finishes the proof.
\end{proof}

\section{Non-uniform reductions and superpolynomial security}
\label{sec:nubb}

Theorem~\ref{thm:main} excludes the existence of a \emph{uniform}
black-box reduction constructing a pseudorandom generator
from a one-way function with few calls.
Potentially, one way to overcome this lower bound would be to 
give a non-uniform security reduction, in which
case the result would be weaker, but still very interesting.
Such non-uniform construction can be excluded by
the techniques given in \cite{GGKT05}, and we apply their 
technique here to prove that our lower bound applies to non-uniform
constructions as well.

Furthermore, we also generalize our results to one-way functions 
with different security.
\begin{definition}\label{def:nfbb}
A \emph{non-uniform fully black-box construction of a pseudorandom generator 
from a regular one-way function with security $s(k)$}
consists of two oracle algorithms $(g,A)$.
The \emph{construction} $g^{(f)}$ is a polynomial time 
oracle algorithm which provides, for each~$k$, 
a function $g_k: \{0,1\}^{m(k)}
\to \{0,1\}^{m'(k)}$ with $m'(k) > m(k)$.
For this, $g_k$ may call $f_k$ as an oracle, 
and $m(k),m'(k)$ may depend on $n(k)$ and $n'(k)$.

Further, the \emph{security reduction} 
$A^{(\cdot,\cdot)}(k,\cdot,\cdot)$ 
is an oracle algorithm which
does at most $s(k)$ queries,
and has the property that for any regular function $f$ and any 
oracle $B$ for which
\begin{align}\label{eq:2000}
	\Pr_{v,B}[B(k,g_k(v))=1] - \Pr_{w,B}[B(k,w)=1] \geq \frac{1}{100}
\end{align}
for infinitely many $k$, there is $h_k \in \{0,1\}^{s(k)}$ such that
\begin{align}\label{eq:2001}
\Pr_{x,A}[\text{$A^{(B,f)}(k,h_k,f_k(x))$ inverts $f_k$}]
> \frac{1}{s(k)}
\end{align}
for infinitely many $k$.

Similar to before, $A(k,\cdot,\cdot)$ only calls the 
oracles $f(k,\cdot)$ and 
$B(k,\cdot)$.
\end{definition}
In an actual reduction, one would of course excpect
that it works given a much weaker condition than (\ref{eq:2000}).
In particular, a reasonable reduction will invert $f$
with some probability if the constant $\frac{1}{100}$ is replaced
by any polynomial.
Excluding constructions which even adhere to 
Definition~\ref{def:nfbb} is of course then stronger.
\subsection{$\BreakOW$ does not non-uniformly invert}
We first show that no non-uniform oracle algorithm
with access to $\BreakOW$ inverts a random permutation $f$.
\begin{lemma}\label{lem:breakerHelpsNothingGT}
Let $g^{(\cdot)}: \{0,1\}^m \to \{0,1\}^{m*}$ be an 
$r$-query normalized oracle construction,
$\frac{n}{100r} \in \mathbb{N}$.
Fix an oracle function $C^{(\BreakOW,f)}(y)$ making at 
most $q < 2^{\frac{n}{10}}$ queries to its oracles.
Let $\cW$ be the set which contains all permutations $f \in \cP$ for 
which both
\begin{align}
\Pr_{y\leftarrow\{0,1\}^n}[C^{(\BreakOW,f)}(y) = f^{-1}(y)] 
&\geq 2^{-\frac{n}{20}}\qquad\qquad\text{, and}
\label{eq:17}\\
\forall w,y^*: |Q_{f,w,y^*}| &\leq 2^{\frac{n}{10}}\label{eq:18}
\end{align}
holds.  
Then, $\frac{|\cW|}{|\cP|}  \leq 
2^{- 2^{n/2}}$.
\end{lemma}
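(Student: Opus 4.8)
The plan is to prove this by a compression (encoding) argument in the style of Gennaro, Gertner, Katz and Trevisan~\cite{GGKT05}: I will exhibit an injective encoding of $\cW$ by bit strings of length at most $\log_2 |\cP| - 2^{n/2}$, which immediately gives $|\cW| \leq 2^{-2^{n/2}}|\cP|$. The point is that, for $f \in \cW$, the fixed algorithm $C$ correctly computes $f^{-1}(y)$ for at least a $2^{-n/20}$ fraction of all images $y$, so the values of $f$ on (a large subset of) the corresponding preimages need not be stored: a decoder holding the remaining values of $f$ can recover them by re-running $C$.

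Concretely, I would first set $I := \{y : C^{(\BreakOW^{(f)},f)}(y) = f^{-1}(y)\}$, so $|I| \geq 2^{19n/20}$ by~(\ref{eq:17}). Scanning $I$ in lexicographic order, I would then greedily build a sublist $Y^{*} = (y_1,\dots,y_t)$ with preimage set $X = \{x_1,\dots,x_t\}$, $x_i := f^{-1}(y_i)$, in the usual way: when the smallest surviving image $y$ is reached, commit it, then delete from $I$ both $y$ and every image whose preimage is \emph{directly} queried by the run $C^{(\BreakOW^{(f)},f)}(y)$ (at most $q+1 < 2^{n/10}+1$ deletions per commit, so $t \geq |I|/(q+1) \geq 2^{17n/20}$). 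The encoding of $f$ then consists of: $f$ restricted to $\{0,1\}^n \setminus X$ (which already determines $X$ and the \emph{set} $\{y_1,\dots,y_t\}$); for each committed $y_i$, the position among the $\le q$ oracle calls of $C^{(\BreakOW^{(f)},f)}(y_i)$ of the unique direct $f$-query equal to $x_i$, if any ($\lceil \log_2(q+2)\rceil$ bits each); and a small amount of extra advice discussed below. To decode, one iterates over $\{y_1,\dots,y_t\}$ in lexicographic order --- which coincides with the order in which images were committed, so the direct-query dependencies among the runs point only backwards --- re-runs $C$ on each $y_i$, and reads $x_i$ off its output, recovering the bijection $X \to \{y_1,\dots,y_t\}$ that was never stored. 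The bit count then works out as in~\cite{GGKT05}: not storing that bijection saves $\log_2 t!$ bits, while specifying $X$ costs $\log_2\binom{2^n}{t}$ and the $t$ indices cost $t\lceil\log_2(q+2)\rceil$; since $t \geq 2^{17n/20}$ and $q < 2^{n/10}$ this is a net saving of at least roughly $t(2\log_2 t - n - \log_2 q) = \Omega(2^{17n/20} n) \gg 2^{n/2}$, provided the extra advice amounts to $o(t\log_2 t)$ bits.

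The step I expect to be the main obstacle is handling $C$'s calls to $\BreakOW^{(f)}$ during decoding: evaluating $\BreakOW^{(f)}(w)$ runs $g^{(f)}$ on up to $2^m$ inputs and so touches far more of $f$ than the decoder knows when re-running $C$ on $y_i$ --- it knows $f$ only off $X$ and on the already-decoded $x_1,\dots,x_{i-1}$. The plan is to let the decoder instead evaluate $\BreakOW^{(\tilde{f}_i)}(w)$, where $\tilde{f}_i$ is the decoder-computable function that agrees with $f$ off $\{x_i,\dots,x_t\}$ and takes fixed canonical values there, and to argue that $\BreakOW^{(\tilde{f}_i)}(w) = \BreakOW^{(f)}(w)$ for every $w$ queried by the run. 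This is a stability statement of exactly the kind proved in Lemma~\ref{lem:breakerStaysConstant}, now for a modification of $f$ on the whole set $\{x_i,\dots,x_t\}$ instead of at a single point; one would obtain it by iterating Lemma~\ref{lem:breakerStaysConstant} over that set, controlling the sets $Q_{\cdot,w,y^*}$ of the intermediate functions through the hypothesis~(\ref{eq:18}) and Lemma~\ref{lem:concentration} (these functions differ from $f$ in only few places), so that for each $w$ only boundedly many intermediate steps can flip the breaker's answer. Whenever the greedily committed list would cause such a flip, the encoding must carry extra information to correct the emulation (for instance the affected value $\BreakOW^{(f)}(w)$); the technical heart is to bound the total number of these flips so that the correction stays $o(t\log_2 t)$ bits, after which the bit count above yields $|\cW| \leq 2^{-2^{n/2}}|\cP|$. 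The $\SafeToAnswer$ threshold $2^{-m+n/30}$, the restriction on $r$ from Lemma~\ref{lem:concentration}, and the bound $q < 2^{n/10}$ all enter precisely here, just as in the proof that the sets $Q_{f,w,y^*}$ are small.
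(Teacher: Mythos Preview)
Your overall plan---GGKT-style compression, greedy selection of a set of ``free'' images, and showing the residual description is short---is exactly the paper's plan, and your treatment of the \emph{direct} $f$-queries is fine. The gap is precisely where you locate it: the handling of $\BreakOW$ calls during decoding. Your proposal is to let the decoder run $\BreakOW^{(\tilde f_i)}$ for a function $\tilde f_i$ that differs from $f$ on all of $\{x_i,\dots,x_t\}$, and to control the discrepancy by ``iterating Lemma~\ref{lem:breakerStaysConstant}'' and invoking hypothesis~(\ref{eq:18}) or Lemma~\ref{lem:concentration} for the intermediate functions. This does not go through as stated: hypothesis~(\ref{eq:18}) bounds $|Q_{f,w,y^*}|$ only for the \emph{original} $f\in\cW$, not for any of the $\tilde f_i$; Lemma~\ref{lem:concentration} is a probabilistic statement about a uniformly random permutation and says nothing about the specific, $f$-dependent (and not even bijective) functions $\tilde f_i$; and the set you are modifying has size up to $t\approx 2^{17n/20}$, far beyond what single-point stability can absorb. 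So the promised bound on the ``extra advice'' is not established, and without it the bit count does not close.

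The paper avoids this stability-at-decoding-time problem altogether, and this is the idea you are missing. Rather than trying to correct $\BreakOW$ discrepancies with advice, the paper enlarges what is \emph{removed} from $I$ in the greedy step so that discrepancies simply cannot arise between any two permutations sharing the same encoding. Concretely: first modify $C$ so that after each call $v=\BreakOW^{(f)}(w)$ it also evaluates $g^{(f)}(v)$ (this makes the $r$ underlying $f$-queries of the returned preimage part of $C$'s own query set). Then, when committing an image $y^*$, remove from $I$ not only $y^*$ and the direct $f$-answers of $C(y^*)$, but also, for every $x^*$ with $\BreakOW^{(f)}(w)\neq\BreakOW^{(f_{(x^*,y^*)})}(w)$ for some $w$ queried by $C(y^*)$, the set $\QueryY(g,\BreakOW^{(f_{(x^*,y^*)})}(w),f)$. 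Lemma~\ref{lem:breakerStaysConstant} together with~(\ref{eq:18})---both applied only to the original $f$---bound the number of such $x^*$ by $2^{n/5}$ per $w$, so the total removed per committed image is at most $2^{n/10}\cdot 2^{n/5}+r\cdot 2^{n/10}$, and one still commits $|S|\geq 2^{3n/5}$ images. The encoding is then simply $f$ restricted to the complement of $f^{-1}(S')$ for some $S'\subseteq S$ of that size, with no position indices and no correction bits; injectivity is argued directly by showing that if $\cF(f_1)=\cF(f_2)$ then the runs $C^{(\BreakOW,f_1)}(y)$ and $C^{(\BreakOW,f_2)}(y)$ must agree for every $y\in S'$, which forces $f_1=f_2$. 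The bookkeeping you set up (lexicographic ordering, backward-pointing dependencies) is subsumed by this cleaner removal step.
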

\begin{proof}
As in \cite{GGKT05}, we find an encoding of $f$ which is
$2^{\frac{n}{2}}$ bits shorter than $\log(2^n!)$ (the minimal
length of a bitstring needed to describe an arbitrary
permutation on $2^n$ elements).
The encoding has the property that $f$ can be recovered (given~$C$)
from it.
Actually, it is somewhat
easier to describe the encoding simply as a injective
function $\cF$ mapping
onto a set with fewer than $(2^n!)2^{-2^{n/2}}$ elements,
which is of course equivalent.

Fix some function $f$ which satisfies both (\ref{eq:17}) and (\ref{eq:18}).
We first find a large subset~$S$ of the images of $f$, 
which has the property that $\cF$ does not need to 
describe how $f$ maps elements of $f^{-1}(S)$ to $S$,
and yet $\cF$ will be injective.
For this, we first modify $C$ such that whenever it queries 
$v=\BreakOW^{(f)}(w)$, 
it afterwards evaluates $g^{(f)}(v)$ on the result (unless 
$\BreakOW$ returned $\bot$).
Then, the following algorithm outputs~$S$.

\mbox{}\begin{mdframed}\setlength{\parindent}{0cm}%
\noindent \textbf{Algorithm} BuildSets$(f)$
\medskip

\hrule

\medskip

Modify $C$ as in the text\\
$I := \{y : C^{(\BreakOW,f)}(y) = f^{-1}(y)\}$\\
\textbf{while} $I \neq \emptyset$ \textbf{do}\\
\mbox\qquad $y^* \leftarrow I$ \centerComment{An arbitrary element of $I$}\\
\mbox\qquad $S := S \cup \{y^* \}$\\
\mbox\qquad Let $Q$ be the answers of $f$ to the
 queries done by $C^{(\BreakOW,f)}(y^*)$.\\
\mbox\qquad \textbf{for} $x^* \in \{0,1\}^n$ \textbf{do}\\
\mbox\qquad\qquad $f^* := f_{(x^*,y^*)}$\\
\mbox\qquad\qquad \textbf{if} there is $w$ such that $C^{(\BreakOW,f)}(y^*)$ 
calls $\BreakOW^{(f)}(w)$ and\\
\mbox\qquad\qquad\qquad\qquad
$\BreakOW^{(f)}(w) \neq \BreakOW^{(f^*)}(w)$ \textbf{then}\\
\mbox\qquad\qquad\qquad $Q := Q \cup \QueryY(g,\BreakOW^{(f^*)}(w),f)$\\
\mbox\qquad $I := I \setminus (Q \cup \{y^*\})$\\
\textbf{return} $S$
\end{mdframed}

\medskip

We show that $|S| \geq s := \frac{2^{19n/20}}{2^{n/10}2^{n/5}+r2^{n/10}+1}
\geq 2^{\frac{3n}{5}}$.
First, from (\ref{eq:17}) we see
that $|I| \geq 2^{\frac{19n}{20}}$.
We claim that for each $y$, $Q$ has size at most
$|Q| \leq 2^{\frac{n}{10}} 2^{\frac{n}{5}} + r 2^{\frac{n}{10}}$ 
when it is removed from $I$.
We get this since $C$ makes at most~$2^{\frac{n}{10}}$ calls 
to $\BreakOW(w)$, 
and Lemma~\ref{lem:breakerStaysConstant} implies that
for each of these calls, there can be at most 
 $2^{\frac{n}{5}}$ elements $x^*$ for which $\BreakOW^{(f^*)}(w)
\neq \BreakOW^{(f)}(w)$.
Further, $g$ makes at most $r 2^{\frac{n}{10}}$ calls to~$f$ (due 
to our modification above this is a bit larger than $2^{\frac{n}{10}}$).

Let now $S' \subseteq S$ be some subset of size 
$s$, set
$t = 2^{n} - s$, and let 
 $x_0,\ldots,x_{t-1}$ be the elements of $\{0,1\}^n$ which
are \emph{not} preimages of elements in $S'$, in lexicographic order.
We show in the next paragraph that the map $\cF$ which maps 
$f \mapsto (x_0,f(x_0),\ldots,x_{t-1},f(x_{t-1}))$
is injective.
The number of possible images can be counted by
first considering the possible sets $\{x_0,\ldots,x_{t-1}\}$ and
$\{f(x_0),\ldots,f(x_{t-1})\}$ (there are $\binom{2^n}{t} = 
\binom{2^{n}}{s}$ of those) and then considering the $t!$ permutations
from the first to the second set, which shows that 
%(a choose b) >= (ea/b)^b
\begin{align}
\frac{|\cW|}{|\cP|}
&\leq
\binom{2^n}{s}^2 \frac{(2^n-s)!}{2^n!} 
=
\binom{2^n}{s}
\frac{1}{s!}
\leq
\Bigl(\frac{e^2 2^n }{s^2}\Bigr)^{s}
\leq
2^{-s}\;.
\end{align}

It remains to show that the map is injective.
To see this, suppose that $f_1 \neq f_2$ satisfy $\cF(f_1) = \cF(f_2)$.
Then, $f_1^{-1}(y) = f_2^{-1}(y)$ for all $y$ for which
$f_1^{-1}(y) \neq C^{(\BreakOW,f_1)}(y)$ (the pair $(f_1^{-1}(y),y)$
appears in $\cF(f_1)$, and so it must also appear in $\cF(f_2)$).
Since this holds analogous for $f_2$, 
there must be $(x_1,x_2,y)$ such that $x_1 = C^{(\BreakOW,f_1)}(y) = f_1^{-1}(y) 
\neq f_2^{-1}(y) = C^{(\BreakOW,f_2)}(y) = x_2$.

Since the two answers of $C(y)$ differ in the two runs, there must be
some call of~$C$ to an oracle with the same input, 
but for which the two answers differ.
This cannot be a call to the oracle~$f_1$ or $f_2$, 
as otherwise this call would
appear in $\cF(f_1)$ and in $\cF(f_2)$.
Thus, it must be that some for some $w$ we have $v_1 := \BreakOW^{(f_1)}(w) \neq
\BreakOW^{(f_2)}(w) =: v_2$, and $\BreakOW(w)$ is actually
called by $C$ in both experiments.
Suppose without loss of generality that $v_1$ occurs first
in the enumeration within $\BreakOW$.
Then, one of the queries which $g^{(f_1)}(v_1)$ does must have answer $y$,
as otherwise all elements of $\Query(g,v_1,f_1)$ would appear in $\cF(f_1)$,
and so $\BreakOW(w)^{(f_2)}(w) = v_1$ as well.
Thus, one of the answers was~$y$, and since the other answers appear 
in $\cF(f_1)$, $f_2$ and $f_1$ behave the same for these answers.
But this implies that $\BreakOW^{(f_2^*)}(w) \neq \BreakOW^{(f_2)}(w)$,
where $f_2^* = (f_2)_{x_1,y}$, and so $(x_1,f_2(x_1))$ must appear 
in $\cF(f_2)$, which contradicts $\cF(f_1) = \cF(f_2)$.
\end{proof}

\subsection{Non-uniform black-box separation}
We can now prove Theorem~\ref{thm:mainB}, which we restate
for convenience.

\maintheoremB*
\begin{proof}
As previously, we use Theorem~\ref{thm:prgandPUOWF} and 
thus assume we have a non-uniform fully black-box reduction
which yields a pseudouniform one-way function.

Thus, we suppose we are given $(g,A)$.
Again we set $n(k) := n'(k) := k$, and let $m(k)$ be the input
length of $g$ as provided by the reduction.
Let $r(k)$
the number of calls to $f$.
As before we assume that $\frac{n(k)}{100r(k)} \in \bbN$,
and  modify $g$ so it is normalized.

For all $k$ with $r(k) \geq 
\frac{n(k)}{1000 \log(s(k))}$ we let
$\Breaker_k$ be the function which always outputs $0$
and $f_k$ a permutation which is one-way
against circuits of size $2^{\frac{n}{5}}$, which exists by \cite{GGKT05}.

Otherwise, we consider $p(g_k)$. 
If $p(g_k) \geq \frac12$ we 
set $\Breaker_k$ to be $\BreakOW$.
Lemma~\ref{lem:breakOWHelpsInvertG} again implies that 
$\BreakOW$ helps to invert $g$, but now we apply 
Lemma~\ref{lem:breakerHelpsNothingGT} and the union bound
to get a function which is hard to invert for all $h_k$.

If $p(g_k) \leq \frac12$, Lemma~\ref{lem:largeP}
gives a set  $W_k \subseteq \{0,1\}^{m(k)}$ and 
$\cY_k \subseteq \{0,1\}^{n(k)}$ for which the output
of $g^{(f)}$ is likely distinguished from uniform by $\BreakPU$.

Writing the function $f$ as $f = \pi \circ p$ for some 
a random permutation $\pi$ on $\cY$ and a regular function $p: \{0,1\}^n
\to \cY$ we can apply Theorem~1 of \cite{GGKT05} (which also holds 
if the circuit has oracle gates to $\BreakPU$).
We can thus find a function which is hard for $A$ and any 
advice string $h_k$, and yet
the output $g^{(f)}$ will be distinguished from uniform.
\end{proof}

\section{Non-security parameter restricted constructions}
\label{sec:nsprc}

% In this section, we discuss what happens if $g(\ell,\cdot)$ may 
% call $f(k,\cdot)$ for various parameters $k \neq \ell$.
% Unfortunately, in this setting, it becomes less clear
% what one wants to prove.
% For example, consider the setting where a one-way function 
% with input length $n(k)$ is given, and suppose that $n(k)$ is
% monotonically increasing.
% We know from the previous sections that a black-box construction
% $g(k,\cdot)$ which only calls $f(k,\cdot)$ must make
% at least $\Omega(\frac{n(k)}{\log(n(k))})$ calls.
% However, we can also see that we can 
% we can reduce this number simply by calling $f(\sqrt{\ell},\cdot)$ instead.

% Since the security parameter is still $k$ is still polynomial in $\ell$, 
% one can reduce the number of required calls in this way.

% However, one can read our previous result is that: ``if one wants
% to use the one-way function on length $n$, one requires $\Omega(n/\log(n))$
% calls''.
% In case the length is now more flexible, one can at least ask
% the following question: suppose a construction uses a 
% given one-way function only on length \emph{at least} $n$.
% Can we sill prove a lower bound of $\Omega(n/\log(n))$ calls?

% We will show that that the answer is yes.

\subsection{Fixing the polynomial in the construction}

Suppose that we have given a black-box construction $(g,A)$ of a 
pseudouniform one-way function from a one-way function together with its 
security reduction.
The requirement on the efficiency of the construction
is that for every choice of $(f,\Breaker)$, 
both $g$ and $A$ should run in polynomial time.
In other words, for any $(f,\Breaker)$ there should be $c \in \bbN$
such that $f(k,\cdot)$ and $\Breaker(k,\cdot)$ run in time $k^c$.
Note that~$c$ can depend on $f$ and $\Breaker$.

There do exist constructions $(g,A)$ which are polynomial
for any $(f,\Breaker)$, but where $c$ indeed 
depends inherently on the oracle.\footnote{An example follows:
suppose the function $g(k,v)$ first queries 
$f(0,\mathbf{0}), f(1,\mathbf{0}),\ldots,f(\log(\ell),\mathbf{0})$.
If all answers were the $0$-string, execute some algorithm
which runs in linear time.
Otherwise, let $c'$ be the index of the first answer which differs,
and execute an algorithm which runs in time $k^{c'+1}$.}
However, it turns out that it is always possible to fix \emph{finitely} many 
outputs of the oracles
$f$ and $\Breaker$ such that after fixing these, $c$ is independent
of the choice of the remaining positions.

A \emph{prefix} $(f^*,\Breaker^*)$ is simply the truth table for these
oracles for lengths up to some integer $k_0$; oracles $(f,\Breaker)$
\emph{agree} with the prefix if their truth table up to length $k_0$ equals
the one given by the prefix.
A prefix $(f^{(2)},\Breaker^{(2)})$ \emph{extends}
a prefix $(f^{(1)},\Breaker^{(1)})$ if 
the truth table of $(f^{(2)},\Breaker^{(2)})$ is larger than the truth
table of $(f^{(1)},\Breaker^{(1)})$, and they agree everywhere where 
$(f^{(1)},\Breaker^{(1)})$ is defined.

\begin{lemma}\label{lem:fixC}
Suppose a black-box reduction $(g^{(f)},A^{(\Breaker,f)})$ is given,
and fix some length function $n(k)$.
There exists a prefix $(f^*,\Breaker^*)$ and $c \in \bbN$
such that 
for any pair $(f,\Breaker)$ which agrees with $(f^*,\Breaker^*)$ 
 we have the following properties:
\begin{enumerate}
\item $g^{(f)}(\ell,\cdot)$ makes at most $\ell^c$ queries to
$f(k,\cdot)$, and all of these queries satisfy $k \leq \ell^c$
\item $A^{(\Breaker,f)}(k,w)$ makes
at most $k^c$ queries to $\Breaker(\ell,\cdot)$, and all of these
queries satisfy $\ell \leq k^c$
\item $A^{(\Breaker,f)}(k,w)$ makes at most $k^c$ queries
to $f(k',\cdot)$, and all of these
queries satisfy $k' \leq k^c$.
\end{enumerate}
\end{lemma}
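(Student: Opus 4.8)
The plan is to build the prefix $(f^*,\Breaker^*)$ and the constant $c$ by a compactness/König's-lemma style argument on the space of oracles. First I would observe that ``$g$ is polynomial-time for the oracle pair $(f,\Breaker)$'' means precisely that there is a constant $c=c(f,\Breaker)$ witnessing all three bulleted properties simultaneously (the running time bounds everything: the number of queries, and the length of every query string---since writing down a query of length $k$ costs time $k$, a $k^c$-time computation on security parameter $\ell$ can only touch oracle slots with parameter at most $\ell^c$ for a suitable polynomial, and similarly for $A$). So the hypothesis of the lemma gives us: for \emph{every} total oracle pair $(f,\Breaker)$ there exists $c$ with properties 1--3. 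The goal is to make $c$ uniform after fixing a finite prefix.

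The key step is a contradiction argument using König's lemma. Suppose no finite prefix forces a uniform $c$. Then for every prefix $(f^{(1)},\Breaker^{(1)})$ and every candidate constant $c$, there is an extension $(f^{(2)},\Breaker^{(2)})$ and an oracle pair $(f,\Breaker)$ agreeing with that extension for which one of properties 1--3 fails with that particular $c$ witnessed by a concrete finite violation (a concrete $\ell$, or $k$, or query). Crucially, a violation of any of properties 1--3 for a given $c$ is a \emph{finite} event: it only involves finitely many positions of the oracles (the run of $g^{(f)}(\ell,\cdot)$ or $A^{(\Breaker,f)}(k,w)$ up to the point where the query count or query length exceeds the bound, which is a halting finite computation once we know it violates the polynomial bound---here I would be slightly careful and note that ``makes more than $\ell^c$ queries'' or ``makes a query of length $>\ell^c$'' is detected after finitely many steps). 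I would then build, by diagonalization over $c=1,2,3,\dots$ and over an enumeration of prefixes, an infinite nested chain of prefixes $(f^{(0)},\Breaker^{(0)}) \subseteq (f^{(1)},\Breaker^{(1)}) \subseteq \cdots$ such that the prefix at stage $i$ already exhibits a violation of the polynomial bound for the constant $i$ (on some input). The union of this chain is a total oracle pair $(f^\infty,\Breaker^\infty)$ that has no witnessing constant $c$ whatsoever---contradicting the hypothesis that the construction is polynomial for every oracle. Hence some finite prefix $(f^*,\Breaker^*)$ and some $c$ work, which is exactly the claim.

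I expect the main obstacle to be making precise that a violation is a finite, prefix-detectable event, and organizing the diagonalization so that the chain of prefixes is genuinely nested and its union defeats \emph{all} constants at once. Concretely one has to be careful that when we extend a prefix to realize a violation for constant $i$, we do not destroy the violations already planted for constants $<i$ (this is automatic because each is a positive finite event about finitely many fixed positions, and extensions only add new positions). A secondary subtlety: properties 1 and 2 constrain \emph{which} oracle slots are touched (the $k\le \ell^c$, $\ell \le k^c$, $k' \le k^c$ clauses), not just how many queries are made; but these are handled identically, since ``$g^{(f)}(\ell,\cdot)$ queries $f(k,\cdot)$ with $k>\ell^c$'' is again a finite event. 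Once the existence of \emph{some} $(f^*,\Breaker^*,c)$ is established by this argument, nothing further is needed; we may even take $c$ as large as convenient and the prefix as long as convenient, which is all that the later sections will use.
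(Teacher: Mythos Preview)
Your proposal is correct and follows essentially the same approach as the paper: assume the lemma fails, then inductively build a nested chain of prefixes $(f^{(i)},\Breaker^{(i)})$ where the $i$th prefix already witnesses a violation of the bound for the constant $c=i$, and take the limit to obtain a single oracle pair on which the construction is not polynomial-time, contradicting the hypothesis. Your write-up is in fact more explicit than the paper's about why a violation is a finite, prefix-detectable event and why later extensions preserve earlier violations, but the underlying argument is identical.
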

\begin{proof}
Suppose not, let $d \in \bbN$, and
suppose we have given any prefix $(f^{(d)},\Breaker^{(d)})$.
Then, there exists a pair $(f,\Breaker)$ of oracles which agree 
with $(f^{(d)},\Breaker^{(d)})$ and where one of 1, 2, or 3 is 
violated for $c = (d+1)$.
Fix a length $k$ for which this is violated, and find a prefix 
$(f^{(d+1)},\Breaker^{(d+1)})$ of $(f,\Breaker)$ such
that all queries done for up to security parameter $k$ 
are fixed in the prefix.

Thus, there is an infinite sequence of 
prefixes $\{(f^{(i)},\Breaker^{(i)})\}_{i\geq0}$ such that
$(f^{(i+1)},\Breaker^{(i+1)})$ extends $(f^{(i)},\Breaker^{(i)})$, and 
for any $d \in \bbN$ there is an input which violates one of the 
conclusions of the lemma.

Clearly, such an infinite sequence defines a pair $(f,\Breaker)$ 
for which $(g,A)$ is not polynomial.
\end{proof}

\subsection{Excluding general reductions}
We now come to the proof of Theorem~\ref{thm:mainC},
which we restate for convenience.

\maintheoremC*

\paragraph{Preparations for the proof}
As before, we prove the analogous statement 
for pseudouniform one-way functions.
Also, we assume that the theorem is not true, and 
that we have given $(g,A)$, and show that 
we can find oracles $(f,\Breaker)$ which contradict
the assumption that $(g,A)$ is a fully black-box construction.

We can assume that $g(\ell,v)$ never queries $f(k,x)$ twice for any $(k,x)$.
Also, we use Lemma~\ref{lem:fixC}, 
which fixes a prefix for $(f,\Breaker)$ and gives us a constant $c$
for which the properties in Lemma~\ref{lem:fixC} are satisfied, 
we will use this constant throughout the proof.

We now choose $k_0$ and set $\ell_0 = k_0^c$ such that neither
$f(k_0,\cdot)$ nor $\Breaker(\ell_0,\cdot)$ has been
defined by Lemma~\ref{lem:fixC}.
Furthermore, define all oracles $\Breaker(\ell,\cdot)$ for $\ell < \ell_0$
which have not been defined yet 
to oracles which do nothing (i.e., constantly output $\bot$).
Analogously, define all oracles $f(k,\cdot)$ for $k < k_0$ which have not been
defined yet to random permutations of length $n(k)$.

Next, we pick a constant $\ctilde$ for later.
We require that it satisfies that for any  
$k \in \{\ell^{1/c},\ldots,\ell^{c}\}$ we have
$\ell^{1/\ctilde} \leq n(k) \leq \ell^{\ctilde}$; this is possible because
$n(k)$ is a length function.

\paragraph{Overview and some basics of the proof}
In the main part of the proof, we
define the oracles $\Breaker(\ell,\cdot)$ and $f(k,\cdot)$.
We essentially use one iteration for each $\ell$, and increase
$\ell$ over time.
At the beginning of iteration $\ell$, we will have defined the oracles
$\Breaker(1,\cdot),\ldots,\Breaker(\ell-1,\cdot)$
and $f(k,\cdot)$ for any $k < \ell^{1/c}$.

At this point, we enumerate each $n$ 
which is possibly the length of the shortest query made
by $g(\ell,\cdot)$, ignoring the length of those for which $f(k,\cdot)$
has been defined already.

For each such $n$, we consider the probability 
\begin{align*}
q_{\ell,n} &:= 
\Pr_{v,f}\Bigl[\text{$g^{(f)}(\ell,v)$ 
queries $f$ on security parameters $k \geq \ell^{1/c}$
a total  of at most }\\
&\qquad\qquad \text{%
$\tfrac{n}{d\log(n)}$ times,
and for all these queries $f(k,\cdot)$ we have $n(k) \geq n$}\Bigr]\,
\end{align*}
where $f(k,\cdot)$ is chosen as random permutation for any $k \geq \ell^{1/c}$.
% Also, in the definition of $q_{\ell,n}$ we do \emph{not} 
% count the queries to $f(k,\cdot)$ for $k < \ell^{1/c}$
% (as $f$ is already defined for those).
The parameter $d$ will be defined later, and is 
slowly growing as $\ell \to \infty$.

We then distinguish two cases:
The first case is if $q_{\ell,n} \leq \ell^{-\ctilde-2}$ for all $n$.

In this case, we define $\Breaker(\ell,\cdot) := \bot$, so that
it does nothing on this length, increase $\ell$, and go to the 
next iteration.
We will show that infinitely often $q_{\ell,n}$ must be 
larger than $\ell^{-\ctilde-2}$ for some $n$, 
as otherwise we can obtain an
oracle $(f,\Breaker)$ for which $r_f \in \Omega(n_f^{-}/\log(n_f^{-}))$.

The second case is more interesting: there is
$\ntilde$ for which $q_{\ell,\ntilde} > \ell^{-\ctilde-2}$. 

In this case, we know that with some polynomial probability, 
$g(\ell,\cdot)$ will only make few queries.
We would like to apply the previous machinery,
but cannot do so directly: $g$ possibly makes more 
than $\ntilde/d\log(\ntilde)$
many queries for some oracle $f$, and possibly queries $f$ on input lengths 
shorter than $\ntilde$ for some oracle $f$.

Also (and this is the problem we fix first),
the previous machinery only allows $g$ to make queries to one fixed input 
length $\ntilde$, whereas $g$ may query $f$ with many different parameters
$k$ for which $n(k) \geq \ntilde$.

To solve this, we use the following idea: underlying to 
$f(k,\cdot)$ could in fact be 
a \emph{single} one-way function 
$\ftilde: \{0,1\}^{\ntilde} \to \{0,1\}^{\ntilde}$
for many different values of $k$, so that $f(k,x) = S_k(\ftilde(P_k(x)))$
for some simple to compute 
projection $P_k: \{0,1\}^{n(k)} \to \{0,1\}^{\ntilde}$
and some expansion $S_k : \{0,1\}^{\ntilde} \to \{0,1\}^{n(k)}$.

Thus, we pick uniform random injective functions $P_k$ 
and uniform injective expansions $S_k$ for each $k$
for which $n(k) \geq \ntilde$.
We then consider the construction
$\gtilde_{n}^{(\ftilde,f,P,S)}$.
This construction is is defined as follows:
\begin{quote}
The function $\gtilde_{n}^{(\ftilde,f,P,S)}$ simulates $g$,
except whenever $g$ calls the oracle $f$.

In case $g$ calls $f(k,x)$ for some $k$ with $k < \ell^{1/c}$, 
the answer of $f(k,x)$ is hard-coded into $\gtilde$ (because
$f$ is already defined on these lengths).

If $g$ calls $f(k,x)$ for some $k$ with $n(k) < \ntilde$ and $k \geq \ell^{1/c}$,
then $\gtilde$ calls $f(k,x)$ as well.

In case $g$ calls $f(k,x)$ for some $k$ with $n(k) \geq \ntilde$, 
and $k \geq \ell^{1/c}$,
$\gtilde_{n}$ instead calls $S_k(\ftilde(P_k(x)))$.
\end{quote}
The function $\gtilde$ behaves almost as $g$ when
$\ftilde$ is chosen as a random permutation.
The only exception is 
in the unlikely case that $P_k(x) = P_k(x')$ for two 
queries $(k,x) \neq (k,x')$ to $f$.

The function $\gtilde$ solves the last problem above, so that we get
closer to apply the previous machinery.
However, $\gtilde$ still can make more than $\ntilde/d \log(\ntilde)$ queries
to $f$ or query $f$ on shorter inputs than $\ntilde$ for some $f$.
Thus, we consider the construction $\htilde$, which is just like
$\gtilde$, but has additional restrictions:
\begin{quote}
Whenever $\gtilde$ does more than $\ntilde/d\log(\ntilde)$ calls, 
$\htilde$ simply stops and outputs $(v,\bot)$.

Whenever $g$ does a call to $f(k,\cdot)$ with $n(k) < \ntilde$
and $k \geq k_0$, $\htilde$ stops and outputs $(v,\bot)$.
\end{quote}
We note that $\htilde$ does not need to call the oracle $f$ at any time 
anymore.

As long as $d \to \infty$ for $\ell \to \infty$, the results
from the previous sections will guarantee that breaking the
pseudouniformity of $\htilde$ does not help inverting $\ftilde$.
The main difficulty is that $\htilde$ may behave very differently from
$\gtilde$.
However, we can note that
\begin{align}\label{eq:24}
\Pr_{\ftilde,P,S,v,f}[\htilde^{\ftilde,P,S}(v) = \gtilde^{\ftilde,f,P,S}(v)]
\geq q_{\ell,\ntilde} - \frac{\ntilde^2}{2^{\ntilde}}\,
\end{align}
because as long as no two queries to $P_k(\cdot)$ collide
in the evaluation of $\gtilde$, each query will be
answered with a uniform random answer, and so 
$g$ and $\gtilde$ will behave exactly the same.

We are now interested in the probability that 
$\BreakOW$ inverts a random image of $\gtilde$.
To apply the previous machinery, we want to instantiate \emph{$\BreakOW$ 
using $\htilde$}.
Thus, we consider the probability
\begin{align}
p_{\ell,\ntilde} &:= 
\Pr_{\ftilde,P,S,v,f} 
[\SafeToAnswer_{\htilde}(\htilde^{\ftilde,P,S}(v),
\QueryY(\htilde,\ftilde,v)) {} \land {} \\
&\qquad\qquad\qquad\qquad
\htilde^{\ftilde,P,S}(v) = 
\gtilde^{\ftilde,f,P,S}(v)]\;.\nonumber
\end{align}
Here, $\SafeToAnswer$ is instantiated using $\htilde$ instead 
of $g$.\footnote{Strictly speaking, to instantiate
$\SafeToAnswer$ we should give it a function $\htilde$ which only
uses the oracle $\ftilde$, but not oracles $P$ and $S$.
For that purpose, one can think of $P$ and $S$ as being hardcoded into
$\htilde$.}

We will then show that we can do a similar case distinction 
$p_{\ell,\ntilde}$ as we did in the previous 
sections on $p(g)$.
This will allow us to build oracles $(f(k,\cdot),\Breaker(\ell,\cdot))$ 
where $\Breaker(\ell,\cdot)$ breaks the construction on this length.

After this, we set $\Breaker(\ell',\cdot) := \bot$ for 
$\ell < \ell' < \ell^{c^2}$, which ensures that there is no 
problem because different lengths are interfering with each other.
We then go to the next iteration for which $\Breaker(\ell,\cdot)$
is not yet defined.

\paragraph{Building the oracles}

We now describe a randomized procedure which builds 
oracles $f$ and $\Breaker$ by building a sequence of extending prefixes 
(as in the proof of Lemma~\ref{lem:fixC}).
After this, we prove that the oracle arising from this sequence has the
required properties with probability $1$.

\mbox{}\begin{mdframed}\setlength{\parindent}{0cm}%
\noindent \textbf{Algorithm} GenerateOracles
\medskip

\hrule
\medskip

Fix $\Breaker$ and $f$ up to some length using Lemma~\ref{lem:fixC}, 
then ensure that\\
\mbox\qquad 
$f(k,\cdot)$ is defined up to security parameter $k_0$ for some $k_0$, 
and that\\
\mbox\qquad $\Breaker(\ell,\cdot)$ is defined up to $k_0^c$.\\
$d := 1$\\
$\ell := $ smallest $\ell$ for which $\Breaker(\ell,\cdot)$ has not yet
been defined\\
\textbf{do forever}\\ 
\mbox\qquad /\!\!/ \emph{We define $\Breaker(\ell,\cdot)$ in this iteration}\\
\mbox\qquad \textbf{if} $\forall n \in \{\ell^{1/\ctilde},\ldots,
\ell^{\ctilde}\}$: 
$q_{\ell,n} \leq \ell^{-(\ctilde+2)}$ \textbf{then} \\
\mbox\qquad\qquad $\Breaker(\ell,\cdot) := \bot$ \centerComment{$\Breaker$
will not help on this length}\\
\mbox\qquad\qquad \textbf{if} $\exists k \in \bbN: k^c = \ell$ \textbf{then}\\
\mbox\qquad\qquad\qquad $f(k,\cdot) \leftarrow \Pi_{n(k)}$ 
\centerComment{A random permutation on $n(k)$ bits}\\
\mbox\qquad\textbf{else} \\
\mbox\qquad\qquad let $\ntilde$ be such that $q_{\ell,\ntilde} >\ell^{-(\ctilde+2)}$\\
\mbox\qquad\qquad \textbf{if} $p_{\ell,\ntilde} \leq \frac{1}{2}\ell^{-(\ctilde+2)}$ \textbf{then}\\
\mbox\qquad\qquad\qquad \textbf{try at most} $\ell^{\ctilde+3}$ \textbf{times}\\
\mbox\qquad\qquad\qquad\qquad BuildPUBreaker$(\ell,\ntilde)$\\
\mbox\qquad\qquad\qquad \textbf{stop if} $\Breaker(\ell,\cdot)$ has distinguishing advantage
at least $\ell^{-(\ctilde+3)},$\\
\mbox\qquad\qquad\qquad \phantom{\textbf{stop if}} otherwise roll back the changes and try the loop again\\
\mbox\qquad\qquad \textbf{else}\\
\mbox\qquad\qquad\qquad \textbf{try at most} $\ell^{\ctilde+3}$ \textbf{times}\\
\mbox\qquad\qquad\qquad\qquad BuildOWBreaker$(\ell,\ntilde)$\\
\mbox\qquad\qquad\qquad \textbf{stop if} $\Breaker(\ell,\cdot)$ inverts $g$ with
probability at least $\ell^{-(\ctilde+3)}$,\\
\mbox\qquad\qquad\qquad \phantom{\textbf{stop if}} otherwise roll back the changes and try the loop again\\
\mbox\qquad\qquad\textbf{fi}\\
\mbox\qquad\qquad $d := d+1$\\
\mbox\qquad $\ell := \ell+1$
\end{mdframed}

\medskip

\mbox{}\begin{mdframed}\setlength{\parindent}{0cm}%
\noindent
\textbf{procedure} BuildPUBreaker($\ell,\ntilde$)\\
\mbox\qquad $r := \ntilde/d\log(\ntilde)$\\
\mbox\qquad Pick $\cY \subseteq \{0,1\}^{\ntilde}$, $|\cY| = 2^{\ntilde/100r}$ 
u.a.r.\\
\mbox\qquad \textbf{for each} $k \in \{\ell^{1/c},\ldots,\ell^{c}\}$ 
with $n(k) \geq \ntilde$ \textbf{do}\\
\mbox\qquad\qquad pick a regular function 
$P_k : \{0,1\}^{n(k)} \to \{0,1\}^{\ntilde}$ u.a.r.\\
\mbox\qquad\qquad pick an injective function $S_k : \{0,1\}^{\ntilde} \to 
\{0,1\}^{n(k)}$ u.a.r.\\
\mbox\qquad /\!\!/ \emph{At this point, $\htilde$ and $\cY$ are defined}\\
\mbox\qquad Obtain $W$ (using $\htilde$ as underlying function) as in Lemma~\ref{lem:largeP}.\\
\mbox\qquad $\Breaker(\ell,\cdot) := \BreakPU(W)$\\
\mbox\qquad Pick a regular function $\ftilde: \{0,1\}^{\ntilde} \to \cY$ u.a.r.\\
\mbox\qquad \textbf{for each} $k \in \{\ell^{1/c},\ldots,\ell^{c}\}$ \textbf{do}\\
\mbox\qquad\qquad \textbf{if} $n(k) \geq \ntilde$ \textbf{then}\\
\mbox\qquad\qquad\qquad $f(k,\cdot) := S_k \circ \ftilde \circ P_k$\\
\mbox\qquad\qquad \textbf{else}\\
\mbox\qquad\qquad\qquad $f(k,\cdot) \leftarrow \Pi_{n(k)}$\\
\mbox\qquad \textbf{for} $\ell' \in \{\ell+1,\ldots,\ell^{c^2}\}$ \textbf{do}\\
\mbox\qquad\qquad $\Breaker(\ell',\cdot) := \bot$\\
\mbox\qquad $\ell := \ell^{c^2}+1$
\end{mdframed}

\medskip

\mbox{}\begin{mdframed}\setlength{\parindent}{0cm}%
\noindent
\textbf{procedure} BuildOWBreaker($\ell,\ntilde$)\\
\mbox\qquad $r := \ntilde/d\log(\ntilde)$\\
\mbox\qquad \textbf{for each} $k \in \{\ell^{1/c},\ldots,\ell^{c}\}$ 
with $n(k) \geq \ntilde$ \textbf{do}\\
\mbox\qquad\qquad pick a regular function 
$P_k : \{0,1\}^{n(k)} \to \{0,1\}^{\ntilde}$ u.a.r.\\
\mbox\qquad\qquad pick an injective function $S_k : \{0,1\}^{\ntilde} \to 
\{0,1\}^{n(k)}$ u.a.r.\\
\mbox\qquad /\!\!/ \emph{At this point, $\htilde$ is defined}\\
\mbox\qquad $\Breaker(\ell,\cdot) := \BreakOW_{\htilde}^{(\ftilde)}(\cdot)$\\
\mbox\qquad Pick a permutation $\ftilde: \{0,1\}^{\ntilde} 
\to \{0,1\}^{\ntilde}$ u.a.r.\\
\mbox\qquad \textbf{for each} $k \in \{\ell^{1/c},\ldots,\ell^{c}\}$ \textbf{do}\\
\mbox\qquad\qquad \textbf{if} $n(k) \geq \ntilde$ \textbf{then}\\
\mbox\qquad\qquad\qquad $f(k,\cdot) := S_k \circ \ftilde \circ P_k$\\
\mbox\qquad\qquad \textbf{else}\\
\mbox\qquad\qquad\qquad $f(k,\cdot) \leftarrow \Pi_{n(k)}$\\
\mbox\qquad \textbf{for} $\ell' \in \{\ell+1,\ldots,\ell^{c^2}\}$ \textbf{do}\\
\mbox\qquad\qquad $\Breaker(\ell',\cdot) := \bot$\\
\mbox\qquad $\ell := \ell^{c^2+1}$
\end{mdframed}

\medskip

Clearly, GenerateOracles defines an infinite sequence of 
prefixes $(f^{(i)},\Breaker^{(i)})$, and as before we can extend 
that to a single oracle $(f,\Breaker)$.
Analogously, we can extend events which are defined on prefixes to an
infinite sequence of events.

\bigskip

We next explain how these procedures make their random choices.
For this we assume that for each $k \in \bbN$, 
a permutation $\overline{f}(k,\cdot)$ is picked.
Whenever GenerateOracles executes the assignment $f(k,\cdot) \leftarrow
\Pi_{n(k)}$ (in the part where it defines $\Breaker(\ell,\cdot) := \bot$,
it assigns $f(k,x) := \overline{f}(k,x)$).
We can imagine these permutations to be picked before GenerateOracles
is executed (in that way, we can talk about future assignments).

Also, for each $\ell$ and each $k \in \{\ell^{1/c},\ldots,\ell^{c}\}$,
we pick $\ell^{c+3}$ choices for $S_k$, $P_k$.
Also, for each possible $\ntilde$
we pick $\ell^{c+3}$ choices for $\ftilde$.
Then, in the $i$th iteration, we simply assume that the $i$th such 
choice is used.
As with $\overline{f}$, this is useful in order
to argue about future assignments.

\begin{lemma}\label{lemma:BreakerIsNotBot}
Consider an execution of the algorithm GenerateOracles.
For each $\ell$, let $N_\ell$ be the event that
the else clause of algorithm GenerateOracles is executed
on iteration $\ell$.

Then, with probability $1$, infinitely many events $N_\ell$ occur.
\end{lemma}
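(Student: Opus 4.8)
The plan is to argue by contradiction, using the standing assumption of the proof of Theorem~\ref{thm:mainC} (in its pseudouniform form) that $r_f \notin \Omega(n_f^-/\log(n_f^-))$ for \emph{every} oracle $f$. Suppose toward a contradiction that $p := \Pr[\text{only finitely many } N_\ell \text{ occur}] > 0$. Since $\{\text{only finitely many } N_\ell\}$ is the increasing union over $L \in \bbN$ of the events $G_L := \{\lnot N_\ell \text{ for all } \ell \geq L\}$, there is a fixed $L$ with $\Pr[G_L] \geq p/2 > 0$; write $G := G_L$ and $d^* := L+1$. On $G$ the counter $d$ is incremented only during the finitely many iterations preceding~$L$, so $d \leq d^*$ throughout the run; moreover an if-branch is taken at iteration $\ell$ exactly when $q_{\ell,n} \leq \ell^{-(\ctilde+2)}$ for all $n \in \{\ell^{1/\ctilde},\dots,\ell^{\ctilde}\}$, and along an if-branch each newly defined $f(k,\cdot)$ is an independent uniform permutation. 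Hence on $G$ there is a threshold $L'$ such that for every $\ell \geq L'$: the oracle $f(k,\cdot)$ is already determined for $k < \ell^{1/c}$, is a fresh uniform permutation for $k \geq \ell^{1/c}$, and $q_{\ell,n} \leq \ell^{-(\ctilde+2)}$ for all $n$ in the above range.

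Fix such an $\ell$. For uniform $v$ and the random $f$, let $r_v$ be the number of queries $g^{(f)}(\ell,v)$ makes to some $f(k,\cdot)$ with $k \geq \ell^{1/c}$, and, when $r_v \geq 1$, let $n_v$ be the least $n(k)$ among these queries; by Lemma~\ref{lem:fixC} each such~$k$ satisfies $\ell^{1/c} \leq k \leq \ell^{c}$, so by the choice of~$\ctilde$ we have $n_v \in \{\ell^{1/\ctilde},\dots,\ell^{\ctilde}\}$. Since $d \leq d^*$, the event ``$r_v \leq n/(d^*\log(n))$ and all these queries have length $\geq n$'' is contained in the event defining $q_{\ell,n}$, and a union bound over the at most $\ell^{\ctilde}$ values of $n$ yields
\begin{align}
\Pr_{v,f}\bigl[\exists\, n :\ r_v \leq \tfrac{n}{d^*\log(n)} \text{ and all these queries have length } \geq n \bigr] \ \leq\ \ell^{\ctilde}\cdot\ell^{-(\ctilde+2)} \ =\ \ell^{-2}\,.
\end{align}
If $v$ falls outside this event then $r_v \geq 1$ (the value $r_v = 0$ places $v$ inside it, for any $n$), so $n_v$ is defined, and taking $n = n_v$ — for which ``all these queries have length $\geq n_v$'' holds by definition — forces $r_v > n_v/(d^*\log(n_v))$. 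As $r_v \leq r_f(\ell)$ and $n_v \geq n_f^-(\ell)$, monotonicity of $x \mapsto x/\log(x)$ gives $r_f(\ell) > n_f^-(\ell)/(d^*\log(n_f^-(\ell)))$ (the case $n_f^-(\ell) = O(1)$ being trivial since then $r_f(\ell) \geq 1$ already suffices). Averaging over~$f$, this inequality holds with probability at least $1 - \ell^{-2}$.

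Finally, $\sum_{\ell} \ell^{-2} < \infty$, so by the Borel--Cantelli lemma — applied on $G$, where conditioning leaves the permutations $f(k,\cdot)$ for $k \geq \ell^{1/c}$ unbiased for all large $\ell$ — almost surely on $G$ one has $r_f(\ell) > n_f^-(\ell)/(d^*\log(n_f^-(\ell)))$ for all but finitely many $\ell$, i.e.\ $r_f \in \Omega(n_f^-/\log(n_f^-))$. Since $\Pr[G] > 0$, there is an oracle $f$ with this property, contradicting the standing assumption; hence $\Pr[\text{only finitely many } N_\ell] = 0$, which is the claim. The technical heart is the middle paragraph — turning ``$q_{\ell,n}$ is small for every $n$'' into ``$g^{(f)}(\ell,\cdot)$ makes, on some input and with overwhelming probability, more than $n_f^-(\ell)/(d^*\log(n_f^-(\ell)))$ queries'' — which relies on (i) restricting to the parameters $k \geq \ell^{1/c}$ on which $f$ is still random, (ii) the choice of $\ctilde$ keeping $n_v$ inside the union-bound range, (iii) the inequality $n_v \geq n_f^-(\ell)$ together with monotonicity, and (iv) the boundedness of $d$ on $G$ so that the implied constant is fixed; the rest is routine averaging and Borel--Cantelli, together with checking that conditioning on $G$ does not disturb the relevant independence.
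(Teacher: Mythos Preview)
Your overall strategy coincides with the paper's: assume only finitely many $N_\ell$ occur, use the bound $q_{\ell,n}\le \ell^{-(\ctilde+2)}$ together with a union bound over $n$ and Borel--Cantelli to force $r_f(\ell)>n_f^-(\ell)/(d^*\log n_f^-(\ell))$ for all large $\ell$, and contradict the standing hypothesis. The middle paragraph --- converting ``all $q_{\ell,n}$ small'' into ``$r_f(\ell)$ large'' via the witness $n=n_v$ --- is correct and is the combinatorial heart of both arguments.

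The gap is in your use of conditioning. You assert that ``conditioning [on $G$] leaves the permutations $f(k,\cdot)$ for $k\ge \ell^{1/c}$ unbiased for all large $\ell$'', and you need this to transfer the bound $q_{\ell,n}\le \ell^{-(\ctilde+2)}$ (which is a probability over \emph{fresh} uniform permutations) to the actual permutations $\overline f(k,\cdot)$ used on $G$. But $G=\bigcap_{\ell'\ge L}\{\forall n:\ q_{\ell',n}\le (\ell')^{-(\ctilde+2)}\}$, and for $\ell'>\ell$ the quantity $q_{\ell',n}$ is a function of $\overline f(k,\cdot)$ for $k<(\ell')^{1/c}$, which includes $k\ge \ell^{1/c}$. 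So $G$ genuinely constrains the ``future'' permutations at every scale, and there is no reason for them to remain uniform after conditioning; the parenthetical ``checking that conditioning on $G$ does not disturb the relevant independence'' is precisely the step that fails.

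The paper sidesteps this by never conditioning on $G$. It defines, for each $\ell$ and $n$, the event $B_{\ell,n}$ that the $\overline f$-extension from iteration $\ell$ satisfies $r_f(\ell)\le n/(d_\ell\log n)$ and $n_f^-(\ell)=n$, and bounds $\Pr[B_{\ell,n}\wedge\lnot N_\ell]$ \emph{unconditionally}: given the history up to iteration $\ell$ (which is measurable in $\overline f(k,\cdot)$ for $k<\ell^{1/c}$ and the else-branch randomness), the permutations $\overline f(k,\cdot)$ for $k\ge \ell^{1/c}$ are still uniform, so $\Pr[B_{\ell,n}\mid\text{history}]\le q_{\ell,n}\le \ell^{-(\ctilde+2)}$ whenever $\lnot N_\ell$ holds. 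Borel--Cantelli then applies directly to the events $B_{\ell,n}\wedge\lnot N_\ell$, and only at the very end does one intersect with ``finitely many $N_\ell$'' (on which the final $f$ agrees with the $\overline f$-extension). Reorganizing your argument in this order --- bound $\Pr[\text{bad}_\ell\wedge\lnot N_\ell]$ first, apply Borel--Cantelli unconditionally, then restrict to $G$ --- closes the gap and recovers exactly the paper's proof.
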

\begin{proof}
We let $d_\ell$ be the random variable which takes
the value of $d$ in the $\ell$th iteration of GenerateOracles.

For each $\ell$ and each $n \in \{1,\ldots,\ell^c\}$
we now define an event $B_{\ell,n}$.
For this event, we stop the normal execution of GenerateOracles
at loop $\ell$, and instead extend $f$ using $\overline{f}$ 
exclusively (i.e., fill everything with the random permutations
we picked before).
We then 
let $B_{\ell,n}$ be the event which occurs if $r_f \leq 
n/(d_\ell \log(n))$ and $n_{f}^- = n$ in this extension.
Clearly $\Pr[B_{\ell,n} \land \lnot N_\ell] \leq 
\Pr[B_{\ell,n} | \lnot N_\ell]\leq \ell^{-(\ctilde+2)}$,
because in case we extend with random permutations, $q_{\ell,n}$ 
is defined exactly as the probability that the event $B_{\ell,n}$ occurs.

Thus, $\sum_{\ell, n \leq \ell^{\ctilde}} \Pr[B_{\ell,n} \land \lnot N_{\ell}] < \infty$,
and so by the Borel-Cantelli lemma,
$(B_{\ell,n} \land \lnot N_\ell)$ happens for infinitely many $\ell$ 
with probability $0$.

Now, suppose that in some execution only finitely many events $B_{\ell,n}$
happen.
Then we found an oracle for which $r_f \in \Omega(n_f^{-}/\log(n^{-}_f))$,
because in this case we \emph{do} extended only using $\overline{f}$ 
starting from some fixed length.

Therefore, in all executions infinitely many events $B_{\ell,n}$
happen, and so the event $N_\ell$ must happen for infinitely many $\ell$
with probability~$1$.
\end{proof}

\begin{lemma}\label{lemma:PUDistinguishes}
Suppose that $q_{\ell,\ntilde} > \ell^{-(\ctilde+2)}$,
$p_{\ell,\ntilde} \leq \frac{1}{2}\ell^{-(\ctilde+2)}$, $d > 2c$, and $\ell$ is
larger than some constant in an execution of the loop
in algorithm GenerateOracles.

Then, with probability at least $\frac{1}{8}\ell^{-(\ctilde+2)}$, after a single call 
to BuildPUBreaker$(\ell,\ntilde)$,
the oracle $\BreakPU(\ell,\cdot)$ has
advantage at least $\frac{1}{8}\ell^{-(\ctilde+2)}$ 
in distinguishing $g(\ell,v)$
from a uniform random string.
\end{lemma}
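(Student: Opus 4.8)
The plan is to unwind the definitions of $\BreakPU$, of $W$, and of $\gtilde$, to reduce the claimed distinguishing advantage to the probability that a random image of $g(\ell,\cdot)$ lands in $W$, and then to transfer a lower bound on that probability from the ``permutation world'' --- in which $q_{\ell,\ntilde}$ and $p_{\ell,\ntilde}$ are defined --- to the ``$\cF(\cY)$ world'' in which $\ftilde$ is actually drawn by BuildPUBreaker. First, Lemma~\ref{lem:largeP} guarantees $|W|\leq 2^{m(\ell)-\ntilde/100}$, so a uniform random string lands in $W$ with probability at most $2^{-\ntilde/100}$, which is super-polynomially small since $\ntilde\geq\ell^{1/\ctilde}$; hence the distinguishing advantage of $\Breaker(\ell,\cdot)=\BreakPU(W)$ against $g(\ell,\cdot)$ is at least $\Pr_v[g(\ell,v)\in W]-2^{-\ntilde/100}$. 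Moreover, with the oracle $f$ that BuildPUBreaker installs ($f(k,\cdot)=S_k\circ\ftilde\circ P_k$ when $n(k)\geq\ntilde$, a random permutation otherwise), the definition of $\gtilde$ makes $g(\ell,v)=\gtilde^{\ftilde,f,P,S}(v)$ an identity; so the task reduces to lower bounding $\Pr[\gtilde^{\ftilde,f,P,S}(v)\in W]$ over the choices of $\cY,P,S,\ftilde\leftarrow\cF(\cY)$, the short-length permutations, and $v$.

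The core observation is a membership criterion for $W$: whenever an evaluation satisfies (i) $\gtilde^{\ftilde,f,P,S}(v)=\htilde^{\ftilde,P,S}(v)=:w$, (ii) $|\QueryY(\htilde,\ftilde,v)|=r$, and (iii) $\lnot\SafeToAnswer_{\htilde}(w,\QueryY(\htilde,\ftilde,v))$, then $w\in W$. Indeed $\QueryY(\htilde,\ftilde,v)\subseteq\cY$ because $\ftilde$ maps into $\cY$, so $\QueryY(\htilde,\ftilde,v)$ is an admissible witness $Q$ in $W=\{w:\exists Q\subseteq\cY,\ |Q|=r,\ \lnot\SafeToAnswer_{\htilde}(w,Q)\}$. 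Hence it is enough to lower bound the probability of (i)$\land$(ii)$\land$(iii).

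For that I would reuse the distribution-matching idea from the proof of Lemma~\ref{lem:largeP}: conditioned on the $r$ answers that $\htilde$ receives from $\ftilde$ being pairwise distinct and then averaged over the uniformly random set $\cY$, that answer-sequence is uniform over $r$-tuples of distinct elements of $\{0,1\}^{\ntilde}$ --- exactly its distribution when $\ftilde$ is a uniform permutation (where distinctness is automatic). The events (i) and (iii), the output $\htilde^{\ftilde,P,S}(v)$, and the set $\QueryY(\htilde,\ftilde,v)$ are all deterministic functions of $v$, $P$, $S$, the short-length permutations, and this answer-sequence, and $\cF(\cY)$ produces distinct answers except with probability at most $r^2/|\cY|$; therefore
\begin{align*}
&\Pr_{\cY,P,S,\ftilde\leftarrow\cF(\cY),f,v}\bigl[\gtilde^{\ftilde,f,P,S}(v)\in W\bigr]\\
&\qquad\geq\ \Pr_{\ftilde\leftarrow\cP,P,S,f,v}\bigl[\htilde^{\ftilde,P,S}(v)=\gtilde^{\ftilde,f,P,S}(v)\ \land\ \lnot\SafeToAnswer_{\htilde}(\htilde^{\ftilde,P,S}(v),\QueryY(\htilde,\ftilde,v))\bigr]-\frac{r^2}{|\cY|}\,.
\end{align*}
By the definition of $p_{\ell,\ntilde}$, the right-hand probability equals $\Pr_{\ftilde\leftarrow\cP,P,S,f,v}[\htilde^{\ftilde,P,S}(v)=\gtilde^{\ftilde,f,P,S}(v)]-p_{\ell,\ntilde}$, and by~(\ref{eq:24}) the first term is at least $q_{\ell,\ntilde}-\ntilde^2/2^{\ntilde}$. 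Using $q_{\ell,\ntilde}>\ell^{-(\ctilde+2)}$ and $p_{\ell,\ntilde}\leq\frac12\ell^{-(\ctilde+2)}$, and absorbing the negligible terms $\ntilde^2/2^{\ntilde}$ and $r^2/|\cY|$ once $\ell$ (and hence $d$, $\ntilde$, $|\cY|$) is large enough, we obtain $\Pr[\gtilde^{\ftilde,f,P,S}(v)\in W]\geq\frac13\ell^{-(\ctilde+2)}$.

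It remains to pass from this expectation to a probabilistic guarantee. Combining with the first paragraph, the expectation over BuildPUBreaker's internal randomness of the distinguishing advantage of $\BreakPU(W)$ against $g(\ell,\cdot)$ is at least $\frac13\ell^{-(\ctilde+2)}-2^{-\ntilde/100}\geq\frac14\ell^{-(\ctilde+2)}$; since this advantage never exceeds $1$, it is at least $\frac18\ell^{-(\ctilde+2)}$ with probability at least $\frac14\ell^{-(\ctilde+2)}-\frac18\ell^{-(\ctilde+2)}=\frac18\ell^{-(\ctilde+2)}$, which is the assertion. The step I expect to be the main obstacle is the distribution-matching above: carrying it out rigorously in the presence of adaptive queries, together with the bookkeeping it requires --- ensuring $\htilde$ makes exactly $r$ distinct queries (padding with fresh dummy queries when $g$ makes fewer long queries), handling $P_k$-collisions consistently with the way they already enter the definition of $p_{\ell,\ntilde}$, and making sure the special outputs $(\bot,v)$, which never lie in $W$, are correctly accounted for --- all while tracking precisely which negligible error terms are incurred.
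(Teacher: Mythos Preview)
Your proof is correct and follows essentially the same route as the paper's: both arguments (i) use the size bound on $W$ to dispense with the uniform side, (ii) identify $g(\ell,\cdot)$ with $\gtilde$ under the installed oracle, (iii) observe that $\htilde=\gtilde$ together with $\lnot\SafeToAnswer_{\htilde}$ forces membership in $W$, (iv) transfer from $\ftilde\leftarrow\cF(\cY)$ to $\ftilde\leftarrow\cP$ at cost $r^2/|\cY|$ via the distribution-matching from Lemma~\ref{lem:largeP}, (v) invoke (\ref{eq:24}) and the definitions of $q_{\ell,\ntilde},p_{\ell,\ntilde}$ to get the $\tfrac13\ell^{-(\ctilde+2)}$ bound, and (vi) conclude by Markov. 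The paper carries out (iii)--(v) in the opposite order and is terser, simply pointing to Lemma~\ref{lem:largeP} for the distribution-matching; your write-up is more explicit about that step and about the Markov arithmetic, and your added condition (ii) on $|\QueryY|=r$ is harmless (and in fact unnecessary, since $\lnot\SafeToAnswer$ is monotone under enlarging $Q$ within $\cY$).
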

\begin{proof}
We first notice that $\Pr_{w\leftarrow \{0,1\}^{m(\ell)}}[\BreakPU(\ell, w) = 1]$
is negligible:
Lemma~\ref{lem:largeP} gives a set 
$W$ of size at most $2^{m(\ell) - \frac{\ntilde}{100}}$
and $\ntilde \geq \ell^{-\ctilde}$.

We next show that
\begin{align}\label{eq:22}
\Pr_{v,\ftilde',f,P,S}[\gtilde^{\ftilde',f,P,S}(v) \in W] 
\geq \frac{1}{3}\ell^{(\ctilde+2)}\;,
\end{align}
where $\ftilde'$ is chosen as a random function 
$\ftilde' : \{0,1\}^{\ntilde} \to \cY$, for $\cY$ of size $2^{\ntilde/100r}$
chosen uniformly at random.
From (\ref{eq:22}) we get the lemma by applying Markov's inequality.

To see (\ref{eq:22}), we use that
\begin{align*}
\Pr_{v,\ftilde',f,P,S}[\gtilde^{\ftilde',f,P,S}(v) \in W] 
&\geq
\Pr_{v,\ftilde,f,P,S}[\gtilde^{\ftilde,f,P,S}(v) \in W] - \frac{r^2}{|\cY|}\;,
\end{align*}
where $\ftilde \in \cP_{\ntilde}$ is a uniform random 
permutation on $\ntilde$ bits: this follows as in the proof of
Lemma~\ref{lem:largeP}.

We now see that
\begin{align*}
\Pr_{v,\ftilde,f,P,S}[&\gtilde^{\ftilde,f,P,S}(v) \in W]\\
&\geq
\Pr_{v,\ftilde,f,P,S}[\htilde^{\ftilde,P,S}(v) \in W \land 
 \htilde^{\ftilde,P,S}(v) = \gtilde^{\ftilde,f,P,S}(v)] \\
&\geq 
 \Pr_{v,\ftilde,f,P,S}\Bigl[\bigl(\lnot 
 \SafeToAnswer_{\htilde}(\htilde^{\ftilde,P,S}(v), \QueryY(\htilde,\ftilde,v))\bigr) \land
\htilde^{\ftilde,P,S}(v) = \gtilde^{\ftilde,f,P,S}(v)\Bigr],
 \end{align*}
due to the definition of $W$ in the proof of Lemma~\ref{lem:largeP}.

Using~(\ref{eq:24}), we see that this last probability
is at least $q_{\ell,\ntilde} - p_{\ell,\ntilde} - \frac{\ntilde^2}{2^{\ntilde}}$,
which gives (\ref{eq:22}), and therefore the lemma.
\end{proof}

\begin{lemma}\label{lemma:OWBInverts}
Suppose that $p_{\ell,\ntilde} \geq \frac{1}{2}\ell^{-(\ctilde+2)}$.
Then, with probability at least $\frac{1}{4}\ell^{-(\ctilde+2)}$,
after a call to BuildOWBreaker$(\ell,\ntilde)$, the oracle
$\BreakOW$ will invert $g(\ell,v)$ with probability at 
least $\frac{1}{4}\ell^{-(\ctilde+2)}$.
\end{lemma}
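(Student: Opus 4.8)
The plan is to transport Lemma~\ref{lem:breakOWHelpsInvertG} to the present setting; the one new feature is that we can no longer choose $\ftilde$ by hand, so we argue on average over the random choices of BuildOWBreaker. Write $\rho$ for the tuple of all random choices made inside BuildOWBreaker (the regular projections $P_k$, the injections $S_k$, the random permutation $\ftilde$ on $\ntilde$ bits, and the residual random permutations $f(k,\cdot)$ for $n(k)<\ntilde$). Once $\rho$ is fixed, so is the oracle $f$ (with $f(k,\cdot)=S_k\circ\ftilde\circ P_k$ whenever $n(k)\geq\ntilde$) and so is $\Breaker(\ell,\cdot)=\BreakOW_{\htilde}^{(\ftilde)}$. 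By the definition of $p_{\ell,\ntilde}$ we have $p_{\ell,\ntilde}=\Pr_{\rho,v}[E(v)]$, where $E(v)$ is the event that $\SafeToAnswer_{\htilde}$ holds on $\htilde^{\ftilde,P,S}(v)$ with the set $\QueryY(\htilde,\ftilde,v)$ and that $\htilde^{\ftilde,P,S}(v)=\gtilde^{\ftilde,f,P,S}(v)$.

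The key claim is: for every fixed $\rho$ and every $v$ with $E(v)$, the oracle $\BreakOW_{\htilde}^{(\ftilde)}$ on input $g^{(f)}(\ell,v)$ returns a preimage of $g^{(f)}(\ell,v)$ under $g^{(f)}(\ell,\cdot)$. This goes exactly as in Lemma~\ref{lem:breakOWHelpsInvertG} once one keeps track of the relationship between $g$, $\gtilde$ and $\htilde$. Since $f(k,\cdot)=S_k\circ\ftilde\circ P_k$ on the long input lengths (and the short lengths are answered by the residual $f$, and the already-fixed lengths are hard-coded), $\gtilde^{\ftilde,f,P,S}$ is literally a faithful simulation of $g^{(f)}(\ell,\cdot)$, so $w:=\htilde^{\ftilde,P,S}(v)=\gtilde^{\ftilde,f,P,S}(v)=g^{(f)}(\ell,v)$ by the second conjunct of $E(v)$, and $w$ is not one of the failure symbols that $\htilde$ outputs when its stopping rules fire. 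When $\BreakOW$ enumerates candidate preimages and reaches $v$ it therefore returns some $v'$ reached no later than $v$, using the first conjunct of $E(v)$ (which guarantees $\SafeToAnswer$ passes on $v$). Finally, any returned $v'$ satisfies $\htilde^{\ftilde,P,S}(v')=w$; since $w$ is not a failure symbol, $\htilde$ did not stop on $v'$, so its simulation of $g$ was again faithful, giving $g^{(f)}(\ell,v')=w=g^{(f)}(\ell,v)$. Thus $\BreakOW_{\htilde}^{(\ftilde)}$ inverts $g^{(f)}(\ell,\cdot)$ on this $v$.

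For a fixed $\rho$ set $Y(\rho):=\Pr_v[\BreakOW_{\htilde}^{(\ftilde)}(g^{(f)}(\ell,v))\text{ inverts }g^{(f)}(\ell,\cdot)]$. The key claim gives $Y(\rho)\geq\Pr_v[E(v)\mid\rho]$ for every $\rho$, hence $\E_\rho[Y(\rho)]\geq\E_\rho\Pr_v[E(v)\mid\rho]=p_{\ell,\ntilde}\geq\tfrac12\ell^{-(\ctilde+2)}$ by hypothesis. Now apply a reverse-Markov argument with $\beta:=\tfrac14\ell^{-(\ctilde+2)}$: since $Y(\rho)\in[0,1]$ we have $2\beta\leq\E_\rho[Y(\rho)]\leq\Pr_\rho[Y(\rho)\geq\beta]+\beta$, so $\Pr_\rho[Y(\rho)\geq\beta]\geq\beta=\tfrac14\ell^{-(\ctilde+2)}$. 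Since $\rho$ is precisely the randomness used by a single call to BuildOWBreaker, this is the statement of the lemma.

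The main obstacle is the bookkeeping in the key claim: one must verify that $\BreakOW$ instantiated with $\htilde$ — which has oracle access only to $\ftilde$, with $P$ and $S$ hard-wired — really returns an honest preimage with respect to the externally-defined oracle $f$, in particular handling those candidate inputs $v'$ on which $\htilde$'s extra stopping rules might fire and checking that a failure symbol can never coincide with $w=g^{(f)}(\ell,v)$ (this is controlled by the same $\ntilde^2/2^{\ntilde}$-type estimate already used in~(\ref{eq:24}) and in the normalization of $g$). Everything else is the routine averaging above.
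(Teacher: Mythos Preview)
Your proof is correct and follows essentially the same approach as the paper: define the per-choice success probability (the paper's $p'$, your $\Pr_v[E(v)\mid\rho]$), observe that its expectation is $p_{\ell,\ntilde}$, apply a reverse-Markov bound, and argue that any $\htilde$-preimage of a non-failure output $w$ is automatically a $g^{(f)}$-preimage. You spell out this last ``key claim'' more carefully than the paper does (the paper just asserts ``$\htilde$ has no preimages of $w$ which $g$ does not have''), but the structure is identical.
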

\begin{proof}
Consider, for fixed $(\ftilde,f,P,S)$ the probability
that 
\begin{align}
p' := \Pr_{v}[\SafeToAnswer_{\htilde}(\htilde^{\ftilde,P,S}(v),
\QueryY(\htilde,\ftilde,v)) \land \htilde^{\ftilde,P,S}(v) = 
\gtilde^{\ftilde,f,P,S}(v)]\;.
\end{align}
We know that $p_{\ell,\ntilde} = \E_{\ftilde,P,S,f}[p'] \geq
\frac{1}{2}\ell^{-(\ctilde+2)}$.  
Thus, with probability
$\frac{1}{4}\ell^{-(\ctilde+2)}$, $p'$ is at 
least $\frac{1}{4}\ell^{-(\ctilde+2)}$.
Now, after $\BreakOW$ fixed $\ftilde,f,P,S$, in case
$p' \geq \frac{1}{4\ell^{c+2}}$, it is clear that $\BreakOW$
will invert $g$ with this probability (because for any $w$ which is chosen
as $w = g(v)$, $\htilde$ has no preimages of $w$ which $g$ does not have,
and $\BreakOW$ will at least find the preimage $v$ for $\htilde$).
\end{proof}

%% TODO: read the proof of this lemma
\begin{lemma}\label{lemma:AisOW}
With probability $1$, the probability that $A(k,\cdot)$ inverts
$f(k,\cdot)$ is a negligible function in $k$.
\end{lemma}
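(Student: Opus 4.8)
The goal is to show that, with probability $1$ over the choices made by GenerateOracles, the probability that $A^{(\Breaker,f)}(k,\cdot)$ inverts $f(k,\cdot)$ is negligible in $k$. The strategy is to fix $k$, identify the security parameter $\ell$ whose iteration is responsible for defining the oracles that $A(k,\cdot)$ can actually reach, and reduce to the single-parameter lemmas already proved ($\BreakOW$ case: Lemma~\ref{lem:breakerHelpsNothing} applied to $\htilde$; $\BreakPU$ case: Lemma~\ref{lem:breakPUhelpsNothing}). The key structural fact provided by the code is that each iteration blocks out the range $\{\ell+1,\dots,\ell^{c^2}\}$ of breaker-lengths and correspondingly fixes $f$ on all relevant input-length windows, so that $A(k,\cdot)$ queries $\Breaker$ only on $\ell \le k^c$ and queries $f(k',\cdot)$ only on $k' \le k^c$ (Lemma~\ref{lem:fixC}). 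Hence for the fixed $k$ there is a finite set of ``active'' iterations $\ell$, and inside each one the underlying hard function is either a fresh random permutation $\overline f(k,\cdot)$ (when $\Breaker(\ell,\cdot)=\bot$) or a function of the form $S_k\circ\ftilde\circ P_k$ coming from a single $\ftilde$ on $\ntilde$ bits.

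First I would handle the case where $f(k,\cdot)=\overline f(k,\cdot)$ is a fresh random permutation: here the only breaker-oracles $A(k,\cdot)$ can touch on relevant lengths are $\bot$ and breakers built for other underlying functions $\ftilde$ on \emph{disjoint} input-length windows, so they are independent of $f(k,\cdot)$ and give no help; inverting a random permutation with $\poly(k)$ queries succeeds with probability $O(\poly(k)/2^{n(k)})$, which is negligible. Second, the main case: $f(k,\cdot)=S_k\circ\ftilde\circ P_k$ for an $\ftilde$ chosen on $\ntilde\ge \ell^{1/\ctilde}$ bits during BuildOWBreaker$(\ell,\ntilde)$ or BuildPUBreaker$(\ell,\ntilde)$. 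Inverting $f(k,\cdot)$ is at least as hard as inverting $\ftilde$ up to the $\poly$ overhead of composing with the injective $P_k,S_k$ (an inverter for $f(k,\cdot)$ yields one for $\ftilde$ by pre/post-composing, losing only the negligible probability that $P_k$ collides on the adversary's queries). Now I apply the relevant lemma to $\htilde$: in the BuildOWBreaker branch, $\Breaker(\ell,\cdot)=\BreakOW^{(\ftilde)}_{\htilde}$ with $\htilde$ an $r$-query normalized construction for $r=\ntilde/(d\log\ntilde)$, and since $d\to\infty$ we have $r<\ntilde/(100\log(\cdot))$ eventually, so Lemma~\ref{lem:breakerHelpsNothing} bounds the inversion probability by $2^{-\ntilde/30}$, negligible in $k$ since $\ntilde\ge k^{1/(c\ctilde)}$ roughly. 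In the BuildPUBreaker branch, $\Breaker(\ell,\cdot)=\BreakPU(W)$ depends on $\ftilde$ only through $\cY$, so Lemma~\ref{lem:breakPUhelpsNothing} applies and bounds the inversion probability by $2^{-\ntilde/(1000r)}$, again negligible. The ``at most $\ell^{\ctilde+3}$ tries'' loop only multiplies the failure probability by a polynomial, which is absorbed.

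The main obstacle, and the step I would spend the most care on, is the \emph{bookkeeping of conditioning and independence}: inside GenerateOracles the choice of $\ftilde$ (and of $\cY$, $P_k$, $S_k$) is not a single fresh uniform draw but the first successful draw among $\ell^{\ctilde+3}$ pre-sampled candidates, conditioned on the breaker achieving its advantage/inversion threshold. I must argue that Lemmas~\ref{lem:breakerHelpsNothing} and~\ref{lem:breakPUhelpsNothing}, which are stated for a uniformly random underlying function with \emph{no} conditioning, still give a negligible bound after conditioning on this success event: the clean way is that the lemmas bound the inversion probability \emph{for every fixed $A$ and every fixed randomness of $A$}, and in fact hold for all but a $2^{-\Omega(\ntilde/r)}$-fraction of functions $\ftilde$ (Markov on the lemma's bound, or directly Lemma~\ref{lem:concentration}/Lemma~\ref{lem:breakerHelpsNothingGT}); since the success event has probability at least an inverse polynomial in $\ell$ (Lemmas~\ref{lemma:PUDistinguishes},~\ref{lemma:OWBInverts}), conditioning on it still leaves the bad-$\ftilde$ fraction super-polynomially small. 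The second subtlety is that $A(k,\cdot)$ might, across the finitely many active iterations $\ell_1<\dots<\ell_t\le k^c$, see breakers built from several \emph{different} underlying functions; but these live on disjoint input-length windows (the $\{\ell+1,\dots,\ell^{c^2}\}$ blocking), so at most one of them is tied to $f(k,\cdot)$ and the rest are independent of it and may be simulated by $A$ with no help — I would make this precise by a union bound over the $\le t = \poly(k)$ active iterations. Finally, a routine Borel–Cantelli argument over $k$ (each $k$ contributing an exponentially small bad-event probability) upgrades ``negligible for each fixed $k$ with high probability'' to ``with probability $1$, negligible as a function of $k$.''
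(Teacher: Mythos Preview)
Your overall plan (reduce to $\ftilde$, invoke Lemmas~\ref{lem:breakerHelpsNothing} and~\ref{lem:breakPUhelpsNothing}, finish with Borel--Cantelli) is the same as the paper's, and your treatment of the trial-loop conditioning is reasonable. But there is a real gap in the step where you claim the oracles not tied to $f(k,\cdot)$ ``are independent of it and may be simulated by $A$ with no help.''

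The problem is that $A(k,\cdot)$ may query $f(k',\cdot)$ for $k<k'\le k^c$, and these oracles are defined \emph{after} $f(k,\cdot)$. Even though the raw randomness ($\overline f$, the families $P_{k'},S_{k'},\ftilde$ for every trial) is pre-sampled and indeed independent of $f(k,\cdot)$, the \emph{branch decisions} in GenerateOracles --- which later $\ell$ enters the \texttt{else} clause, which $\ntilde$ is selected, whether BuildPUBreaker or BuildOWBreaker is called, and which of the $\ell^{\ctilde+3}$ trials is kept --- all depend on the already-fixed oracles, hence on $f(k,\cdot)$ itself (for instance, $q_{\ell',n}$ and $p_{\ell',\ntilde}$ are computed with the earlier $f$ hard-wired). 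So the identity of $f(k',\cdot)$ among the pre-sampled candidates leaks information about $f(k,\cdot)$; your ``disjoint windows'' observation only says that distinct else-iterations use distinct underlying $\ftilde$'s, which does not address this leakage.

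The paper's fix is to exploit precisely the pre-sampling: once all raw randomness is drawn, the only remaining freedom for the future is a tuple of branch choices (the relevant $\ell\in\{k^c,\ldots,k^{c^2}\}$, the value of $\ntilde$, the bit BuildPU/BuildOW, and the trial index), and there are only $\poly(k)$ such tuples. One then asks whether $A(k,\cdot)$ inverts $f(k,\cdot)$ with probability $\ge k^{-\alpha}$ for \emph{any} of these futures and takes a union bound; within each fixed future the basic lemmas apply with no conditioning. This argument simultaneously disposes of the dependence you missed and subsumes your conditioning-on-trial-success step (rather than condition on the successful trial, just union over all $\ell^{\ctilde+3}$ trials). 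Your sketch can be repaired along these lines, but as written the independence claim is false and the proof does not go through.
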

\begin{proof}
Let $B_{k,\alpha}$ be the event that $A(k,\cdot)$ inverts $f(k,\cdot)$ with
probability at least $k^{-\alpha}$.
We show that for any $\alpha \in \bbN$, with probability
$1$, finitely many events $B_{k,\alpha}$ happen.
By the Borel-Cantelli lemma it is enough to show that 
$\sum_{k} \Pr[B_{k,\alpha}] < \infty$ for any $\alpha$.
For this, it is clearly enough to show that $\Pr[B_{k,\alpha}]$ is a negligible
function in $k$ for any $\alpha$.

To show this, we distinguish cases. 
First, consider the case that $f(k,\cdot)$ is picked as a random 
permutation in GenerateOracles, i.e., the case where
where $q_{\ell,n} \leq \ell^{-(\ctilde+2)}$ for all $n$ 
and $k^c = \ell$.

All oracles $\Breaker(\ell,\cdot)$ which $A(k,\cdot)$ can
possibly access are fixed before $f(k,\cdot)$ is chosen,
and so we can ignore them.  
The same holds for all oracles $f(k',\cdot)$ for $k' \leq k$.

However, $A$ can also access $f(k',\cdot)$ for $k < k' \leq k^c$.
These are picked later, and the distribution can
depend on $f(k,\cdot)$.

Luckily, there is only a polynomial number of possibilities 
how the functions $f(k',\cdot)$ for $k < k' \leq k^c$ will be
chosen in the end.
To see that, note that we can specify how all of 
these $f(k',\cdot)$ are chosen by 
specifying \begin{itemize}
\item the integer $\ell \in \{k^c,\ldots,k^{c^2}\}$ for which
the algorithm GenerateOracles uses the else clause, in case
there is one (note that there is at most one)
\item the integer $\ntilde$ which GenerateOracles uses in this case
\item whether GenerateOracles uses BuildPUBreaker or BuildOWBreaker,
\item and which of the at most $\ell^{\ctilde+3}$ iterations is used 
in the end.
\end{itemize}
Once we have specified these numbers, we see that we know which of the
choices for $S_k,P_k,\ftilde$, and so on are used to pick $f(k',\cdot)$
for all these $k'$.

We can now simply check whether $A(k,\cdot)$ inverts $f(k,\cdot)$
with probability $k^{-\alpha}$ for \emph{any} of these random choices.
Since this probability is negligible, we apply the union bound
and get the result in this case.

The same argument works in case $f(k,\cdot)$ is picked from $\Pi_{n(k)}$
in either BuildOWBreaker or BuildPUBreaker (because $n(k) < \ntilde$).

Thus, consider the last case where $f(k,\cdot)$ is set 
to $S_k \circ \ftilde \circ P_k$
in either BuildOWBreaker or BuildPUBreaker.
Then, for any intertion we consider the breaker which tries 
to invert $\ftilde$ by first inverting $S_k$, 
then running $A(k,\cdot)$, and then applying $P_k$ on 
the result.
The probability that this algorithm inverts $\ftilde$ in any of the at 
most $\ell^{c}$ iterations of BuildPUBreaker or BuildOWBreaker is 
negligible (by the previous sections), and so we get the result in this
case as well.
\end{proof}

\paragraph{Finishing the proof}
We can now finish the proof of Theorem~\ref{thm:mainC}.

First, we see that with probability $1$ the oracles $(f,\Breaker)$ 
generated are such that $\Breaker$ either infinitely often
breaks the one-wayness or the pseudouniformity of $g$:
first, due to Lemma~\ref{lemma:BreakerIsNotBot} we see that
we will infinitely often attempt to construct Breaker in one of
the two ways, and by either Lemma~\ref{lemma:PUDistinguishes} or
Lemma~\ref{lemma:OWBInverts} we see that the probability that this only
works finitely many times is $0$ (again using Borel-Cantelli).
By Lemma~\ref{lemma:AisOW} we see that~$f$ will be one-way for $A$,
which proves the result.

\section{Acknowledgements}
We thank Colin Zheng for pointing out a mistake in an earlier version
of this paper.

{
\footnotesize
\bibliographystyle{alpha}
\bibliography{../db}
}

\end{document}